\title[Lifting maximally-entangleness assumption in robust self-testing]{Lifting the maximally-entangledness assumption in robust self-testing for synchronous games}
\author[M. Vernooij]{Matthijs Vernooij}
\author[Y. Zhao]{Yuming Zhao}
\address{Delft Institute of Applied Mathematics, TU Delft, The Netherlands}
\email{m.n.a.vernooij@tudelft.nl}
\address{QMATH, Department of Mathematical Sciences, University of Copenhagen, Denmark}
\email{yuming@math.ku.dk}
\address{}
\begin{document}

\begin{abstract}
    Robust self-testing in non-local games allows a classical referee to certify that two untrustworthy players are able to perform a specific quantum strategy up to high precision. Proving robust self-testing results becomes significantly easier when one restricts the allowed strategies to symmetric projective maximally entangled (PME) strategies, which allow natural descriptions in terms of tracial von Neumann algebras. This has been exploited in the celebrated MIP*=RE paper and related articles to prove robust self-testing results for synchronous games when restricting to PME strategies. However, the PME assumptions are not physical, so these results need to be upgraded to make them physically relevant. In this work, we do just that: we prove that any perfect synchronous game which is a robust self-test when restricted to PME strategies, is in fact a robust self-test for all strategies. We then apply our result to the Quantum Low Degree Test to find an efficient $n$-qubit test.
\end{abstract}

\maketitle

\section{Introduction}
In a non-local game $G$, two cooperating but distant players respond to questions drawn from a known distribution to satisfy a known winning condition determined by a referee. A \textbf{quantum strategy} allows the players to share an entangled state and perform local measurements, often leading to a higher winning probability than classically achievable. Remarkably, certain non-local games exhibit an even stronger guarantee: they admit a unique optimal quantum strategy, making it possible to certify the underlying quantum state and measurements solely from the observed statistics. This is the essence of \textbf{self-testing}, a concept whose roots trace back to foundational work by Summers and Werner~\cite{SW87,SW88}, Popescu and Rohrlich~\cite{PR92}, and Tsirelson~\cite{Tsi93}, and was later introduced by Mayers and Yao~\cite{MY03} from a cryptographic perspective.  

Self-testing is arguably the strongest form of device-independent quantum certification, where one aims to classically verify the behaviour of a quantum device without making any assumptions about its internal workings. This idea has found impactful applications in device-independent 
cryptography~\cite{BSCA18a,BSCA18b}, verifiable quantum delegation~\cite{RUV13,CGJV19,BMZ24}, and quantum complexity theory, including the recent breakthrough $\text{MIP}^*=\text{RE}$~\cite{JNVWY22}.

Due to inevitable noise and imperfections in real-world implementations, the strategy executed in an experiment may only win the given non-local game near-optimally. As a result, all the aforementioned applications require self-tests to be \textbf{robust}: any near-optimal strategy must be close---under a suitable notion of distance---to the unique optimal strategy. We say that a game $G$ \textbf{$\kappa$-robustly self-tests} an ideal optimal strategy $\wtd{\mcS}$ for a class of employed strategies $\mcC$ if every $\epsilon$-optimal strategy in $\mcC$ is $\kappa(\epsilon)$-close (up to local isometries) to $\wtd{\mcS}$. Here, $\kappa:\R_{\geq 0}\arr\R_{\geq 0}$ satisfies $\kappa(\epsilon)\arr 0$ as $\epsilon\arr 0$ and is called the \textbf{robustness} of this self-test. 

To mathematically prove that a game is a robust self-test, one typically imposes additional assumptions on the class $\mcC$ of employed strategies. These assumptions simplify the analysis because the resulting strategies often admit nice algebraic forms. For instance,
the existing self-testing results typically assume that the employed strategies $\mcS=(\ket{\psi},\{A^x_a\},\{B^y_b\})$ are \textbf{projective}, meaning that the players' measurement operators $\{A^x_a\},\{B^y_b\}$ are projection-valued measures (PVMs). In this setting, strategies for a game with $n$ questions and $m$ answers correspond to representations of the group algebra $\C[\Z_{m}^{*n}\times\Z_{m}^{*n}]$. Another common assumption is that the employed strategies $\mcS=(\ket{\psi},\{A^x_a\},\{B^y_b\})$ are \textbf{full-rank}, in the sense that the shared entangled state $\ket{\psi}$ has full Schmidt rank.
Such strategies are centrally-supported~\cite{PSZZ23}, so for every measurement operator $A^x_a$ of Alice, there exists some operator $\hat{A}^x_a$ acting on Bob's registers such that $A^x_a\otimes \Id\ket{\psi}=\Id\otimes \hat{A}^x_a\ket{\psi}$.
However, from a device-independent perspective, such assumptions limit the scope of security and soundness guarantees provided by self-testing. For example, in quantum key distribution, an adversarial device might deviate from the assumed behaviour, potentially compromising the protocol; in quantum interactive proof systems, these assumptions may fail to capture the behaviour of general malicious provers, thereby weakening soundness. This gives rise to a trade-off: stronger assumptions make robust self-testing results easier to prove but less generally applicable. A central question is, therefore, how to lift such assumptions while preserving the robustness of self-tests. 

Recent progress has been made on this front. In particular, \cite{PSZZ23} establishes that for binary output games and synchronous games, self-testing for projective strategies implies self-testing for general POVM strategies. This was later extended in \cite{BCKLMNS23}, where the authors show that the projectivity assumption and the full-rankness assumption can both be lifted in robust self-testing, provided that the non-local game has an optimal strategy that is simultaneously projective and full-rank.

In this work, we focus on \textbf{synchronous games}~\cite{PSSWT16} and robust self-testing for their \textbf{perfect} quantum strategies. Such games exhibit a rich algebraic structure. To each synchronous game $G$ one can associate a $*$-algebra $\mcA(G)$---known as the synchronous algebra---whose tracial states correspond to perfect strategies within different mathematical models for entanglement~\cite{KPS18,HMPS19}. This algebraic framework plays a central role in the recent developments connecting quantum interactive proofs to operator algebras~\cite{JNVWY22,MNY22,NZ23}. 

One of the essential building blocks of synchronous games is the class of \textbf{PME} strategies, where the plays share a \textbf{maximally entangled} state and perform \textbf{projective}, \textbf{symmetric} measurements. It is well-known that any perfect strategy for a synchronous game is a convex mixture of PME strategies~\cite{PSSWT16}, and recent results \cite{Vid22,Pad25} demonstrate that any near-perfect strategy is close to a convex mixture of PME strategies. This has been further extended to infinite-dimensional models~\cite{MS23,Lin24}. PME strategies are also particularly tractable in the context of self-testing: if we assume employed strategies are PME, then the robustness is closely related to the stability of the synchronous algebra in the normalised Hilbert-Schmidt norm. Therefore, it is often more straightforward to prove that a synchronous game is a robust self-test for PME strategies, using techniques from approximate representation theory (see e.g., \cite{CVY23}). However, perfectly maximally entangled states are not physically realisable in practice. Meanwhile, noise on maximally entangled states can significantly reduce the power of quantum interactive proof systems~\cite{QY21,QY23,NoisyEPR}. This motivates our central question about lifting the PME assumption in synchronous self-testing.

\begin{question}\label{question}
    If a synchronous game robustly self-tests a perfect strategy $\wtd{\mcS}$ for PME strategies, does it follow that it robustly self-tests $\wtd{\mcS}$ for general POVM strategies?
\end{question}

\subsection{Main results}
We answer the above question affirmatively and quantitatively.
\begin{theorem}\label{thm}
    Let $G$ be a synchronous game. If $G$ $\kappa$-robustly self-tests a perfect quantum strategy $\wtd{\mcS}$ for PME strategies, then $G$ $\kappa'$-robustly self-tests $\wtd{\mcS}$ for general POVM strategies, where $\kappa'$ is polynomially related to $\kappa$.
\end{theorem}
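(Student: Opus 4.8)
The plan is to deduce the statement for general POVM strategies from the PME hypothesis by chaining together a reduction to projective strategies, a rounding step to convex combinations of PME strategies, and an averaging argument. Throughout I use that the ideal strategy $\wtd{\mcS}$ is itself a PME strategy, hence in particular projective.

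First I would record that it suffices to show that $G$ robustly self-tests $\wtd{\mcS}$ for \emph{projective} strategies, with polynomial robustness: since $\wtd{\mcS}$ is projective, the result of \cite{PSZZ23} that for synchronous games robust self-testing for projective strategies upgrades to robust self-testing for general POVM strategies (with at most a polynomial loss) then finishes the job. (Alternatively one passes from a POVM strategy to a projective one by a Naimark dilation; checking that the self-testing conclusion pulls back through the dilation is exactly what \cite{PSZZ23} provides.) So let $\mcS$ be an arbitrary $\epsilon$-optimal projective strategy for $G$. The key input is then the rounding theorem for near-perfect strategies of synchronous games \cite{Vid22,Pad25}: there is a local isometry $W_A\otimes W_B$ taking $\mcS$, up to error $\delta_1=\poly(\epsilon)$ (in the states and in the measurement actions on the state), to a convex combination $\mcS'=\sum_i p_i\,\mcS_i$ of PME strategies $\mcS_i$, realised as a single bipartite strategy on $\bigoplus_i(\mcH^A_i\otimes\mcH^B_i)$ with block-diagonal measurement operators and state $\sum_i\sqrt{p_i}\,\ket{\psi_i}$.

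Next I would push near-optimality down to the components. Local isometries preserve the game value, so $\mcS'$ is $\epsilon'$-optimal with $\epsilon'=\poly(\epsilon)$, and since the value is affine in the strategy, $\sum_i p_i\bigl(1-\mathrm{val}(\mcS_i)\bigr)=1-\mathrm{val}(\mcS')\le\epsilon'$; by Markov's inequality the ``bad'' set $B=\{i:1-\mathrm{val}(\mcS_i)>\sqrt{\epsilon'}\}$ has weight $\sum_{i\in B}p_i\le\sqrt{\epsilon'}$. For each $i\notin B$ the component $\mcS_i$ is a $\sqrt{\epsilon'}$-optimal PME strategy, so the hypothesis supplies a local isometry $V^{(i)}_A\otimes V^{(i)}_B$ and a junk state $\ket{\mathrm{aux}_i}$ under which $\ket{\psi_i}$ and all measurement actions $(A^x_a)_i\otimes\Id$, etc., are $\kappa(\sqrt{\epsilon'})$-close to the corresponding data of $\wtd{\mcS}$ tensored with $\ket{\mathrm{aux}_i}$. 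I would then assemble the block-diagonal isometries $V_A=\bigoplus_i V^{(i)}_A$ and $V_B=\bigoplus_i V^{(i)}_B$ (choosing $V^{(i)}$ arbitrarily for $i\in B$) together with the junk state $\ket{\mathrm{aux}}=\sum_{i\notin B}\sqrt{p_i}\,\ket{\mathrm{aux}_i}$: distinct blocks map into orthogonal subspaces, so the errors over $i\notin B$ combine in quadrature to at most $\kappa(\sqrt{\epsilon'})$, while the blocks in $B$ contribute a vector of norm at most $(\sum_{i\in B}p_i)^{1/2}\le(\epsilon')^{1/4}$. Hence $\mcS'$ is $O\bigl(\kappa(\sqrt{\epsilon'})+(\epsilon')^{1/4}\bigr)$-close, up to $V_A\otimes V_B$, to $\wtd{\mcS}$; composing with $W_A\otimes W_B$ and using the triangle inequality for approximate equivalence of strategies yields that $\mcS$ is $\kappa'(\epsilon)$-close to $\wtd{\mcS}$ with $\kappa'$ polynomially related to $\kappa$.

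The hard part will be making the rounding step rigorous in the form I need: one must invoke \cite{Vid22,Pad25} at the level of \emph{strategies} — closeness of the states and of the measurement actions on the state, not merely of the output correlations — with an explicitly polynomial error, and one must verify that ``$\kappa$-closeness up to local isometries'' is stable under both composition of isometries and formation of block-direct-sums, and that the small-weight bad components can be swept into the junk register without damaging the conclusion. A secondary but unavoidable chore is to propagate the measurement-intertwining relations (not just state closeness) through every step and to bookkeep the polynomial dependencies so that the final robustness $\kappa'$ really is polynomial in $\kappa$ and $\epsilon$.
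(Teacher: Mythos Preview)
Your high-level plan—decompose into PME pieces, self-test each piece, reassemble—is indeed the backbone of the paper's argument, but the rounding step you invoke does not exist in the form you need, and this is not a bookkeeping issue but the central obstacle. Vidick's theorem (Theorem~2.8 in the paper, \cite{Vid22}) does \emph{not} produce a local isometry carrying $\mcS$ to a block-diagonal strategy with bipartite state $\sum_i\sqrt{p_i}\,\ket{\psi_i}$. What it produces is a decomposition of the \emph{reduced density} $\rho_A=\int \rho_\lambda\,d\mu(\lambda)$ into maximally mixed states on \emph{nested} (not orthogonal) subspaces $H_\lambda$, together with approximate commutation of the measurements with the spectral projections $P_\lambda$; the correlation decomposes as a convex mixture, but the bipartite state does not. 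Even after the paper's nontrivial work of grouping the $\lambda$'s dyadically to obtain orthogonal projections $Q_i$ (Claim~4.2), what one has is $\rho_A\approx\sum_i p_i\rho_i$ on Alice's side, and an independent decomposition on Bob's side; there is no mechanism forcing $\ket{\psi}$ to be close to $\sum_i\sqrt{p_i}\,\ket{\psi_i}$ with $\ket{\psi_i}$ living in $Q_i H_A\otimes Q_i' H_B$. A state with block-diagonal marginals need not itself be block-diagonal.

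Consequently, when the paper assembles the block-wise isometries into global $V_A,V_B$ (built independently from Alice's and Bob's sides), it can only conclude that the \emph{measurements} pull back well and that the transported state achieves near-maximal value under the ideal game polynomial $T_{G,\wtd{\mcS}}\otimes\Id$ (Theorem~4.3). To convert ``high value under $T_{G,\wtd{\mcS}}$'' into ``close to $\ket{\wtd{\psi}}\otimes\ket{\mathrm{aux}}$'' one needs the spectral gap of $T_{G,\wtd{\mcS}}$ (Theorem~4.4), and the paper then proves separately—and this is not automatic—that $\kappa$-PME-robust self-testing forces such a gap, polynomially bounded below in terms of $\kappa$ (Theorem~5.3). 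Your proposal attempts to bypass exactly this spectral-gap mechanism by assuming a bipartite block decomposition that the cited rounding results do not supply; the ``hard part'' you flag at the end is in fact the whole content of Sections~4--5.
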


The precise relationship\footnote{We remark that the results in \cite{Zha24} and \cite{Kar25} show that for any perfect synchronous game $G$, both robust self-testing for PME strategies and robust self-testing for general strategies are equivalent to the uniqueness of amenable tracial state on the synchronous algebra $\mcA(G)$. However, their results do not establish a quantitative relationship between the robustness of the two cases.} between $\kappa$ and $\kappa'$ is established in \Cref{cor:PME-robust-self-testing-implies-robust-self-testing}. Crucially, this relationship is independent of the size (i.e., the number of questions and answers) of the game; it depends only on the \textbf{synchronicity} of $G$, which quantifies how frequently the referee checks the players' consistency by sending them the same question.

This notion of synchronicity is also closely tied to another fundamental aspect of robust self-testing: the probability distribution over question pairs. A non-local game is defined with respect to such a distribution $\nu$, determining how often each type of question is asked. Separately, when evaluating the closeness between an employed strategy and the ideal optimal strategy, a second distribution $\hat{\nu}$ can be used to weight the ``distance" (see \Cref{def:local_dilation}). Historically, most robust self-testing results assume that the game distribution $\nu$ and the distance distribution $\hat{\nu}$ coincide. However, in many natural scenarios---especially when certain questions only exist to enforce the desired structure of optimal strategies on the other questions---it is useful to consider the more general case where $\nu\neq \hat{\nu}$.

To capture such situations, we define a game to be a \textbf{$(\kappa,\hat{\nu})$-robust self-test} if every $\epsilon$-optimal strategy (with respect to the game distribution $\nu$) is $\kappa(\epsilon)$-close to the ideal strategy with respect to $\hat{\nu}$ (see \Cref{def:selftest}). Most of the results in this paper are stated and proved with respect to this generalised notion of robustness. This framework is particularly useful in the analysis 
of the Quantum Low Degree Test \cite{CVY23}, which is one of the key ingredients in the MIP$^*$=RE proof. Within this generalised notion of robust self-testing, we show that this test can be used to verify that the players approximately have access to maximally entangled qubits and Pauli operators acting on those qubits.

\begin{theorem}[Precise statement in \Cref{cor:QLDT}]\label{thm2}
    Performing the Quantum Low Degree Test and a synchronicity test with equal probability $\kappa'$-robustly self-tests the maximally entangled state on $n$ qubits together with a generating set of Pauli operators on those qubits. Here $\kappa'$ depends polynomially on $\kappa$, the robustness of the Quantum Low Degree Test restricted to maximally entangled states. 
\end{theorem}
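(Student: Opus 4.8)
The plan is to reduce Theorem~\ref{thm2} to Theorem~\ref{thm} (equivalently, to \Cref{cor:PME-robust-self-testing-implies-robust-self-testing}) by first recasting the Quantum Low Degree Test together with the synchronicity test as a single synchronous game $G$, and then identifying the maximally entangled state on $n$ qubits together with the generating Pauli operators as a perfect PME strategy $\wtd{\mcS}$ for $G$. The first step is to verify that the combined test is indeed synchronous: the synchronicity test by definition sends both players the same question and accepts iff the answers agree, so it contributes synchronous rounds; one then checks that adding these rounds (with the stated probability, say $\tfrac12$ each) to the QLDT keeps the overall game synchronous in the technical sense used earlier, i.e. that whenever the referee sends the same question to both players, winning forces identical answers. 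This fixes a concrete value of the synchronicity parameter of $G$ — bounded below by the probability mass on the synchronicity-test rounds — which is exactly the quantity that controls the polynomial blow-up $\kappa \mapsto \kappa'$ in \Cref{cor:PME-robust-self-testing-implies-robust-self-testing}.

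Next I would invoke the known analysis of the Quantum Low Degree Test (from \cite{CVY23}) to supply the hypothesis of Theorem~\ref{thm}: that $G$ $\kappa$-robustly self-tests $\wtd{\mcS}$ \emph{when restricted to PME strategies}. Concretely, the ideal strategy uses the maximally entangled state on $(\C^2)^{\otimes n}$ and projective measurements built from the Pauli group, so it is manifestly projective, symmetric, and maximally entangled; and the cited robustness bound for the low-degree test, which is naturally phrased in the normalised Hilbert–Schmidt / tracial picture, is precisely a statement about PME strategies. Here one has to be a little careful about the distance distribution: the conclusion we want is closeness of the employed measurements on the Pauli-generator questions, which may be only a sub-collection of all questions in $G$, so this is where the generalised $(\kappa,\hat\nu)$-notion of robust self-test enters — we take $\hat\nu$ supported on (or weighted towards) the generator questions and record that the QLDT analysis gives a $(\kappa,\hat\nu)$-robust self-test for PME strategies.

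With these two ingredients in place, applying \Cref{cor:PME-robust-self-testing-implies-robust-self-testing} to $G$ immediately yields that $G$ is a $(\kappa',\hat\nu)$-robust self-test for general POVM strategies, with $\kappa'$ polynomial in $\kappa$ and with constants depending only on the synchronicity of $G$ (hence only on the fixed weighting between the two sub-tests, not on $n$). Translating back, any strategy that passes the combined test with probability $1-\epsilon$ is, up to local isometry, $\kappa'(\epsilon)$-close to sharing the $n$-qubit maximally entangled state and applying the Pauli generators — which is the assertion of Theorem~\ref{thm2}. The main obstacle I anticipate is bookkeeping rather than conceptual: one must check that the QLDT as stated in the literature really does produce a \emph{synchronous} perfect game whose optimal PME strategy is the intended $n$-qubit Pauli strategy (possibly after a symmetrisation of the players' roles), and that the error parameters compose correctly — in particular that restricting attention to $\epsilon$-optimal strategies for the combined distribution $\nu$ still forces near-optimality on the QLDT sub-game with only a constant-factor loss, so that the cited robustness bound applies with the claimed polynomial dependence.
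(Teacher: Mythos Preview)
Your reduction to \Cref{cor:PME-robust-self-testing-implies-robust-self-testing} has a genuine gap, and the paper is explicit about it at the start of \Cref{sec:QLDT}. That corollary applies only to $\kappa$-PME-robust self-tests, i.e.\ the case $\hat\nu=\nu$: its proof combines \Cref{thm:auto-spec-gap}, which lower-bounds the spectral gap of $T_{G,\mcS}$ for the \emph{game} distribution $\nu$, with \Cref{cor:lifting-PME-assumption-if-spec-gap-beta-sync}, which needs a spectral gap for $T_{G_{\hat\nu},\mcS}$. These two polynomials agree only when $\hat\nu=\nu$. For the Quantum Low Degree Test, however, the PME-robustness from \cite{CVY23} is with respect to $\nu'_{\qldt}$, supported on the Pauli-generator questions, and the paper states that it is \emph{not known} that the QLDT is a $(\kappa,\nu_{\qldt})$-PME-robust self-test. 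So you cannot feed the combined test into \Cref{cor:PME-robust-self-testing-implies-robust-self-testing} as you propose; you correctly flag that the generalised $(\kappa,\hat\nu)$-notion is needed, but then invoke a result that does not cover it.

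The paper's proof therefore does \emph{not} go through \Cref{thm} / \Cref{cor:PME-robust-self-testing-implies-robust-self-testing}. Instead it computes the relevant spectral gap by hand: \Cref{thm:QLDT-spec-gap} shows that $T_{G',\mcS}$ for $G'=(\mcX_{\qldt},\nu'_{\qldt},\mcA_{\qldt},D_{\qldt})$ has spectral gap exactly $d/2$, where $d$ is the relative distance of the underlying code, via a Bell-basis calculation that identifies the eigenvalues with Hamming weights of codewords. With that explicit gap one applies \Cref{cor:lifting-PME-assumption-if-spec-gap-beta-sync} directly (using $\nu'_{\qldt}\le 4\nu_{\qldt}$, so $c=4$) together with \Cref{lem:beta-synchronised} for the $\tfrac12$-synchronisation. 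Your outline is missing precisely this spectral-gap computation for the $\hat\nu$-game polynomial; without it the lifting step does not go through when $\hat\nu\neq\nu$.
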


\subsection{Proof approach and technical contributions}
Our main result---that robust self-testing for PME strategies implies robust self-testing for general strategies---is proved in three steps, corresponding to \Cref{sec3}, \Cref{sec4}, and \Cref{sec5}, respectively.

We begin by formulating PME strategies---more generally, any strategy employing a maximally entangled state---using tracial von Neumann algebras. We adapt the notion in \cite{CVY23} of distance between families of unitaries to define a ``\textbf{von Neumann distance}" between PME strategies (see \Cref{def:vNA-local-dilation}). Our first main technical contribution is to show that this von-Neumann distance is equivalent, up to a constant-factor trade-off, to the standard distance defined in the Hilbert-space formalism (\Cref{lem:PME-robust-self-testing-vNA}). Therefore, we can reformulate robust self-testing for PME strategies in the von Neumann algebraic language. This algebraic formulation allows us to exploit the symmetry of PME strategies by working with the algebra generated by a single player's measurements and reduced state, and it enables us to connect our results to the main result from \cite{CVY23}.

As an intermediate step toward proving \Cref{thm}, our second main technical contribution is to show that $\kappa'$, the robustness for general strategies, is controlled by the robustness $\kappa$ for PME strategies and the \textbf{spectral gap} of the ideal perfect strategy $\wtd{\mcS}$ (\Cref{thm:lifting-PME-assumtion-if-spec-gap}). Here, $\wtd{\mcS}$ having spectral gap $\Delta$ means that any strategy $\mcS$ that uses the same measurements as $\wtd{\mcS}$ but employs a quantum state orthogonal to the maximally entangled state in $\wtd{\mcS}$ can achieve a winning probability of up to $1-\Delta$. Intuitively, such a strategy $\mcS$ is ``far from" $\wtd{\mcS}$, so a small spectral gap indicates poor robustness in self-testing. 

Our last step is to show that, perhaps surprisingly, if a synchronous game $G$ $\kappa$-robustly self-tests an ideal perfect strategy $\wtd{\mcS}$ for PME strategies, then the spectral gap of $\wtd{\mcS}$ admits a lower bound that is polynomially related to $\kappa$ (\Cref{thm:auto-spec-gap}). As a result, in \Cref{thm}, the robustness $\kappa'$ is controlled by---indeed, polynomially related to---the original robustness $\kappa$. In the case of Quantum Low Degree Test based on a linear code of relative distance $d$, we explicitly compute its spectral gap to be $d/2$ (\Cref{thm:QLDT-spec-gap}), which further yields our second main result \Cref{thm2}.

\subsection{Acknowledgement} MV is supported by the NWO Vidi grant VI.Vidi.192.018 `Non-commutative harmonic analysis and rigidity of operator algebras'. YZ is supported by VILLUM FONDEN via QMATH Centre of Excellence grant number 10059 and Villum Young Investigator grant number 37532. The authors would like to thank Martijn Caspers for reading an earlier version of this work and identifying some inaccuracies. The authors would also like to thank Michael Brannan, William Slofstra and the Institute for Quantum Computing in Waterloo for allowing MV to do a research visit there, where this project was conceived.

\section{Preliminaries}

\subsection{Non-local games and strategies} \label{sec:non-local-games-and-strategies}

A two-player (commonly called Alice and Bob) non-local game $G$ is specified by a tuple $G=(\mcX,\mcY,\nu,\mcA,\mcB,D)$, where $\mcX,\mcY,\mcA$, and $\mcB$ are finite sets, $\nu$ is a probability distribution on $\mcX\times\mcY$, and $D:\mcX\times\mcY\times\mcA\times\mcB\arr\{0,1\}$ is a predicate. Alice and Bob know all the data in $G$ and they can strategise together before the game begins, but they are not allowed to communicate once the game starts. During the game, Alice and Bob receive questions $x\in\mcX$ and $y\in\mcY$ respectively from a referee with probability $\nu(x,y)$, and they return answers $a\in\mcA$ and $b\in\mcB$ respectively. Based on the predicate $D$, the referee determines whether they win ($D(a,b|x,y)=1$) or lose ($D(a,b|x,y)=0$). In some cases, the sets of feasible answers are determined by questions. When this happens, we think of $\mcA$ and $\mcB$ as collections $\mcA=\{\mcA(x):x\in\mcX\}$ and $\mcB=\{\mcB(y):y\in \mcY  \}$), where for each question $x\in \mcX$ and $y\in \mcY$, Alice and Bob can only return some $a\in \mcA(x)$ and $b\in \mcB(y)$ respectively.

In quantum mechanics, the strategy Alice and Bob employ for a non-local game $G=(\mcX,\mcY,\nu,\mcA,\mcB,D)$ is described by a tuple
\begin{equation*}
    \mcS=\big(\ket{\psi}\in H_A\otimes H_B,A=\{A^x_a\},B=\{B^y_b\}\big)
\end{equation*}
where 
\begin{enumerate}[(i)]
    \item Alice and Bob share a quantum state (unit vector) $\ket{\psi}\in\ H_A\otimes H_B$, and 
    \item for each $x\in\mcX$ (resp. $y\in Y$) Alice measures her register using a POVM $\{A^x_a:a\in \mcA\}$ on $H_A$ (resp. Bob measures his register using a POVM $\{B^y_b:b\in \mcB\}$ on $H_B$), and we shorten $\{\{A_a^x\}_{a\in\mcA}|x\in\mcX\}$ to $\{A_a^x\}$ (resp. $\{\{B_b^y\}_{b\in\mcB}|y\in\mcY\}$ to $\{B_b^y\}$).
\end{enumerate}

In this paper, given a Hilbert space $H$, we denote by $B(H)$ the algebra of bounded operators on $H$. We write $\Id_H$ for the identity operator on $H$, and simply $\Id$ if the underlying space is clear from the context. We also use $\Id$ for the identity element of a von Neumann algebra. A collection of operators $\{P_i\}_{i=1}^k\subset B(H)$ is a positive operator-valued measure (POVM) if every $P_i\geq 0$ and $\sum_{i=1}^k P_i=\Id$.  

 By Born's rule, if the players employ a strategy $\mcS=(\ket{\psi}, A,B)$, then the probability that they response $a\in \mcA$ and $b\in \mcB$ upon receiving $x\in \mcX$ and $y\in \mcY$ is given by
\begin{equation}
    C_{x,y,a,b}=\bra{\psi}A^x_a\otimes B^y_b\ket{\psi}.\label{eq:correlation}
\end{equation}
The collection $C=\{C_{x,y,a,b}\}\in \R^{\mcX\times\mcY\times\mcA\times\mcB}$ is called the correlation induced by $\mcS$. 
The winning probability of $C$ for $G=(\mcX,\mcY,\nu,\mcA,\mcB,D)$ is given by 
\begin{equation*}  \omega(G;C)=\expect{(x,y)\sim\nu}\sum_{a,b}\nu(x,y)D(a,b|x,y)C_{x,y,a,b},
\end{equation*}
where $\expect{(x,y)\sim\nu}\cdot=\sum_{x,y}\nu(x,y)\ \cdot$ is the expectation with respect to $\nu$.
The winning probability $\omega(G;\mcS)$ of a strategy $\mcS$ for a game $G$ refers to the winning probability of the correlation induced by $\mcS$. When the game $G$ is clear from the context, we just write $\omega(\mcS)$ for $\omega(G;\mcS)$. In this paper, we assume all strategies employ \textbf{finite-dimensional} systems. We denote by $C_q(\mcX,\mcY,\mcA,\mcB)$ the set of correlations induced by strategies through \Cref{eq:correlation}. The quantum value $\omega_q(G)$ of a non-local game $G$ is the supremum of $w(G;C)$ over all $C\in C_q$. A strategy $\mcS$ or a correlation $C$ is said to be optimal for $G$ if its winning probability achieves the quantum value $w_q(G)$. When $w_q(G)=1$, we replace ``optimal" with ``perfect". We call a game perfect if it admits a perfect (finite-dimensional) strategy.

Given a strategy $\mcS=(\ket{\psi},A,B)$, we refer to the element
\begin{equation*}
    T_{G,\mcS}:=\expect{(x,y)\sim\nu}\sum_{a,b}D(a,b|x,y)A^x_a\otimes B^y_b
\end{equation*}
in $B(H_A\otimes H_B)$ as the \textbf{game polynomial} of $\mcS$ for $G$. The winning probability of any strategy $\mcS'=(\ket{\psi'},A,B)$ for $G$ is equal to $\bra{\psi'}T_{G,\mcS}\ket{\psi'}$. We will be particularly interested in the \textbf{spectral gap} of $T_{G,\mcS}$, which is the difference between the largest and second largest eigenvalues of $T_{G,\mcS}$, \emph{accounting for multiplicities}. This means that any self-adjoint operator has spectral gap equal to zero if the largest eigenvalue has multiplicity larger than one.
 
If a POVM $\{P_i:1\leq i\leq k\}$ on $H$ consists of mutually orthogonal projections in the sense that $P_i^2=P_i$ and $P_iP_j=0$ for $i\neq j$, then it is called a PVM. Any $k$-outcome PVM $\{P_i:1\leq i\leq k\}$ corresponds a unitary $U$ of order $k$ via the Fourier transform 
\begin{equation*}
    U=\sum_{j=1}^k \exp\left(\frac{2\pi\sqrt{-1}}{k}j  \right)P_j
\end{equation*}
and vice versa via spectral decomposition. A strategy $\mcS=(\ket{\psi},A,B)$ is said to be \textbf{projective} if $\{A^x_a,a\in \mcA\}$ and $\{B^y_b,b\in\mcB\}$ are PVMs for all $x\in \mcX,y\in\mcY$. In this case, we often specify the measurement operators $\{A^x_a:a\in\mcA\}$ and $\{B^y_b:b\in\mcB\}$ using their corresponding unitaries $U(A^x)$ and $U(B^y)$.

Given two finite-dimensional Hilbert spaces $H_A$ and $H_B$, every vector $\ket{\psi}\in H_A\otimes H_B$ has a Schmidt decomposition 
\begin{equation*}
    \ket{\psi}=\sum_{i=1}^k \lambda_i\ket{\alpha_i}\otimes\ket{\beta_i}
\end{equation*}
where the Schmidt coefficients $\lambda_i$'s are positive real numbers, and $\{\ket{\alpha_i}:1\leq i\leq k\}$ and $\{\ket{\beta_i}:1\leq i\leq k\}$ are orthonormal subsets of $H_A$ and $H_B$ respectively. A unit vector $\ket{\psi}\in H_A\otimes H_B$ is \textbf{maximally entangled} if $\dim(H_A)=\dim(H_B)=:d$ and $\ket{\psi}$ has a Schmidt decomposition
\begin{equation*}
    \ket{\psi}=\sum_{i=1}^d\frac{1}{\sqrt{d}}\ket{\alpha_i}\otimes\ket{\beta_i}.
\end{equation*}
A strategy $\mcS=(\ket{\psi},A,B)$ is said to be \textbf{maximally entangled} if $\ket{\psi}\in H_A\otimes H_B$ is a maximally entangled state.

If $\mcX=\mcY$ and $\mcA=\mcB$ in a non-local game $G=(\mcX,\mcY,\nu,\mcA,\mcB,D)$, then we write $G=(\mcX,\nu,\mcA,D)$. It is often convenient to work with symmetric games and symmetric strategies. A non-local game $G=(\mcX,\nu,\mcA,D)$ is \textbf{symmetric} if 
\begin{enumerate}[(i)]
    \item $\nu(x,y)=\nu(y,x)$, and
    \item $D(a,b|x,y)=D(b,a|y,x)$ for all $a,b\in\mcA$ and $x,y\in\mcX$.
\end{enumerate}
A strategy $\mcS=(\ket{\psi},A,B)$ for a non-local game $G=(\mcX,\nu,\mcA,D)$ is \textbf{symmetric} if 
\begin{enumerate}[(i)]
    \item $H_A=H_B:=H$,
    \item $\ket{\psi}=\sum_i \sqrt{\lambda_i}\ket{u_i}\otimes\ket{u_i}$ where $\lambda_i\geq 0$ for and $\{\ket{u_i}\}_i$ is an orthonormal basis for $H$, and
    \item $A^x_a=(B^x_a)^T$ for all $a\in\mcA$ and $x\in\mcX$, where the transpose is taken with respect to the basis $\{\ket{u_i}\}$.
\end{enumerate}
Note that in this case, the reduced density matrix $\rho$ of $\ket{\psi}$ on Alice and Bob's sides are both $\sum_i\lambda_i\ket{u_i}\bra{u_i}$, and we have 
\begin{equation*}
    \bra{\psi}S\otimes T\kp=\Tr(S\rho^{1/2}T^{T}\rho^{1/2})
\end{equation*}
for $S,T\in B(H)$, which is called \emph{Ando's formula}. Since Bob's measurements are completely determined by Alice's measurements, we write any symmetric strategy as $\mcS=(\ket{\psi},A)$. Given any strategy $\mcS=(\ket{\psi},A,B)$ with Schmidt decomposition given by $\ket{\psi}=\sum_{i=1}^d\lambda_i\ket{\alpha_i}\otimes\ket{\beta_i}$, we define its \textbf{associated symmetric strategies} $\mcS_A=(\ket{\psi_A},A)$ and $\mcS=(\ket{\psi_B},B)$ where $\ket{\psi_A}=\sum_{i=1}^d\lambda_i\ket{\alpha_i}\otimes\ket
{\alpha_i}$ and $\ket{\psi_B}=\sum_{i=1}^d\lambda_i\ket{\beta_i}\otimes\ket{\beta_i}$.

Let $\mcS=(\ket{\psi},A)$ be a symmetric strategy. If $\ket{\psi}$ is maximally entangled, we call $\mcS$ an \textbf{ME} strategy. If an ME strategy is projective, we call it a \textbf{PME} strategy. Note that, given a symmetric strategy $\mcS=(\kp,A)$, Ando's formula tells us that choosing a different symmetric state $\ket{\psi'}$ with the same reduced density matrix gives rise to an equivalent strategy $\mcS'=(\ket{\psi'},A)$, in the sense that they are related through a unitary on Bob's side. This may seem strange at first, but this happens because the transpose depends on the state, so the operators for Bob will also change if one changes the state. In particular, this means that for any two maximally entangled state $\kp$ and $\ket{\psi'}$, the strategies $(\kp,A)$ and $(\ket{\psi'},A)$ are equivalent.

In this paper, we focus on \textbf{synchronous games} and \textbf{synchronous correlations}. A symmetric game $G=(\mcX,\nu,\mcA,D)$ is synchronous if $\nu(x,x)>0$ and $D(a,a'|x,x)=0$ for all $x$ and $a\neq a'$. Given $\beta\in (0,1)$, we say a game $G=(\mcX,\nu,\mcA,D)$ is $\beta$-synchronous if it is synchronous and
\begin{equation*}
    \nu(x,x)\geq \beta \sum_{y
    \in \mcX} \nu(x,y)
\end{equation*}
for all $x\in\mcX$. A correlation $C\in C_q(\mcX,\mcX,\mcA,\mcA)$ is said to be synchronous if $C_{x,x,a,a'}=0$ for all $x$ and $a\neq a'$. We use $C_q^s(\mcX,\mcA)$ to denote the set of synchronous correlations in $C_q(\mcX,\mcX,\mcA,\mcA)$. As shown in \cite{PSSWT16}, a correlation $p\in C_q(\mcX,\mcX,\mcA,\mcA)$ is synchronous if and only if there is a finite-dimensional unital tracial $C^*$-algebra $(\msA,\tau)$ and PVMs $\{E^x_{a}:a\in\mcA\},x\in\mcX$ in $\msA$ such that
\begin{equation*}
    C_{x,y,a,b}=\tau(E^x_aE^y_b)
\end{equation*}
for all $x,y\in\mcX$ and $a,b\in\mcA$.

Any ME strategy $\mcS=(\ket{\psi},A)$ induces a tracial state $\tau$ on the algebra generated by $A^x_a$'s via $\tau(\alpha)=\bra{\psi}\alpha\otimes\Id\ket{\psi}$. Furthermore, $\bra{\psi}A^x_a\otimes B^y_b\ket{\psi}=\tau(A^x_aA^y_b)$. This gives an alternative way to describe ME and PME strategies in terms of tracial von Neumann algebras. 

A tracial von Neumann algebra $(M,\tau)$ is a von Neumann algebra $M$ together with a normal faithful tracial state $\tau$ on $M$. The corresponding trace-norm (or so called $2$-norm) $\norm{\cdot}_\tau$ on $M$ is given by $\norm{\alpha}_\tau:=\sqrt{\tau(\alpha^*\alpha)}$. For example, $(M_n(\C),\tr)$ is the von Neumann algebra of $n\times n$ matrices with the normalised trace $\tr(\alpha)=\frac{1}{n}\Tr(\alpha)$. We also work with Schatten $p$-norm in $M_n(\C)$ for $p\in [1,\infty]$. In particular, $\norm{X}_1=\Tr(\abs{X})$, $\norm{X}_2=\sqrt{\Tr(X^*X)}$, and $\norm{X}_{\infty}$ is the largest singular value of $X$.

Any ME (resp. PME) strategy $\mcS$ for a game $G=(\mcX,\nu,\mcA,D)$ can be specified by a tuple $(M,\tau,A)$ where 
$M= B(H)$ for some finite-dimensional space $H$, $\tau$ is the normalized trace on $M$, and $A=\{\{A^x_a\}_{a\in\mcA}\subset M | x\in\mcX\}$ are POVMs (resp. PVMs) in $M$. The correlation induced by $\mcS=(M,\tau,A)$ is given by 
\begin{equation*}
    C_{x,y,a,b}=\tau(A^x_aA^y_b).
\end{equation*}
In fact, any such triple $\mcS=(M,\tau,A)$ gives rise to a strategy in this way, even if $M$ is a finite dimensional von Neumann algebra that is not of the form $B(H)$ for some finite-dimensional Hilbert space $H$. One can recover the usual formulation of a strategy $\mcS=(\kp,A',B')$ on $H_A\otimes H_B$ using the GNS construction. Let $M\subset B(H)$. Then $H_A=H_B=H$, $A'=A$, $B'=A^T$ and $\kp$ is the cyclic vector corresponding to the GNS construction by identifying $B(H)$ (equipped with the Hilbert-Schmidt inner product) with $H\otimes H$. Since $M\subset B(H)$, the cyclic vector becomes a vector in $H\otimes H$. Note that while $H\otimes \overline{H}$ and $B(H)$ are naturally isomorphic, $H$ and $\overline{H}$ are not. One needs to specify a basis to identify $B(H)$ and $H\otimes H$ and to take the transpose in $B'=A^T$. Consequently, a triple $\mcS=(M,\tau, A)$ defines $\mcS=(\kp,A',B')$ up to unitary equivalence on Bob's side. We call the state $\kp$ obtained in this way a GNS state for $(M,\tau)$, and it satisfies
\begin{equation*}
    \tau(X)=\bra{\psi}(X\otimes \Id)\kp=\bra{\psi}(\Id\otimes X^T)\kp
\end{equation*}
for all $X\in M$.

Note that not all strategies $\mcS=(\kp,A,B)$ can be written in von Neumann algebra terms. A strategy is of the form $(M,\tau,A)$ if and only if it is a classical convex combination of ME strategies, intuitively meaning that each round an ME strategy is selected based on classical shared randomness.

\begin{definition}[Local dilation]\label{def:local_dilation}
    For two strategies $\mcS=(\ket{\psi},A,B)$ and $\wtd{\mcS}=(\ket{\wtd{\psi}},\wtd{A},\wtd{B})$, we say that $\wtd{\mcS}$ is a local $(\epsilon,\nu)$-dilation of $\mcS$ for some $\epsilon\geq 0$ and distribution $\nu$ on $\mcX\times \mcY$ if there are isometries $V_A:H_A\arr \wtd{H}_A\otimes K_A$ and $V_B:H_B\arr\wtd{H}_B\otimes K_B$ and a unit vector $\ket{aux}\in K_A\otimes K_B$ such that
    \begin{align*}
    \norm{(V_A\otimes V_B)\ket
    \psi-\ket{\wtd{\psi}}\otimes\ket{aux}}&\leq \epsilon,\\
        \left(\expect{x\sim \nu_A}\sum_{a}  \norm{(V_A\otimes V_B)(A^x_a\otimes \Id)\ket{\psi} - \big((\wtd{A}^x_a\otimes\Id)\ket
        {\wtd{\psi}}\big)\otimes\ket{aux}}^2    \right)^{1/2}&\leq \epsilon,\\
        \left(\expect{y\sim \nu_B}\sum_b  \norm{(V_A\otimes V_B)(\Id\otimes B^y_b)\ket{\psi} - \big((\Id\otimes\wtd{B}^y_b)\ket
        {\wtd{\psi}}\big)\otimes\ket{aux}}^2    \right)^{1/2}&\leq \epsilon,
    \end{align*}
    where $\nu_A$ and $\nu_B$ are marginal distributions of $\nu$ on $\mcX$ and $\mcY$ respectively.
\end{definition}
\begin{remark}
    The above definition is different from \cite[Definition 2.4]{Vid22}, where the last two inequalities are replaced by the single inequality
    \begin{equation*}
             \left(\expect{(x,y)\sim \nu}\sum_{a,b}  \norm{(V_A\otimes V_B)(A^x_a\otimes B^y_b)\ket{\psi} - \big((\wtd{A}^x_a\otimes\wtd{B}^y_b)\ket
        {\wtd{\psi}}\big)\otimes\ket{aux}}^2    \right)^{1/2}\leq \epsilon.
    \end{equation*}
    The reason we choose this different definition is that we need to relate local dilations in this framework to local dilations in the von Neumann algebra framework, as will be defined in Definition \ref{def:vNA-local-dilation}. This is only possible using our definition of a local dilation. We will now show that a local $(\epsilon,\nu)$-dilation in the sense of Definition \ref{def:local_dilation} is a local $(3\epsilon,\nu)$-dilation in the sense of \cite[Definition 2.4]{Vid22}, but that the reverse implication does not hold for any constant trade-off for POVM strategies. We do not know if such a separation exists for projective strategies as well. 

    Let $\tilde{\mcS}=(\kpt,\tA,\tB)$ be a local $(\epsilon,\nu)$-dilation of $\mcS=(\kp,A,B)$ in the sense of Definition \ref{def:local_dilation}. Using the fact that 
    \begin{equation*}
        \sum_b(V_BB_b^yV_B^*)^2\leq 1, \sum_a(\tA_a^x)^2\leq 1\text{ and } \sum_{a,b}(\tA_a^x\otimes V_BB_b^y)^*(\tA_a^x\otimes V_BB_b^y)\leq 1,
    \end{equation*}
   it follows from the inequalities of Definition \ref{def:local_dilation} that
    \begin{align*}
        \left(\expect{(x,y)\sim \nu}\sum_{a,b}  \norm{(V_A\otimes V_B)(A^x_a\otimes B^y_b)\ket{\psi} - (\tA_a^x\otimes \Id_{K_A}\otimes V_BB_b^yV_B^*)\ket
        {\wtd{\psi}}\otimes\ket{aux}}^2    \right)^{1/2}&\leq \epsilon,\\
        \left(\expect{(x,y)\sim \nu}\sum_{a,b}  \norm{(\tA_a^xV_A\otimes V_BB_b^y)\ket{\psi} - \big((\tA_a^x\otimes \tB_b^y)\ket
        {\wtd{\psi}}\big)\otimes\ket{aux}}^2    \right)^{1/2}&\leq \epsilon
    \end{align*}
    and
    \begin{align*}
        \left(\expect{(x,y)\sim \nu}\sum_{a,b}  \norm{(\tA_a^x\otimes V_BB_b^y)\big((V_A\otimes \Id_H)\ket{\psi} - (\Id_{\tH\otimes K_A}\otimes V_B^*)\ket
        {\wtd{\psi}}\otimes\ket{aux}\big)}^2\right)^{1/2}&\leq \epsilon.
    \end{align*}
    Using the triangle inequality now yields 
    \begin{equation*}
             \left(\expect{(x,y)\sim \nu}\sum_{a,b}  \norm{(V_A\otimes V_B)(A^x_a\otimes B^y_b)\ket{\psi} - \big((\wtd{A}^x_a\otimes\wtd{B}^y_b)\ket
        {\wtd{\psi}}\big)\otimes\ket{aux}}^2    \right)^{1/2}\leq 3\epsilon.
    \end{equation*}
    
    To disprove an implication with constant trade-off in the converse direction, let $\tilde{\mcS}=(\kpt,\tA,\tB)$ be a local $(\epsilon,\nu)$-dilation of $\mcS=(\kp,A,B)$ in the sense of \cite[Definition 2.4]{Vid22}. Consider now the families of strategies $\{\tilde{\mcS}_n\}_{n\geq 1}$ and $\{\mcS_n\}_{n\geq 1}$, given by
    \begin{align*}
        \mcS_n&=(\kp,A^{(n)},B^{(n)}),\ (A^{(n)})_{a,i}^{x}=\frac{1}{n}A_a^x,\ (B^{(n)})_{b,i}^{y}=\frac{1}{n}B_b^y \text{ and }\\
        \tilde{\mcS}_n&=(\kp,\tA^{(n)},\tB^{(n)}),\ (\tA^{(n)})_{a,i}^{x}=\frac{1}{n}\tA_a^x,\ (\tB^{(n)})_{b,i}^{y}=\frac{1}{n}\tB_b^y
    \end{align*}
    for answer sets $\mcA_n=\mcA\times\{1,\dots,n\}$ and $\mcB_n=\mcB\times\{1,\dots,n\}$. We see that $\tilde{\mcS}_n$ is a local $(\epsilon/n,\nu)$-dilation of $\mcS_n$ in the sense of \cite[Definition 2.4]{Vid22}. On the other hand, if $\tilde{\mcS}$ is a local $(\delta,\nu)$-dilation of $\mcS$ in the sense of Definition \ref{def:local_dilation} for some optimal $\delta>0$, then $\tilde{\mcS}_n$ is only a local $(\delta/\sqrt{n},\nu)$-dilation of $\mcS_n$ in the sense of Definition \ref{def:local_dilation}, proving that it is impossible to universally bound $\delta$ from above by $C\epsilon$ for some constant $C$. 

\end{remark}

\begin{definition}[Self-testing]\label{def:selftest}
Given a non-local game $G=(\mcX,\mcY,\nu,\mcA,\mcB,D)$, a class of strategies $\mcC$, an $\wtd{\mcS}\in\mcC$ that is optimal for $G$, a probability distribution $\hat{\nu}$ on $\mcX\times\mcY$ and a function $\kappa:\R_{\geq 0}\arr\R_{\geq 0}$ such that $\kappa(\epsilon)\arr 0$ as $\epsilon\arr 0$, we say that $G$ $(\kappa,\hat{\nu})$-robustly self-tests $\wtd{\mcS}$ for the class $\mcC$ if for any $\epsilon\geq 0$ and $\mcS\in\mcC$ with $\omega(G;\mcS)\geq w_q(G)-\epsilon$, $\wtd{\mcS}$ is a local $(\kappa(\epsilon),\hat{\nu})$-dilation of $\mcS$. We say that $G$ $\kappa$-robustly self-tests $\tilde{\mcS}$ for class $\mcC$ if $\hat{\nu}$ is equal to $\nu$, the distribution used in the non-local game.
\end{definition}

The optimal strategy $\wtd{\mcS}$ in the above definition is usually referred to as the ideal optimal strategy for $G$. In this work, we are primarily interested in the class of all PME strategies $\mcC_{PME}$ and the class of all strategies $\mcC_{all}$. We simply say that $G$ $(\kappa,\hat{\nu})$-robustly self-tests $\wtd{\mcS}$ (or $G$ is a $(\kappa,\hat{\nu})$-robust self-test) if $G$ $(\kappa,\hat{\nu})$-robustly self-tests $\wtd{\mcS}$ for $\mcC_{all}$. We say that $G$ $(\kappa,\hat{\nu})$-PME-robustly self-tests $\wtd{\mcS}$ (or $G$ is a $(\kappa,\hat{\nu})$-PME-robust self-test) if $G$ $(\kappa,\hat{\nu})$-robustly self-tests $\wtd{\mcS}$ for $\mcC_{PME}$. 

In Section \ref{sec:non-local-games-and-strategies} the perhaps somewhat uncommon notion of a $\beta$-synchronous non-local game was introduced. If one views robust self-testing from the viewpoint of certification protocols, the $\beta$-synchronous condition is not significantly stronger than being synchronous. This is captured in the following definition and lemma.

\begin{definition}
    Let $G=(\mcX,\nu,\mcA,D)$ be a synchronous game, let $\nu_A$ be the marginal distribution of $\nu$ on $\mcX$ and let $\beta\in (0,1)$. Let $\nu'$ be the probability distribution on $\mcX\times\mcX$ defined by $\nu'(x,y)=\beta\nu_A(x)\delta_{x,y}+(1-\beta)\nu(x,y)$. Then we call $(\mcX,\nu',\mcA,D)$ the $\beta$-synchronised version of $G$.
\end{definition}

\begin{lemma}\label{lem:beta-synchronised}
    Let $G=(\mcX,\nu,\mcA,D)$ be a synchronous game and let $\beta\in (0,1)$. Let $\hat{\nu}$ be a probability distribution on $\mcX\times\mcX$ and let $G'$ be the $\beta$-synchronised version of $G$. If $G$ $(\kappa,\hat{\nu})$-robustly self-tests an optimal synchronous strategy $\mcS$ for class $\mcC$, then $G'$ $(\kappa',\hat{\nu})$-robustly self-tests $\mcS$ for class $\mcC$ with $\kappa'(\epsilon)=\kappa\big(\frac{\epsilon}{1-\beta}\big)$.
\end{lemma}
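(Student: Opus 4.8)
The plan is to show that any strategy $\mcS$ that is $\epsilon$-optimal for $G'$ is automatically $\frac{\epsilon}{1-\beta}$-optimal for $G$, and then invoke the hypothesis on $G$ to conclude. The key observation is that the winning predicate $D$ is unchanged between $G$ and $G'$; only the question distribution changes, from $\nu$ to $\nu' = \beta\,\nu_A(x)\delta_{x,y} + (1-\beta)\,\nu(x,y)$. First I would write out the winning probability of an arbitrary strategy $\mcS$ under $G'$ as a convex combination: $\omega(G';\mcS) = \beta\, W_{\mathrm{sync}} + (1-\beta)\,\omega(G;\mcS)$, where $W_{\mathrm{sync}} = \expect{x\sim\nu_A}\sum_{a,a'} D(a,a'|x,x)\,C_{x,x,a,a'}$ is the value of the "pure synchronicity check". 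Because $G$ is synchronous, $D(a,a'|x,x) = 0$ whenever $a\neq a'$, so $W_{\mathrm{sync}} = \expect{x\sim\nu_A}\sum_a D(a,a|x,x)\,C_{x,x,a,a}$. Since $D(a,a|x,x)$ is either $0$ or $1$ and $\sum_a C_{x,x,a,a}\le 1$ for every strategy, we get $W_{\mathrm{sync}}\le 1$; moreover, the same linear decomposition applied to the ideal strategy $\mcS$ (which is perfect, as $G$ is a perfect synchronous game and $\omega_q(G)=1$) shows $W_{\mathrm{sync}}=1$ on $\wtd{\mcS}$, hence $\omega_q(G')=1$ as well.

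Next I would feed in the near-optimality hypothesis. Suppose $\omega(G';\mcS)\ge \omega_q(G') - \epsilon = 1 - \epsilon$. Using $W_{\mathrm{sync}}\le 1$ in the convex decomposition gives
\[
1 - \epsilon \le \omega(G';\mcS) = \beta\,W_{\mathrm{sync}} + (1-\beta)\,\omega(G;\mcS) \le \beta + (1-\beta)\,\omega(G;\mcS),
\]
and rearranging yields $\omega(G;\mcS)\ge 1 - \frac{\epsilon}{1-\beta} = \omega_q(G) - \frac{\epsilon}{1-\beta}$. Thus $\mcS$ is $\frac{\epsilon}{1-\beta}$-optimal for $G$. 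Since $G$ $(\kappa,\hat\nu)$-robustly self-tests $\mcS$ for the class $\mcC$ and $\mcS\in\mcC$, the ideal strategy $\mcS$ is a local $(\kappa(\tfrac{\epsilon}{1-\beta}),\hat\nu)$-dilation of $\mcS$. By definition this is exactly the statement that $G'$ $(\kappa',\hat\nu)$-robustly self-tests $\mcS$ for $\mcC$ with $\kappa'(\epsilon) = \kappa(\tfrac{\epsilon}{1-\beta})$; note that $\kappa'(\epsilon)\to 0$ as $\epsilon\to 0$ since $\kappa$ does. One should also check the easy sanity points: $\nu'$ is indeed a probability distribution (convex combination of two such), $G'$ is synchronous (it inherits $D$ and has $\nu'(x,x)\ge \beta\nu_A(x)>0$ wherever $\nu_A(x)>0$), and the distance distribution $\hat\nu$ plays no role in the argument — it is carried along untouched.

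The argument is essentially elementary once the convex decomposition of $\omega(G';\cdot)$ is in place, so there is no serious obstacle; the only mild subtlety is making sure the synchronicity term $W_{\mathrm{sync}}$ is bounded by $1$ using the synchronicity of $G$ (so that the check is "passable with probability $1$" by any strategy only in the sense of the upper bound — the bound $\sum_a C_{x,x,a,a}\le 1$ is what we actually need, together with $D\in\{0,1\}$). A secondary point worth stating explicitly is that $\mcS$ remains optimal for $G'$, i.e. $\omega_q(G')=1$, which is needed so that "$\epsilon$-optimal for $G'$" unpacks to "$\omega(G';\mcS)\ge 1-\epsilon$"; this follows because $\mcS$ is perfect for $G$ and passes the synchronicity check perfectly.
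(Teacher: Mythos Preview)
Your proof is correct and follows exactly the paper's approach: both reduce to the observation that $\epsilon$-optimality for $G'$ implies $\tfrac{\epsilon}{1-\beta}$-optimality for $G$, via the convex decomposition $\omega(G';\cdot)=\beta\,W_{\mathrm{sync}}(\cdot)+(1-\beta)\,\omega(G;\cdot)$. The paper states this implication in a single line; you have filled in the supporting details (the bound $W_{\mathrm{sync}}\le 1$ and the fact that the ideal synchronous strategy is also optimal for $G'$).
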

\begin{proof}
    This is immediate after realising that for any strategy $\hat{\mcS}$ we have the implication 
    \begin{equation*}
        |\omega(G';\mcS)-\omega(G';\hat{\mcS})|\leq \epsilon \implies |\omega(G;\mcS)-\omega(G;\hat{\mcS})|\leq \frac{\epsilon}{1-\beta}.\qedhere
    \end{equation*}
\end{proof}

Historically, robust self-testing has been studied in the case where the probability distribution of the game equals the probability distribution used in the local dilation. However, one can conceive situations where one wants to verify that some parts of a strategy match the ideal one, and that the other questions are merely present to ensure this behaviour. We will encounter an example of this behaviour when we come to the Quantum Low Degree Test in Section \ref{sec:intro-QLDT}.

\Cref{def:local_dilation} is about measuring the ``distance" between two strategies. As we will discuss in \Cref{sec3}, for ME strategies, their distance can be described in the framework of von Neumann algebras. So we provide more background on von Neumann algebras for subsequent use. We use $\ell^2(\N)$ to denote the Hilbert space of sequences in $\C^\N$ that are convergent in the Euclidean norm and we use $\{\ket{e_i}:i\in \N\}$ to denote the standard basis. The trace $\Tr$ on the von Neumann algebra $B(\ell^2(\N))$ of bounded operators on $\ell^2(\N)$ is given by $\Tr(x)=\sum_{i\in \N}\bra{e_i}x\ket{e_i}$. In general, the tensor product of two tracial von Neumann algebras is viewed as a tracial von Neumann algebra by taking the spacial tensor product and equipping it with the tensor product of the traces. For any tracial von Neumann algebra $(M,\tau^M)$, we denote by $M^{\infty}$ the von Neumann algebra $M\overline{\otimes} B(\ell^2(\N))$, i.e. the $\sigma$-weak closure of $M\otimes B(\ell^2(\N))$ equipped with the trace $\tau^{\infty}=\tau^M\otimes\Tr$. Let $I_M$ be the projection onto the $1^{\text{st}}$ coordinate in $\C^\N$. We usually identify $M$ with $M\otimes I_M$ in $M^{\infty}$ and write $I_M$ for $1\otimes I_M\in M^{\infty}$. For any projection $P\in M^{\infty}$, an operator $V\in M^{\infty}P$ is an isometry if $V^*V=P$.

\subsection{Almost synchronous correlations}

To demonstrate that a game is a robust self-test, we need to study strategies that are nearly optimal. In particular, we consider correlations and strategies that are \textbf{almost synchronous}. Given a distribution $\nu$ on $\mcX$, the asynchronicity of a correlation $C\in C_q(\mcX,\mcX,\mcA,\mcA)$ with respect to $\nu$ is 
\begin{equation*}
    \dsync(C;\nu):=\expect{x\sim \nu}\sum_{a\neq b}C_{x,x,a,b}=1-\expect{x\sim \nu}\sum_{a}C_{x,x,a,a}.
\end{equation*}
The asynchronicity $\dsync(\mcS;\nu)$ of a strategy $\mcS=(\ket{\psi},A,B)$ refers to the asynchronicity of the correlation induced by $\mcS$. If $\mcS=(M,\tau,A)$ is an ME strategy, then
\begin{equation}
    \dsync(\mcS;\nu)=\expect{x\sim \nu}\sum_{a\neq b}\tau(A^x_aA^x_b)=1-\expect{x\sim \nu} \sum_a\tau\left((A^x_a)^2\right). \label{eq:dsync-vna}
\end{equation}
From \Cref{eq:dsync-vna}, it is easy to see that the asynchronicity of any PME strategy is $0$.

\begin{lemma}\label{lem:value-dsync}
    Let $G=(\mcX,\nu,\mcA,D)$ be a $\beta$-synchronous game and let $\nu_A$ be the marginal distribution of $\nu$ on $\mcX$. Then
    \begin{equation*}
    \omega(G;C)  \leq 1-\beta \dsync(C;\nu_A)
    \end{equation*}
    for any correlation $C$.
\end{lemma}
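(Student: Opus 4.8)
The plan is to bound the \emph{losing} probability $1-\omega(G;C)$ from below by the contribution of the ``consistency check'' rounds, in which the referee sends the same question to both players. First I would write
\[
1-\omega(G;C)=\sum_{x,y}\nu(x,y)\sum_{a,b}\bigl(1-D(a,b|x,y)\bigr)C_{x,y,a,b}.
\]
Every term in this sum is nonnegative, since a correlation has nonnegative entries and $D$ takes values in $\{0,1\}$. Hence I may discard all terms with $x\neq y$, and among the surviving $x=y$ terms discard those with $a=b$, to obtain
\[
1-\omega(G;C)\ \geq\ \sum_{x}\nu(x,x)\sum_{a\neq b}\bigl(1-D(a,b|x,x)\bigr)C_{x,x,a,b}.
\]

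Next I would invoke the two defining properties of a $\beta$-synchronous game. Synchronicity gives $D(a,b|x,x)=0$ for $a\neq b$, so $1-D(a,b|x,x)=1$ there and the right-hand side becomes $\sum_x\nu(x,x)\sum_{a\neq b}C_{x,x,a,b}$. The $\beta$-synchronicity bound then gives $\nu(x,x)\geq\beta\sum_{y}\nu(x,y)=\beta\,\nu_A(x)$, and since $\sum_{a\neq b}C_{x,x,a,b}\geq 0$ this yields
\[
1-\omega(G;C)\ \geq\ \beta\sum_x\nu_A(x)\sum_{a\neq b}C_{x,x,a,b}\ =\ \beta\,\dsync(C;\nu_A)
\]
by the definition of $\dsync(\,\cdot\,;\nu_A)$. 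Rearranging gives the claim.

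This argument is entirely routine and I do not anticipate any genuine obstacle. The only points requiring a little care are: (i) checking that every discarded term is nonnegative, so that the inequalities point in the right direction; and (ii) keeping straight that the asynchronicity on the right-hand side is measured against the marginal $\nu_A$, not against $\nu$ itself — which is precisely what the inequality $\nu(x,x)\geq\beta\,\nu_A(x)$ is designed to produce.
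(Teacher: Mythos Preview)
Your proof is correct and follows essentially the same approach as the paper: bound $1-\omega(G;C)$ from below by restricting to the diagonal questions $x=y$ with $a\neq b$, use synchronicity to replace $1-D(a,b|x,x)$ by $1$, and then apply $\nu(x,x)\geq\beta\,\nu_A(x)$ to recognise $\beta\,\dsync(C;\nu_A)$.
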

\begin{proof}
    Since $G$ is $\beta$-synchronous, we have
    \begin{equation*}
        \nu(x,x)\geq \beta\sum_a\nu(x,y)=\beta\nu_A(x)
    \end{equation*}
    for all $x\in\mcX$. So 
    \begin{equation*}
        \delta:=\dsync(C;\nu_A)=1-\sum_a\nu_A(x)\sum_aC_{x,x,a,a}\geq 1-\frac{1}{\beta}\sum_a\nu(x,x)\sum_aC_{x,x,a,a}.
    \end{equation*}
It follows that
\begin{align*}
    1-\omega(G;C)&=\sum_{x,y}\nu(x,y)\sum_{D(a,b|x,y)=0}C_{x,y,a,b}\\
    &\geq \sum_{x}\nu(x,x)\sum_{D(a,b|x,x)=0}C_{x,x,a,b}\\
    &=\sum_{x}\nu(x,x)\sum_{a\neq b}C_{x,x,a,b}\\
    &\geq \sum_{x}\beta\nu_A(x)\sum_{a\neq b}C_{x,x,a,b}\\
    &=\beta \delta.
\end{align*}
This proves the inequality.
\end{proof}

\begin{lemma}\label{lem:symm-dsync-est}
Let $\mcS=(\ket{\psi},A,B)$ be a strategy and let $\mcS_A$ and $\mcS_B$ be the associated symmetric strategies. Then for any distribution $\nu$ on $\mcX$,
\begin{align}
    1-\dsync(\mcS;\nu)&\leq \sqrt{1-\dsync(\mcS_A;\nu)}\sqrt{1-\dsync(\mcS_B,\nu)},\label{eq:symm-dsync-est1} \\
    \dsync(\mcS_A;\nu)&\leq 2 \dsync(\mcS;\nu),\label{eq:symm-dsync-est2}\text{ and}\\
    \dsync(\mcS_B;\nu)&\leq 2 \dsync(\mcS;\nu)\label{eq:symm-dsync-est3}.
\end{align}
\end{lemma}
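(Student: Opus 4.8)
The plan is to reduce all three estimates to a finite matrix computation via the Schmidt decomposition of $\kp$, and then read them off from Cauchy--Schwarz and a polarization identity for the Hilbert--Schmidt norm.

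First I would fix a Schmidt decomposition $\kp=\sum_{i=1}^d\lambda_i\ket{\alpha_i}\otimes\ket{\beta_i}$, with reduced densities $\rho_A=\sum_i\lambda_i^2\ket{\alpha_i}\bra{\alpha_i}$ and $\rho_B=\sum_i\lambda_i^2\ket{\beta_i}\bra{\beta_i}$, and compress the measurement operators to the supports: $\tilde A^x_a:=P_AA^x_aP_A$, $\tilde B^x_a:=P_BB^x_aP_B$, where $P_A,P_B$ are the support projections. Identifying $\mathrm{supp}(\rho_A)$ and $\mathrm{supp}(\rho_B)$ with $\C^d$ via $\ket{\alpha_i}\mapsto\ket{i}$ and $\ket{\beta_i}\mapsto\ket{i}$, the operators $\tilde A^x_a,\tilde B^x_a$ become PSD matrices in $M_d(\C)$ with $\sum_a\tilde A^x_a=\sum_a\tilde B^x_a=\Id$, both $\rho_A^{1/2}$ and $\rho_B^{1/2}$ become the \emph{same} diagonal matrix $D=\mathrm{diag}(\lambda_1,\dots,\lambda_d)$, and $\kp,\ket{\psi_A},\ket{\psi_B}$ all become $\sum_i\lambda_i\ket{i}\otimes\ket{i}$. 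Since all three states live on $\mathrm{supp}(\rho_A)\otimes\mathrm{supp}(\rho_B)$, replacing $A^x_a,B^x_a$ by their compressions does not change the relevant inner products; combining this with the elementary identity $\bra{v}X\otimes Y\ket{v}=\Tr(DXDY^T)$ for $\ket{v}=\sum_i\lambda_i\ket{i}\otimes\ket{i}$ (the special case $\rho^{1/2}=D$ of Ando's formula, with $Y^T$ transpose in the basis $\{\ket{i}\}$), and recalling that in $\mcS_A$ (resp. $\mcS_B$) Bob's operators are the transposes of $\tilde A^x_a$ (resp. $\tilde B^x_a$), one gets
\[
1-\dsync(\mcS;\nu)=\expect{x\sim\nu}\sum_a\Tr\big(D\tilde A^x_aD(\tilde B^x_a)^T\big),
\]
\[
1-\dsync(\mcS_A;\nu)=\expect{x\sim\nu}\sum_a\Tr\big(D\tilde A^x_aD\tilde A^x_a\big),\qquad 1-\dsync(\mcS_B;\nu)=\expect{x\sim\nu}\sum_a\Tr\big(D\tilde B^x_aD\tilde B^x_a\big).
\]

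Next I would introduce the self-adjoint matrices $M^x_a:=D^{1/2}\tilde A^x_aD^{1/2}$ and $N^x_a:=D^{1/2}(\tilde B^x_a)^TD^{1/2}$. By cyclicity of the trace, $\norm{M^x_a}_2^2=\Tr(D\tilde A^x_aD\tilde A^x_a)$, $\norm{N^x_a}_2^2=\Tr(D\tilde B^x_aD\tilde B^x_a)$, and $\Tr\big((M^x_a)^*N^x_a\big)=\Tr(D\tilde A^x_aD(\tilde B^x_a)^T)$, so writing $\langle\cdot,\cdot\rangle$ for the Hilbert--Schmidt inner product the three displays above say precisely that $1-\dsync(\mcS;\nu)=\expect{x}\sum_a\langle M^x_a,N^x_a\rangle$, $1-\dsync(\mcS_A;\nu)=\expect{x}\sum_a\norm{M^x_a}_2^2$, and $1-\dsync(\mcS_B;\nu)=\expect{x}\sum_a\norm{N^x_a}_2^2$. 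Then \eqref{eq:symm-dsync-est1} is immediate from Cauchy--Schwarz in the single Hilbert space obtained by tensoring $L^2(\nu)$ with the Hilbert--Schmidt space (equivalently, Cauchy--Schwarz in $\ell^2$ over $a$ followed by Cauchy--Schwarz over $x$ with weights $\nu$). For \eqref{eq:symm-dsync-est2} and \eqref{eq:symm-dsync-est3} I would use the polarization identity $\norm{M^x_a-N^x_a}_2^2=\norm{M^x_a}_2^2+\norm{N^x_a}_2^2-2\langle M^x_a,N^x_a\rangle$ (valid since $M^x_a,N^x_a$ are self-adjoint, so the inner product is real and symmetric), which after summing over $a$ and averaging over $x$ rearranges to
\[
2\,\dsync(\mcS;\nu)=\dsync(\mcS_A;\nu)+\dsync(\mcS_B;\nu)+\expect{x\sim\nu}\sum_a\norm{M^x_a-N^x_a}_2^2.
\]
Dropping the nonnegative term $\expect{x}\sum_a\norm{M^x_a-N^x_a}_2^2$ together with $\dsync(\mcS_B;\nu)\ge 0$ yields $\dsync(\mcS_A;\nu)\le 2\,\dsync(\mcS;\nu)$, and dropping $\dsync(\mcS_A;\nu)$ instead yields $\dsync(\mcS_B;\nu)\le 2\,\dsync(\mcS;\nu)$.

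The only genuinely delicate part is the bookkeeping in the reduction: choosing Schmidt bases on the two sides consistently so that $D$ is literally the same diagonal matrix; verifying that compressing $A^x_a,B^x_a$ onto $\mathrm{supp}(\rho_A),\mathrm{supp}(\rho_B)$ changes none of the quantities $\tau$-pairs in play (because $\kp$, $\ket{\psi_A}$ and $\ket{\psi_B}$ are all supported there); and keeping track of which transpose appears when Ando's formula is applied to the symmetrized strategies $\mcS_A$ and $\mcS_B$. Once this formalism is set up, \eqref{eq:symm-dsync-est1} is a two-line Cauchy--Schwarz and \eqref{eq:symm-dsync-est2}--\eqref{eq:symm-dsync-est3} follow from the one-line polarization identity above.
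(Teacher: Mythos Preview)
Your argument is correct. For \eqref{eq:symm-dsync-est1} the paper simply cites \cite[Corollary 3.2]{Vid22}; your Cauchy--Schwarz computation after the Ando/Schmidt reduction is exactly the standard proof of that fact, so there is no real divergence there.

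For \eqref{eq:symm-dsync-est2}--\eqref{eq:symm-dsync-est3} you do take a genuinely different (and slightly sharper) route. The paper deduces them from \eqref{eq:symm-dsync-est1} by dropping one square root and using $\sqrt{1-t}\le 1-\tfrac{t}{2}$. You instead prove the exact identity
\[
2\,\dsync(\mcS;\nu)=\dsync(\mcS_A;\nu)+\dsync(\mcS_B;\nu)+\expect{x\sim\nu}\sum_a\norm{M^x_a-N^x_a}_2^2,
\]
which immediately gives both inequalities with equality information to spare. Both approaches are one-line once the Hilbert--Schmidt reformulation is in place; your polarization argument is marginally more informative (it shows $\dsync(\mcS_A;\nu)+\dsync(\mcS_B;\nu)\le 2\,\dsync(\mcS;\nu)$ simultaneously and identifies the defect), while the paper's route has the small expository advantage of not needing the $M^x_a,N^x_a$ machinery at all for the last two inequalities. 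The bookkeeping you flag (consistent Schmidt bases, compression to supports, which transpose appears) is handled correctly.
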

\begin{proof}
    \Cref{eq:symm-dsync-est1} was proved in \cite[Corollary 3.2]{Vid22}. Since $\sqrt{1-\dsync(\mcS_B,\nu)}\leq 1$,
    \begin{align*}
        1-\dsync(\mcS;\nu)&\leq \sqrt{1-\dsync(\mcS_A;\nu)}\sqrt{1-\dsync(\mcS_B,\nu)}\\ 
        &\leq \sqrt{1-\dsync(\mcS_A;\nu)}\leq 1-\frac{1}{2}\dsync(\mcS_A;\nu).   \end{align*}
        This proves \Cref{eq:symm-dsync-est2}. \Cref{eq:symm-dsync-est3} holds similarly.
\end{proof}

The following ``replacement" lemma is similar to \cite[Lemma 2.10]{Vid22}. For the subsequent use, we provide specific constants instead of big-$\mcO$ in the asymptotic analysis.
\begin{lemma}\label{lem:subs-meas-ops-corr-est}
   Let $\mcS=(\ket{\psi},A,B)$ be a strategy for a symmetric game $G=(\mcX,\nu,\mcA,D)$. Let $\nu_A$ be the marginal distribution of $\nu$ on $\mcX$, and let $\rho_A$ and $\rho_B$ be the densities of $\ket{\psi}$ on $H_A$ and $H_B$ respectively. If $\hat{A}=\{\hat{A}^x_a\}$ is a family of POVMs on $H_A$ and $\hat{B}=\{\hat{B}^y_b\}$ is a family of POVMs on $H_B$ with
\begin{align*}
    \gamma_A:&=\expect{x\sim \nu_A}\sum_a \Tr\left( 
(A^x_a-\hat{A}^x_a)^2\rho_A \right),\\
 \gamma_B:&=\expect{y\sim \nu_A}\sum_b \Tr\left( 
(B^y_b-\hat{B}^y_b)^2\rho_B \right), 
\end{align*}
then
\begin{equation*}
    \expect{x,y\sim \nu}\sum_{a,b}\abs{C_{x,y,a,b}-\hat{C}_{x,y,a,b}}\leq 12\dsync(\mcS;\nu_A)+4\sqrt{\gamma_A}+4\sqrt{\gamma_B},
\end{equation*}
where $C$ and $\hat{C}$ are the correlations induced by $\mcS$ and $\hat{\mcS}=(\ket{\psi},\hat{A},\hat{B})$, respectively.
\end{lemma}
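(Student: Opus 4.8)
The plan is to treat the two one-sided replacements separately. By the triangle inequality,
$$\bigl|C_{x,y,a,b}-\hat C_{x,y,a,b}\bigr|\le\bigl|\bra{\psi}(A^x_a-\hat A^x_a)\otimes B^y_b\ket{\psi}\bigr|+\bigl|\bra{\psi}\hat A^x_a\otimes(B^y_b-\hat B^y_b)\ket{\psi}\bigr|,$$
so it suffices to show $\expect{x,y\sim\nu}\sum_{a,b}\bigl|\bra{\psi}(A^x_a-\hat A^x_a)\otimes B^y_b\ket{\psi}\bigr|\le 6\dsync(\mcS;\nu_A)+4\sqrt{\gamma_A}$, together with the mirror statement (with $A$ and $B$ interchanged) bounded by $6\dsync(\mcS;\nu_A)+4\sqrt{\gamma_B}$. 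Since $G$ is symmetric, $\dsync(\mcS;\nu_A)$ equally controls the asynchronicity of Bob's marginal, so the mirror statement follows by the same argument, and adding the two bounds gives the lemma.

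Fix the Alice replacement and write $\Delta^x_a:=A^x_a-\hat A^x_a$. The first step removes the $b$-sum and all $y$- and $|\mcB|$-dependence: since $\bra{\psi}(\Delta^x_a\otimes B^y_b)\ket{\psi}=\Tr\bigl(B^y_b\,N^x_a\bigr)$ with $N^x_a:=\Tr_{H_A}\bigl((\Delta^x_a\otimes\Id)\ket{\psi}\bra{\psi}\bigr)$ self-adjoint on $H_B$, and $\{B^y_b\}_b$ is a POVM, one has $\sum_b\bigl|\bra{\psi}(\Delta^x_a\otimes B^y_b)\ket{\psi}\bigr|\le\sum_b\Tr(B^y_b\,|N^x_a|)=\Tr|N^x_a|=\|N^x_a\|_1$; computing $N^x_a$ in a Schmidt basis of $\ket{\psi}$ identifies it (up to entrywise conjugation) with $\rho_A^{1/2}\Delta^x_a\rho_A^{1/2}$, so $\|N^x_a\|_1=\|\rho_A^{1/2}\Delta^x_a\rho_A^{1/2}\|_1$. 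Hence the Alice replacement is at most $\expect{x\sim\nu_A}\sum_a\|\rho_A^{1/2}\Delta^x_a\rho_A^{1/2}\|_1$, and everything reduces to bounding this quantity.

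This last bound is the crux, and here almost-synchronicity is genuinely needed (without it the quantity can be of order $1$ while $\sqrt{\gamma_A}$ is tiny). Almost-synchronicity forces $\{A^x_a\}_a$ to be almost projective in the $\rho_A$-weighted $2$-norm: applying Cauchy--Schwarz to $C_{x,x,a,a}=\langle(A^x_a\otimes\Id)\psi,(\Id\otimes B^x_a)\psi\rangle$ and summing over $a$ gives $\sum_a\Tr((A^x_a)^2\rho_A)\ge(1-\delta_x)^2$, hence $\expect{x\sim\nu_A}\sum_a\Tr\bigl(A^x_a(\Id-A^x_a)\rho_A\bigr)\le 2\dsync(\mcS;\nu_A)$. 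Consequently $A^x_a$ is close, in the relevant norm, to the genuine projection $P^x_a$ obtained as its spectral projection onto eigenvalues in $[1/2,1]$ (using $|A^x_a-P^x_a|\le 2A^x_a(\Id-A^x_a)$ and $\Tr(|X|\rho_A)\le(\Tr(X^2\rho_A))^{1/2}$), and almost-synchronicity also makes the $\{P^x_a\}_a$ almost pairwise orthogonal against $\rho_A$. One then bounds $\sum_a\|\rho_A^{1/2}(P^x_a-\hat A^x_a)\rho_A^{1/2}\|_1$ via the $2\times2$ block decomposition of $\hat A^x_a$ relative to $P^x_a$: the diagonal blocks $P^x_a\hat A^x_a P^x_a$ and $(\Id-P^x_a)\hat A^x_a(\Id-P^x_a)$ are positive, so their trace-norm sums are traces controlled directly, while the off-diagonal blocks and the "$\Id-P^x_a$" block are handled using $\hat A^x_a\ge 0$—which makes $(\hat A^x_a)^{1/2}(\Id-P^x_a)$ small precisely because $\hat A^x_a$ is close to $P^x_a$—followed by a single Cauchy--Schwarz over $a$ that is dimension-free thanks to the almost-orthogonality of $\{P^x_a\}_a$ and $\{\hat A^x_a\}_a$. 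Tracking constants and applying Jensen's inequality to pass the square root outside $\expect{x\sim\nu_A}$ yields $6\dsync(\mcS;\nu_A)+4\sqrt{\gamma_A}$.

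The step I expect to be the real obstacle is the dimension-independence inside the crux. A priori a POVM element can differ from a projection by an operator whose trace norm exceeds its $\rho_A$-weighted Hilbert--Schmidt norm by a factor $\sqrt{|\mcA|}$, and a careless Cauchy--Schwarz reintroduces exactly this factor (note $\sum_a\sqrt{\Tr((\Delta^x_a)^2\rho_A)}\le\sqrt{|\mcA|}\,\sqrt{\gamma_A}$ only). The resolution—using almost-synchronicity to force $A^x_a$, hence $\hat A^x_a$, to be almost projective and mutually almost-orthogonal, so that the only room for a spread-out perturbation of $\hat A^x_a$ away from $P^x_a$ is exactly what the positivity constraint $\sum_a\hat A^x_a=\Id$ eliminates—is the part of the argument that follows, with the constants made explicit, the corresponding step of \cite[Lemma 2.10]{Vid22}.
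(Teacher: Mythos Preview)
Your approach is correct and essentially identical to the paper's: both use the same triangle split $C\to C'\to\hat C$ with $C'$ the correlation of $(\ket{\psi},\hat A,B)$, and both ultimately rely on \cite[Lemma 2.10]{Vid22} for each one-sided replacement. The paper simply invokes that lemma as a black box to get $3\dsync(\mcS_A;\nu_A)+4\sqrt{\gamma_A}$ and $3\dsync(\mcS_B;\nu_A)+4\sqrt{\gamma_B}$, then applies \Cref{lem:symm-dsync-est} to convert $\dsync(\mcS_A),\dsync(\mcS_B)\le 2\dsync(\mcS)$; you instead sketch the internal mechanism of Vidick's lemma (almost-projectivity from asynchronicity, block decomposition against spectral projections) and derive the $2\dsync(\mcS)$ factor directly via Cauchy--Schwarz on $C_{x,x,a,a}$, but you explicitly acknowledge this is the same argument.
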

\begin{proof}
    Let $\mcS_A$ and $\mcS_B$ be the associated symmetric strategies of $\mcS$,
    and let $C'$ be the correlations induced by the strategy $(\kp,\hat{A},B)$. Then by \cite[Lemma 2.10]{Vid22}, we have
    \begin{align*}
        \expect{x,y\sim \nu}\sum_{a,b}\abs{C_{x,y,a,b}-C'_{x,y,a,b}}&\leq 3\dsync(\mcS_A;\nu_A)+4\sqrt{\gamma_A},\\
        \expect{x,y\sim \nu}\sum_{a,b}\abs{C'_{x,y,a,b}-\hat{C}_{x,y,a,b}}&\leq 3\dsync(\mcS_B;\nu_A)+4\sqrt{\gamma_B}.
    \end{align*}
    The rest follows from the triangle inequality and \Cref{lem:symm-dsync-est}. 
\end{proof}

Let $\mcS$ be an almost synchronous strategy. \Cref{lem:subs-meas-ops-corr-est} implies that small perturbations on the measurement operators will result in small perturbations on the correlation. This is particularly useful when we want to orthogonalize the measurements (i.e., find a projective strategy nearby).

\begin{lemma}\label{lem:nearby-proj-strat}
    Let $\mcS=(\ket{\psi},A, B)$ be a strategy for a symmetric game $G=(x,\nu, \mathcal{A}, D)$ with reduced densities $\rho_A$ and $\rho_B$ on $H_A$ and $H_B$, respectively. Let $\nu_A$ be the marginal of $\nu$ on $\mcX$ and let $\delta=\dsync(\mcS,\nu_A)$. Then there exists a projective strategy $\hat{\mcS}=(\ket{\psi},\hat{A},\hat{B})$ such that $\hat{\delta}:=\dsync(\hat{\mcS},\nu_A)=\mathcal{O}(\delta^{\frac{1}{8}})$,
\begin{align}
	\expect{x\sim \nu_A}\sum_a \Tr\left((A_a^x-\hat{A}_a^x)^2\rho_A\right)&=\mathcal{O}(\delta^{\frac{1}{4}})\label{eq:lem-proj-sym-strat-op-diff-A},\\
	\expect{x\sim \nu_A}\sum_a \Tr\left((B_a^x-\hat{B}_a^x)^2\rho_B\right)&=\mathcal{O}(\delta^{\frac{1}{4}}),\label{eq:lem-proj-sym-strat-op-diff-B}
\end{align}
and 
\begin{equation}\label{eq:lem-proj-sym-strat-corr-diff-A}
	\expect{x,y\sim \nu}\sum_{a,b}|C_{x,y,a,b}-\hat{C}_{x,y,a,b}|\leq \mathcal{O}(\delta^{\frac{1}{8}}),
\end{equation}
where $C$ and $\hat{C}$ are the correlations induced by $\mcS$ and $\hat{\mcS}$, respectively.
\end{lemma}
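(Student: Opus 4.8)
The plan is to orthogonalise Alice's and Bob's measurements separately, reducing first to the associated symmetric strategies, where the almost-synchronicity hypothesis becomes directly usable. Let $\mcS_A=(\kpA,A)$ and $\mcS_B=(\kpB,B)$ be the associated symmetric strategies of $\mcS$; by \Cref{lem:symm-dsync-est} their asynchronicities with respect to $\nu_A$ are at most $2\delta$, and their reduced densities on both legs are $\rho_A$ and $\rho_B$ respectively. Since Bob's operators in $\mcS_A$ are the transposes $(A^x_a)^T$, Ando's formula gives
\[
\dsync(\mcS_A;\nu_A)=\expect{x\sim\nu_A}\sum_a\Tr\!\big(A^x_a\rho_A^{1/2}(\Id-A^x_a)\rho_A^{1/2}\big),
\]
with every summand nonnegative. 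An elementary computation in the eigenbasis of $\rho_A$ shows $\Tr(X\rho_A^{1/2}X\rho_A^{1/2})\le\Tr(X^2\rho_A)$ for self-adjoint $X$, so termwise $\expect{x}\sum_a\Tr\!\big((A^x_a-(A^x_a)^2)\rho_A\big)\le\dsync(\mcS_A;\nu_A)\le 2\delta$; that is, Alice's measurement operators are almost projective in the $\rho_A$-weighted $2$-norm, and likewise Bob's in the $\rho_B$-weighted $2$-norm.

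Next I would round and then fix up. Let $P^x_a$ be the spectral projection of $A^x_a$ onto $(1/2,1]$; since $(t-\mathbf{1}_{t>1/2})^2\le t(1-t)$ on $[0,1]$ and $P^x_a$ commutes with $A^x_a$, we get $\Tr\!\big((A^x_a-P^x_a)^2\rho_A\big)\le\Tr\!\big((A^x_a-(A^x_a)^2)\rho_A\big)$, hence $\expect{x}\sum_a\Tr\!\big((A^x_a-P^x_a)^2\rho_A\big)\le 2\delta$; but $\{P^x_a\}_a$ need not be a PVM. \emph{This fix-up is the main obstacle.} The relations $A^x_a+A^x_b\le\Id$ for $a\neq b$, combined with the almost-projectivity, force the $P^x_a$ to be pairwise nearly orthogonal and $\sum_aP^x_a$ nearly $\Id$, in the $\rho_A$-weighted $2$-norm. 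Converting these $L^2(\rho_A)$-bounds into operator-norm bounds on a subspace of large $\rho_A$-weight (a Markov/pinching step, costing one square root), one can then apply a perturbative rounding of the resulting almost-PVM to an exact PVM on that subspace and extend it by assigning the residual subspace to one fixed answer (a second square root); the cascade produces a PVM family $\{\hat A^x_a\}_a$ with $\gamma_A:=\expect{x}\sum_a\Tr\!\big((A^x_a-\hat A^x_a)^2\rho_A\big)=\mathcal{O}(\delta^{1/4})$, and symmetrically $\gamma_B=\mathcal{O}(\delta^{1/4})$. I expect getting this bookkeeping right --- especially keeping the rounding controlled --- to be the delicate part; it is in the spirit of the orthogonalisation arguments of \cite{Vid22}. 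This establishes \eqref{eq:lem-proj-sym-strat-op-diff-A} and \eqref{eq:lem-proj-sym-strat-op-diff-B}, and $\hat{\mcS}:=(\kp,\hat A,\hat B)$ is projective.

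It then remains to transfer these operator estimates to the correlation and to $\hat\delta$. First, \Cref{lem:subs-meas-ops-corr-est} applied to $\mcS$ and $\hat{\mcS}$ gives
\[
\expect{x,y\sim\nu}\sum_{a,b}\big|C_{x,y,a,b}-\hat C_{x,y,a,b}\big|\le 12\dsync(\mcS;\nu_A)+4\sqrt{\gamma_A}+4\sqrt{\gamma_B}=\mathcal{O}(\delta^{1/8}),
\]
which is \eqref{eq:lem-proj-sym-strat-corr-diff-A}. Second, writing $C_{x,x,a,a}-\hat C_{x,x,a,a}=\bra{\psi}(A^x_a-\hat A^x_a)\otimes B^x_a\ket{\psi}+\bra{\psi}\hat A^x_a\otimes(B^x_a-\hat B^x_a)\ket{\psi}$ and applying Cauchy--Schwarz in each term, using $\sum_a\Tr\!\big((\hat A^x_a)^2\rho_A\big)\le 1$ and $\sum_a\Tr\!\big((B^x_a)^2\rho_B\big)\le 1$, yields $\big|\dsync(\hat{\mcS};\nu_A)-\dsync(\mcS;\nu_A)\big|\le\sqrt{\gamma_A}+\sqrt{\gamma_B}=\mathcal{O}(\delta^{1/8})$, so that $\hat\delta=\dsync(\hat{\mcS};\nu_A)\le\delta+\mathcal{O}(\delta^{1/8})=\mathcal{O}(\delta^{1/8})$, completing the plan.
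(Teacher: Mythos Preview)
Your approach is essentially the same as the paper's: reduce to the associated symmetric strategies via \Cref{lem:symm-dsync-est}, orthogonalise each side separately, then read off the correlation bound from \Cref{lem:subs-meas-ops-corr-est} and the $\hat\delta$ bound from the same Cauchy--Schwarz splitting you wrote down. The only substantive difference is that where you sketch the orthogonalisation (spectral rounding at $1/2$, then a Markov/pinching fix-up you leave incomplete), the paper simply invokes \cite[Lemma~3.4]{Vid22} as a black box to obtain PVMs $\hat A,\hat B$ with $\expect{x}\sum_a\Tr\big((A^x_a-\hat A^x_a)^2\rho_A\big)=\mcO(\delta_A^{1/4})=\mcO(\delta^{1/4})$ directly. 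Your almost-projectivity computation $\expect{x}\sum_a\Tr\big((A^x_a-(A^x_a)^2)\rho_A\big)\le 2\delta$ is correct and is indeed the starting point of that lemma, but the ``fix-up'' you flag as delicate is precisely the content of \cite[Lemma~3.4]{Vid22}; there is no need to redo it here.
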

\begin{proof}
    Let $\mcS_A$ and $\mcS_B$ be the associated symmetric strategies of $\mcS$. \Cref{lem:symm-dsync-est} implies that $\delta_A:=\dsync(\mcS_A,\nu_A)= \mcO(\delta)$ and $\delta_B:=\dsync(\mcS_B,\nu_A)= \mcO(\delta)$. Then by \cite[Lemma 3.4]{Vid22}, there are PVMs $\hat{A}=\{\hat{A}^x_a\}$ on $H_A$ such that 
    \begin{equation*}
        \expect{x\sim \nu_A}\sum_a \Tr\left((A_a^x-\hat{A}_a^x)^2\rho_A\right)=\mathcal{O}(\delta_A^{\frac{1}{4}})= \mathcal{O}(\delta^{\frac{1}{4}}).
    \end{equation*}
    This proves \Cref{eq:lem-proj-sym-strat-op-diff-A}. The existence of PVMs $\hat{B}$ and \Cref{eq:lem-proj-sym-strat-op-diff-B} follows similarly. Since every $B^x_a$ and $\hat{A}^x_a$  are measurement operators,
    \begin{align*}
        \abs{\hat{\delta}-\delta}&=\left\vert \expect{x\in\nu_A}\sum_{a}\left(\bra{\psi}A^x_a\otimes B^x_a\ket{\psi}-\bra{\psi}\hat{A}^x_a\otimes \hat{B}^x_a\ket{\psi} \right)  \right\vert\\
        &= \left\vert \expect{x\in\nu_A}\sum_{a}\left(\bra{\psi}(A^x_a-\hat{A}^x_a)\otimes B^x_a\ket{\psi}+\bra{\psi}\hat{A}^x_a\otimes(B^x_a- \hat{B}^x_a)\ket{\psi} \right)  \right\vert\\
        &\leq \left(\expect{x\sim \nu_A}\sum_a \Tr\left((A_a^x-\hat{A}_a^x)^2\rho_A\right) \right)^{1/2} + \left(\expect{x\sim \nu_A}\sum_a \Tr\left((B_a^x-\hat{B}_a^x)^2\rho_B\right)  \right)^{1/2}\\
        &= \mcO(\delta^{1/8}).
    \end{align*}
    Hence $\hat{\delta}=\mcO(\delta^{1/8})$. Then \Cref{eq:lem-proj-sym-strat-corr-diff-A} follows by \Cref{lem:subs-meas-ops-corr-est}.
\end{proof}

\Cref{eq:lem-proj-sym-strat-corr-diff-A} in the above lemma says that the correlations induced by $\mcS$ and $\hat{\mcS}$ are close. The following lemma further demonstrates that for any non-local game $G$, the winning probabilities of $\mcS$ and $\hat{\mcS}$ are close.

\begin{lemma}\label{lem:win-prob-est}
   Suppose $C$ and $C'$ are correlations that are induced by two strategies $\mcS$ and $\mcS'$, respectively, for a non-local game $G=(\mcX,\mcY,\nu,\mcA,\mcB,D)$ such that
   \begin{equation*}
       \expect{x,y\sim \nu} \sum_{a,b}\abs{C_{x,y,a,b}-C'_{a,b,x,y}}\leq \epsilon.
   \end{equation*}
   Then $\abs{\omega(G;\mcS)-\omega(G;\mcS')}\leq \epsilon$.
\end{lemma}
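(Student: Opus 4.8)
The statement is an essentially immediate consequence of the definition of the winning probability of a correlation together with the triangle inequality, so the plan is simply to unwind the definitions carefully and bound the difference term by term.

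First I would recall that, by definition, $\omega(G;\mcS)=\omega(G;C)=\expect{(x,y)\sim\nu}\sum_{a,b}D(a,b|x,y)\,C_{x,y,a,b}$ and likewise $\omega(G;\mcS')=\expect{(x,y)\sim\nu}\sum_{a,b}D(a,b|x,y)\,C'_{x,y,a,b}$, where $C'_{a,b,x,y}$ appearing in the hypothesis denotes the same real number as $C'_{x,y,a,b}$ after a harmless reordering of indices. Subtracting the two expressions and using linearity of the (finite) expectation and sum gives
\[
\omega(G;\mcS)-\omega(G;\mcS')=\expect{(x,y)\sim\nu}\sum_{a,b}D(a,b|x,y)\big(C_{x,y,a,b}-C'_{x,y,a,b}\big).
\]

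Next I would take absolute values, move the absolute value inside the expectation and the sum via the triangle inequality, and then use that $D(a,b|x,y)\in\{0,1\}$, so in particular $D(a,b|x,y)\leq 1$ for every tuple $(x,y,a,b)$. This yields
\begin{align*}
\abs{\omega(G;\mcS)-\omega(G;\mcS')}
&\leq \expect{(x,y)\sim\nu}\sum_{a,b}D(a,b|x,y)\,\abs{C_{x,y,a,b}-C'_{x,y,a,b}}\\
&\leq \expect{(x,y)\sim\nu}\sum_{a,b}\abs{C_{x,y,a,b}-C'_{x,y,a,b}}\leq\epsilon,
\end{align*}
where the last inequality is the hypothesis, completing the argument. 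There is no genuine obstacle here; the only points deserving any attention are that the predicate is bounded by $1$ (so that the $D$ factor can simply be dropped after passing to absolute values) and that the index ordering in the hypothesis is purely cosmetic and does not affect anything.
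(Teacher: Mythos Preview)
Your proof is correct and matches the paper's approach: the paper simply remarks that since each $D(a,b\mid x,y)\in\{0,1\}$, the lemma follows immediately from H\"older's inequality, which is precisely the bound $\sum_i |a_i b_i|\le \|a\|_\infty\|b\|_1$ that you spell out with the triangle inequality and $D\le 1$.
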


\begin{proof}
    Since every $D(a,b|x,y)$ is either 1 or 0, this lemma is an immediate consequence of the H\"older inequality.
\end{proof}

We call the collection of positive operators $\{A_i:1\leq i\leq m\}$ an \textbf{incomplete POVM} if $\sum_{i=1}^m A_i\leq \Id$. It can be completed to a POVM $\{A_i:1\leq i\leq m+1\}$ by adding an outcome $m+1$ where $A_{m+1}=\Id-\big(\sum_{i=1}^m A_i\big)$. The following lemma gives an upper bound of the ``distance" between any two incomplete POVMs.

\begin{lemma}\label{lem:op-norm-bound-game-poly}
    Let $\{A_i:1\leq i\leq m\}$ and $\{\wtd{A}_i:1\leq i\leq m\}$ be two collections of positive operators on some Hilbert space $H$ such that $\sum_{i=1}^m A_i\leq \Id$ and $\sum_{i=1}^m \wtd{A}_i\leq \Id$. Then
 \begin{equation*}
    \left\Vert \sum_{i=1}^m(A_i-\wtd{A}_i)^2  \right\Vert_{\infty}\leq 4.
\end{equation*}
\end{lemma}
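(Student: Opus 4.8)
The plan is to bound each term and the cross term separately. First I would observe that for any two positive operators $P, Q$ with $P, Q \leq \Id$, we have $0 \leq P, Q \leq \Id$, so $\norm{P}_\infty, \norm{Q}_\infty \leq 1$ and hence $\norm{P - Q}_\infty \leq 1$. More importantly, since $0 \leq (P-Q)^2$ and we want a bound on a sum of such squares, the natural approach is to write $(A_i - \wtd A_i)^2 = A_i^2 - A_i \wtd A_i - \wtd A_i A_i + \wtd A_i^2$ and bound the sum of each of the four resulting families.

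For the diagonal terms: since each $A_i \geq 0$ and $\sum_i A_i \leq \Id$, we have $A_i \leq \Id$ for each $i$, hence $A_i^2 \leq A_i$ (because $A_i(\Id - A_i) \geq 0$ as a product of commuting positive operators), and therefore $\sum_i A_i^2 \leq \sum_i A_i \leq \Id$. The same holds for $\wtd A_i$. So $\norm{\sum_i A_i^2}_\infty \leq 1$ and $\norm{\sum_i \wtd A_i^2}_\infty \leq 1$. For the cross terms, I would use a Cauchy--Schwarz-type argument: for any unit vector $\ket{v}$,
\begin{equation*}
\Bigabs{\sum_i \bra{v} A_i \wtd A_i \ket{v}} \leq \sum_i \norm{A_i^{1/2}\ket v}\, \norm{A_i^{1/2}\wtd A_i \ket v} \leq \Big(\sum_i \norm{A_i^{1/2}\ket v}^2\Big)^{1/2}\Big(\sum_i \norm{A_i^{1/2}\wtd A_i \ket v}^2\Big)^{1/2},
\end{equation*}
where I used $\abs{\bra v A_i \wtd A_i \ket v} = \abs{\bra{A_i^{1/2}v} A_i^{1/2} \wtd A_i \ket v}$. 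The first factor is $\big(\sum_i \bra v A_i \ket v\big)^{1/2} \leq 1$. For the second factor, $\sum_i \norm{A_i^{1/2} \wtd A_i \ket v}^2 = \sum_i \bra v \wtd A_i A_i \wtd A_i \ket v \leq \sum_i \bra v \wtd A_i^2 \ket v \leq 1$ using $A_i \leq \Id$ and then $\wtd A_i^2 \leq \wtd A_i$ and $\sum_i \wtd A_i \leq \Id$. Hence $\abs{\sum_i \bra v A_i \wtd A_i \ket v} \leq 1$, and symmetrically for $\sum_i \wtd A_i A_i$.

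Combining, for any unit vector $\ket v$,
\begin{equation*}
\Bigabs{\bra v \sum_i (A_i - \wtd A_i)^2 \ket v} \leq \Bigabs{\bra v \sum_i A_i^2 \ket v} + \Bigabs{\bra v \sum_i A_i \wtd A_i \ket v} + \Bigabs{\bra v \sum_i \wtd A_i A_i \ket v} + \Bigabs{\bra v \sum_i \wtd A_i^2 \ket v} \leq 4,
\end{equation*}
and since $\sum_i (A_i - \wtd A_i)^2$ is self-adjoint, taking the supremum over unit vectors gives $\norm{\sum_i (A_i - \wtd A_i)^2}_\infty \leq 4$. I do not expect any serious obstacle here; the only point requiring a little care is the cross-term estimate, where one must resist the temptation to use the (false in general) operator inequality $A_i \wtd A_i + \wtd A_i A_i \leq A_i^2 + \wtd A_i^2$ and instead argue at the level of quadratic forms with the Cauchy--Schwarz step above.
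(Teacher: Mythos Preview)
Your proof is correct and follows the same overall structure as the paper's: expand the square, use $A_i^2\leq A_i$ to bound $\sum_i A_i^2\leq\Id$ (and likewise for $\wtd A_i$), and then control the cross term $\sum_i A_i\wtd A_i$ by a Cauchy--Schwarz argument. The one point of difference is in that last step: the paper invokes the Cauchy--Schwarz inequality for Hilbert $C^*$-modules to obtain the operator inequality $(\sum_i A_i\wtd A_i)^*(\sum_i A_i\wtd A_i)\leq\|\sum_i A_i^2\|_\infty\sum_i\wtd A_i^2\leq\Id$, hence $\|\sum_i A_i\wtd A_i\|_\infty\leq 1$, whereas you work directly at the level of quadratic forms with two applications of the scalar Cauchy--Schwarz inequality to get $|\bra v\sum_i A_i\wtd A_i\ket v|\leq 1$ for every unit vector. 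Your route is more elementary and self-contained (no external reference needed), while the paper's route yields the slightly stronger operator-norm bound on the cross term itself; for the present lemma either suffices, since $\sum_i(A_i-\wtd A_i)^2$ is positive and its norm is computed on diagonal matrix coefficients.
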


\begin{proof}
    We assume without loss of generality that $\{A_i\}$ and $\{\tA_i\}$ are POVMs, as completing them can only increase the norm. We start by considering just the sum and expanding the square to find that
    \begin{equation*}
        0\leq \sum_{i=1}^m(A_i-\tA_i)^2=\sum_{i=1}^mA_i^2+\tA_i^2-A_i\tA_i-\tA_iA_i\leq 2-\left(\sum_{i=1}^mA_i\tA_i+\tA_iA_i\right),
    \end{equation*}
    having used that $0\leq A_i,\tA_i\leq 1$, so $A_i^2\leq A_i$ and $\tA_i^2\leq \tA_i$. Consequently,
    \begin{equation}\label{eq:est-lem-op-norm-bound-game-poly}
        \left\Vert\sum_{i=1}^m(A_i-\tA_i)^2\right\Vert_{\infty}\leq 2+2\left\Vert\sum_{i=1}^mA_i\tA_i\right\Vert_{\infty}.
    \end{equation}
    By the Cauchy-Schwarz inequality for inner product C*-modules \cite[Lemma 2.5]{RW98}, we know that
    \begin{equation*}
        \left(\sum_{i=1}^mA_i\tA_i\right)^*\left(\sum_{i=1}^mA_i\tA_i\right)\leq \left\Vert\sum_{i=1}^mA_i^2\right\Vert_{\infty}\sum_{i=1}^m\tA_i^2\leq 1,
    \end{equation*}
    so by the C*-identity we have
    \begin{equation*}
        \left\Vert\sum_{i=1}^mA_i\tA_i\right\Vert_{\infty}\leq 1.
    \end{equation*}
    Plugging this into \Cref{eq:est-lem-op-norm-bound-game-poly} proves the lemma.
\end{proof}

We now describe the results in \cite[Section 3]{Vid22}. These are essential for our proof of Theorem \ref{thm:constr-isom-gen-strat}. However, we need to use the explicit constructions and results in the proofs of Vidick's main theorems, instead of the final results. For this reason, we combine the results we need in the appropriate form in the following theorem. 
\begin{theorem}[Vidick]\label{thm:Vidick}
	Let $\mcX$ and $\mcA$ be finite question and answer sets and $\nu$ be a measure on $\mcX\times \mcX$ with marginal $\nu_A$ on the first entry. Let $\mcS=(\kp,A,B)$ be a projective strategy on $H_A\otimes H_B$ and $\delta=\dsync(\mcS,\nu_A)$. Let $\rho_A$ be the reduced density of $\kp$ on $H_A$, let $P_{\lambda}=\chi_{\geq \lambda}(\rho_A)$ be its spectral projections for $\lambda\geq 0$ and let $H_{\lambda}=P_{\lambda}H_A$. Then the following are true:
	\begin{enumerate}[(i)]
		\item The strategies $\mcS_{\lambda}=(\ket{\psi_{\lambda}},P_{\lambda}AP_{\lambda})$ are ME strategies, where $\ket{\psi_{\lambda}}$ is the maximally entangled state on $H_{\lambda}\otimes H_{\lambda}$. 
		\item Let $\rho_{\lambda}$ be the maximally mixed state on $H_{\lambda}$. Then
		\begin{equation*}
			\rho_A=\int_0^{\infty}\rho_{\lambda}d\mu(\lambda),
		\end{equation*}
		where $\mu$ is the probability measure defined by $d\mu(\lambda)=\Tr(P_{\lambda})d\lambda$.
		\item The strategies $\mcS_{\lambda}$ provide an approximate decomposition of $\mcS$ as a convex sum of maximally entangled strategies in the following sense:
		\begin{equation*}
			\expect{x\sim \nu_A}\sum_a\int_0^{\infty}\Tr\left(\left(A_a^x-P_{\lambda}A_a^xP_{\lambda}\right)^2\rho_{\lambda}\right)d\mu(\lambda)\leq \sqrt{2\delta}.
		\end{equation*}
		\item Let $C$ be the correlation of $\mcS$. The correlations $C^{\lambda}$ of $\mcS_{\lambda}$ satisfy
		\begin{equation*}
			\expect{(x,y)\sim \nu}\sum_{a,b}\left|C_{x,y,a,b}-\int_0^{\infty}C^{\lambda}_{x,y,a,b}d\mu(\lambda)\right|\leq \poly(\delta).
		\end{equation*}
		\item There exist PME strategies $\mcS_{\lambda}'=(\ket{\psi_{\lambda}},A^{\lambda})$ such that (iii) and (iv) hold with $\mcS_{\lambda}$, $P_{\lambda}A_a^xP_{\lambda}$ and $\sqrt{2\delta}$ replaced by $\mcS_{\lambda}'$, $A^{\lambda,x}_a$ and $\poly(\delta)$, respectively.
	\end{enumerate}
\end{theorem}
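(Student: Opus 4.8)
The statement is a repackaging of the analysis in \cite[Section~3]{Vid22}, so the plan is to reconstruct the pieces we need and record how they combine; below I describe the mathematics rather than the bookkeeping.

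Items (i) and (ii) I would dispatch by direct computation with the spectral projections. For (i): $\{P_\lambda A^x_a P_\lambda\}_a$ is a family of positive operators with $\sum_a P_\lambda A^x_a P_\lambda=P_\lambda(\sum_a A^x_a)P_\lambda=P_\lambda=\Id_{H_\lambda}$, hence a POVM on $H_\lambda$; since $\ket{\psi_\lambda}$ is maximally entangled on $H_\lambda\otimes H_\lambda$ by construction and $\mcS_\lambda$ is written in symmetric form, it is an ME strategy. For (ii): diagonalising $\rho_A=\sum_i p_i\ket{i}\bra{i}$ gives $P_\lambda=\sum_{i:\,p_i\geq\lambda}\ket{i}\bra{i}$, whence the layer-cake identity $\int_0^\infty P_\lambda\,d\lambda=\sum_i p_i\ket{i}\bra{i}=\rho_A$; since $\rho_\lambda\,d\mu(\lambda)=\tfrac{P_\lambda}{\Tr P_\lambda}\,\Tr(P_\lambda)\,d\lambda=P_\lambda\,d\lambda$, this reads $\rho_A=\int_0^\infty\rho_\lambda\,d\mu(\lambda)$, and $\mu$ is a probability measure because $\int_0^\infty\Tr(P_\lambda)\,d\lambda=\Tr\rho_A=1$.

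The core is (iii). I would first rewrite the integrand: with $M=A^x_a-P_\lambda A^x_a P_\lambda$ one has $MP_\lambda=(\Id-P_\lambda)A^x_a P_\lambda$, so $\Tr(M^2\rho_\lambda)\,d\mu(\lambda)=\Tr(M^2 P_\lambda)\,d\lambda=\norm{(\Id-P_\lambda)A^x_a P_\lambda}_2^2\,d\lambda$, and integrating the layer-cake in the eigenbasis of $\rho_A$ gives $\int_0^\infty\norm{(\Id-P_\lambda)A^x_a P_\lambda}_2^2\,d\lambda=\tfrac12\sum_{i,j}|(A^x_a)_{ij}|^2\,|p_i-p_j|$. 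To bound this by the asynchronicity I would pass to the associated symmetric strategy $\mcS_A=(\ket{\psi_A},A)$, which shares $\rho_A$ and the operators $A$ and satisfies $\dsync(\mcS_A;\nu_A)\leq 2\delta$ by \Cref{lem:symm-dsync-est}; for $\mcS_A$, Ando's formula together with the PVM identity $\sum_a(A^x_a)^2=\Id$ yields $\sum_a\norm{[A^x_a,\rho_A^{1/2}]}_2^2=2\bigl(1-\sum_a\bra{\psi_A}A^x_a\otimes(A^x_a)^T\ket{\psi_A}\bigr)$, i.e.\ twice the per-question asynchronicity of $\mcS_A$. Now $\norm{[A^x_a,\rho_A^{1/2}]}_2^2=\sum_{i,j}|(A^x_a)_{ij}|^2(\sqrt{p_i}-\sqrt{p_j})^2$, so I would split $|p_i-p_j|=|\sqrt{p_i}-\sqrt{p_j}|\,(\sqrt{p_i}+\sqrt{p_j})$ and apply Cauchy--Schwarz over the triples $(a,i,j)$, bounding $(\sqrt{p_i}+\sqrt{p_j})^2\leq 2(p_i+p_j)$ and using that $\{A^x_a\}_a$ is a PVM with $\Tr\rho_A=1$ to get $\sum_a\sum_{i,j}|(A^x_a)_{ij}|^2(\sqrt{p_i}+\sqrt{p_j})^2\leq 4$. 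This produces the per-question bound, and concavity of $\sqrt{\,\cdot\,}$ over $x\sim\nu_A$ closes (iii) up to the stated constant (matching $\sqrt{2\delta}$ exactly needs the factor from \Cref{lem:symm-dsync-est} absorbed, or $\mcS$ already symmetric; cf.\ \cite{Vid22}).

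Items (iv) and (v) both sit on top of (iii). For (iv): I would compare $C$ with the correlation $C^A_{x,y,a,b}=\Tr(A^x_a\rho_A^{1/2}A^y_b\rho_A^{1/2})$ of $\mcS_A$ (Ando's formula), using the replacement lemma (\Cref{lem:subs-meas-ops-corr-est}, based on \cite[Lemma~2.10]{Vid22}) and $\delta$ small, and then compare $C^A$ with $\int_0^\infty C^\lambda_{x,y,a,b}\,d\mu(\lambda)=\int_0^\infty\Tr(P_\lambda A^x_a P_\lambda A^y_b P_\lambda)\,d\lambda$ by a direct algebraic estimate whose error is again controlled, via Cauchy--Schwarz and the commutator identity from (iii), by $\poly(\delta)$. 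For (v): the identity underlying (iii) also gives $\int_0^\infty\dsync(\mcS_\lambda;\nu_A)\,d\mu(\lambda)=\expect{x\sim\nu_A}\sum_a\int_0^\infty\norm{(\Id-P_\lambda)A^x_a P_\lambda}_2^2\,d\lambda\leq\sqrt{2\delta}$, so the ME strategies $\mcS_\lambda$ are, on $\mu$-average, almost synchronous; the orthogonalisation behind \Cref{lem:nearby-proj-strat} (i.e.\ \cite[Lemma~3.4]{Vid22}) then replaces each POVM $P_\lambda A^x_a P_\lambda$ by a genuine PVM $A^{\lambda,x}_a$ on $H_\lambda$ with error polynomial in $\dsync(\mcS_\lambda;\nu_A)$, and pulling that polynomial outside the integral against $\mu$ (H\"older/Jensen) and re-running the correlation estimate gives (v). The main obstacle throughout is (iii): one must track that the replacement error $A^x_a\mapsto P_\lambda A^x_a P_\lambda$ lives on the varying space $H_\lambda$ and integrates correctly against $\mu$, and one must relate the \emph{bipartite} asynchronicity to the \emph{single-operator} commutator $\norm{[A^x_a,\rho_A^{1/2}]}_2$ --- which is what forces the detour through $\mcS_A$.
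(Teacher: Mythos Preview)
The paper does not give its own proof of this theorem: it is stated as a repackaging of \cite[Section~3]{Vid22}, with no argument beyond the attribution. Your proposal is therefore not competing against a proof in the paper but reconstructing Vidick's argument, and your reconstruction is faithful to it. The layer--cake identity for (ii), the reduction of (iii) to the commutator $\|[A^x_a,\rho_A^{1/2}]\|_2$ via Cauchy--Schwarz on $|p_i-p_j|=|\sqrt{p_i}-\sqrt{p_j}|(\sqrt{p_i}+\sqrt{p_j})$, and the use of orthogonalisation (\cite[Lemma~3.4]{Vid22}) for (v) are exactly the ingredients Vidick uses, and your observation that $\int_0^\infty\dsync(\mcS_\lambda;\nu_A)\,d\mu(\lambda)$ coincides with the integral in (iii) is correct and is the hinge for (v).

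One comment on constants: your detour through $\mcS_A$ and \Cref{lem:symm-dsync-est} costs a factor of~$2$ under the square root, yielding $2\sqrt{\delta}$ rather than $\sqrt{2\delta}$. You flag this, and it is harmless for every downstream use in the paper (everything is $\poly(\delta)$), but note that the statement as written applies to a general projective $\mcS=(\ket\psi,A,B)$, so the passage through $\mcS_A$ is genuinely needed and the constant in the paper's statement is slightly optimistic unless one already assumes $\mcS$ symmetric. For (iv) your plan is correct but a bit compressed: the cleanest route, and the one closest to \cite{Vid22}, is to apply the replacement estimate of \Cref{lem:subs-meas-ops-corr-est} \emph{inside} the $\mu$-integral (with the state $\rho_\lambda$ and the substitution $A^x_a\mapsto P_\lambda A^x_a P_\lambda$), then integrate; the error term is exactly the quantity bounded in (iii), and Jensen on $\sqrt{\,\cdot\,}$ gives the $\poly(\delta)$.
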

\begin{remark}
	The main omission from the results in \cite[Section 3]{Vid22} is that similar results with worse dependence on $\delta$ hold for non-projective strategies $\mcS$. Furthermore, one should note that this theorem can also be applied to the B-side of the strategy $\mcS$. 
\end{remark}
\begin{remark}\label{rmk:comm-meas-proj}
	Note that 
	\begin{equation*}
		\frac{1}{\Tr(P_{\lambda})}\norm{A_a^xP_{\lambda}-P_{\lambda}A_a^x}_2^2=2\Tr\left(\left(A_a^x-P_{\lambda}A_a^xP_{\lambda}\right)^2\rho_{\lambda}\right),
	\end{equation*}
	so the above theorem also shows that
	\begin{equation*}
		\expect{x\sim \nu_A}\sum_a\int_0^{\infty}\frac{1}{\Tr(P_{\lambda})}\norm{A_a^xP_{\lambda}-P_{\lambda}A_a^x}_2^2d\mu(\lambda)\leq 2\sqrt{2\delta},
	\end{equation*}
	which is the form we will use in our proofs.
\end{remark}

\subsection{Quantum low degree test} \label{sec:intro-QLDT}
A kind of strategy that we would like to self-test in particular is strategies that use the spectral projections of elements of the Pauli group and the maximally entangled state on the corresponding number of qubits. After making this statement more precise, we present the recently discovered Quantum Low Degree Test, which is a PME-robust self-test.

Let $k\in \N$ and let $X,Z\in M_2(\C)$ be the corresponding Pauli matrices. For a prime power $q$ we let $\mbF_q$ denote the finite field with $q$ elements. For an $\bfa\in \mbF_2^k$ we denote the operator $\bigotimes_{j=1}^kX^{a_j}$ by $\sigma^X(\bfa)$ and $\bigotimes_{j=1}^kZ^{a_j}$ by $\sigma^Z(\bfa)$. The Pauli group on $k$ qubits, $P_k$, is then given by
\begin{equation*}
    P_k=\{(-1)^a\sigma^X(\bfb)\sigma^Z(\bfc)|a\in \mbF_2,\bfb,\bfc\in\mbF_2^k\}.
\end{equation*}
Informally, having access to $P_k$ means that you have access to $k$ qubits, since it is a basis for $B((\C^2)^{\otimes k})$ \cite{CRSV17}, so a qubit test should in some sense verify that you have access to the Pauli group. In the non-local game setting, a qubit test verifies that both Alice and Bob have access to $k$ qubits, and that those qubits are maximally entangled. The definition was introduced in \cite{CVY23}, where they only consider PME strategies. Here, we will include the class of strategies in the definition.

\begin{definition} \label{def:qubit-test}
    Let $k\in \N, \kappa: [0,1]\rightarrow \R_+$ and $\mcC$ be a class of strategies. A $(k,\kappa)-\mcC-$qubit test is a synchronous game $G=(\mcX,\nu,\mcA,D)$ such that there exist
    \begin{itemize}
        \item two sets $S_X,S_Z\subset \mbF_2^k$ that each span $\mbF_2^k$,
        \item an injection $\phi:(\{X\}\times S_X)\cup(\{Z\}\times S_Z)\rightarrow \mcX$ such that $\mcA(\phi(X,\bfa))=\mcA(\phi(Z,\bfb))=\mbF_2$ for all $\bfa\in S_X,\ \bfb\in S_Z$,
    \end{itemize}
and the following hold:
    \begin{itemize}
        \item (Completeness:) There exists a strategy $\tilde{\mcS}=(\ket{\tilde{\psi}},\tilde{A},\tilde{B})$ for $G$ on $(\C^{2^k}\otimes H_A)\otimes (\C^{2^k}\otimes H_B)$ for some Hilbert spaces $H_A$ and $H_B$ such that $\omega(\tilde{\mcS})=1$, $U(\tA^{\phi(W,\bfa)})=\sigma^W(\bfa)$ for every $W\in\{X,Z\}$ and $\bfa\in S_W$ and $\Tr_{H_A,H_B}(\ket{\tilde{\psi}}\bra{\tilde{\psi}})$ is maximally entangled on $\C^{2^k}\otimes \C^{2^k}$.
        \item (Soundness:) Let $\nu'$ be the renormalised restriction of $\nu$ on the image of $\phi$. For any strategy $\mcS=(\ket{\psi},A,B)$ in the class $\mcC$ with $\omega(\mcS)= 1-\epsilon$ for some $\epsilon \geq 0$, we have that $\tilde{\mcS}$ is a local $(\epsilon,\nu')$-dilation of $\mcS$. In other words, $G$ $(\kappa,\nu')$-robustly self-tests $\tilde{\mcS}$ for the class $\mcC$.
    \end{itemize}
\end{definition}

We now turn towards the Quantum Low Degree Test, which was introduced in \cite{CVY23}. This test is based on linear codes, so we will first briefly introduce this. More details can be found in \cite{CVY23}. 

Let $n,k,d\in \N$ and $q$ a prime power. An $[n,k,d]_q$-linear code $C\subset \mbF_q^n$ is a $k$-dimensional subspace such that for all $x\neq 0$ the number of non-zero elements of $x$ (called the Hamming weight) is at least $d$, the distance of the code. We call $d/n$ the relative distance. A parity check matrix for a code $C$ is a matrix $h\in \mbF_q^{m\times n}$ such that $\ker(h)=C$. It is an $r$-local tester if the Hamming weight of each row is at most $r$. A generating matrix for a code $C$ is a matrix $E\in \mbF_q^{n\times k}$ such that the rows of $E$ form a basis for $C$. One example of a code is the binary version of the Reed-Muller code $C_{\mathrm{RM2}}$, which is an $[2^{t(m+1)},t(d+1)^m,D]_2$-linear code with $D\geq \frac{1}{2}(1-\frac{md}{2^t})2^{t(m+1)}$ and an $(d+2)$-local tester with $2^{t(m+2)}(1+m)$ rows \cite{CVY23}. 

From the code $C_{\mathrm{RM2}}$ and a generating matrix $E$ for this code one can construct the Quantum Low Degree Test $G_{\mathrm{qldt}}=(\mcX_{\mathrm{qldt}}, \nu_{\mathrm{qldt}}, \mcA_{\mathrm{qldt}}, D_{\mathrm{qldt}})$, for which we refer to \cite{CVY23}. This non-local game is a $(k,\kappa)$-$\mcC_{PME}$-qubit test with $\kappa(\epsilon)=\poly(m,d,t)\cdot \poly(\epsilon,2^{-t})$ and $k=t(d+1)^m$. One can choose $m,\ d$ and $t$ such that the non-local game has $2^{\poly(\log(k))}$ questions and $\kappa(\epsilon)=\poly(\log(k))\cdot \poly(\epsilon)$. As $G_{\mathrm{qldt}}$ is a PME-qubit test, there exists a renormalised restriction of $\nu_{\mathrm{qldt}}$ on the image of $\phi$ as in Defintion \ref{def:qubit-test}, which we denote by $\nu'_{\mathrm{qldt}}$. The main properties we need is that this game is symmetric, that $S_X=S_Z=\{(E_{ij})_{i=1}^k|1\leq j\leq n\}$ where $n=2^{t(m+1)}$, the length of the code $C_{\mathrm{RM2}}$ and that the ideal strategy does not use auxiliary Hilbert spaces.

\section{Self-testing in the von Neumann algebra picture}\label{sec3}
For a large part of our analysis it is necessary to work with the measurement operators of a single party and with the reduced density matrix of the state for that party. In this case, Definition \ref{def:local_dilation} is not a convenient way to view local dilations. Instead, we would like a formulation in terms of the measurement operators on Alice's side and the reduced state for Alice. This is not possible for all strategies, but it is possible for stategies that are classical convex combinations of ME strategies, i.e. the strategies that can be described as a triple $(M,\tau,A)$ for a tracial von Neumann algebra $(M,\tau)$ and a family of POVMs $A$ in $M$.

\begin{definition}\label{def:vNA-local-dilation}
    Given two strategies $\wtd{\mcS}=(M,\tau^M,\wtd{A})$ and $\mcS=(N,\tau^N,A)$, a distribution $\nu$ on $\mcX$, and an $\epsilon\geq 0$, we say that $\wtd{\mcS}$ is a local $(\epsilon,\nu)$-vNA-dilation of $\mcS$ if the following statements hold. There exist a finite dimensional von Neumann algebra $M_0$ with tracial state $\tau^{M_0}$, a projection $P\in (M\otimes M_0)^{\infty}$ of finite trace such that $N\cong P (M\otimes M_0)^{\infty}P$ and $\tau^N=\tau^{\infty}/\tau^{\infty}(P)$, and a partial isometry $W\in P(M\otimes M_0)^{\infty}I_{M\otimes M_0}$ such that
\begin{equation}
    \expect{x\sim \nu}\sum_a\norm{\wtd{A}^x_a\otimes I_{M_0}-W^* A^x_a W}^2_{\tau^{M\otimes M_0}}\leq \epsilon \label{eq:tau-infinity-diff}
\end{equation}
and
\begin{equation}
    \tau^N(P-WW^*)\leq \epsilon, \tau^{M\otimes M_0}(I_{M\otimes M_0}-W^*W)\leq \epsilon. \label{eq:iso-proj}
\end{equation}
\end{definition}

The above definition is modelled on the notions of closeness of strategies and soundness of a qubit test in \cite[Definition 5.3 and 5.6]{CVY23} in the sense that soundness in the qubit test implies that the corresponding strategies are approximate local vNA-dilations. The following lemma shows that for ME strategies, there is a square dependence between the notions of local dilation and local vNA-dilation, with constants that are independent of the question and answer sizes.

\begin{lemma}\label{lem:PME-robust-self-testing-vNA}
    Let $G=(\mcX,\nu,\mcA,D)$ be a symmetric game, $\hat{\nu}$ a symmetric probability distribution on $\mcX\times\mcX$ with marginal distribution $\hat{\nu}_A$ on $\mcX$, and let $\wtd{\mcS}=(M,\tau^M,\wtd{A})$ and $\mcS=(N,\tau^N,A)$ be two ME strategies for $G$, where $M= B(\tilde{H})$ and $N= B(H)$ for some finite-dimensional spaces $H$ and $\wtd{H}$. 
\begin{enumerate}[(a)]
    \item If $\wtd{\mcS}$ is a local $(\epsilon,\hat{\nu})$-dilation of $\mcS$, then $\wtd{\mcS}$ is a local $(1700\epsilon^2,\hat{\nu}_A)$-vNA-dilation of $\mcS$.
    \item If $\wtd{\mcS}$ is an local $(\epsilon,\hat{\nu}_A)$-vNA-dilation of $\mcS$, then $\wtd{\mcS}$ is a local $((4+\sqrt{2})\sqrt{\epsilon},\hat{\nu})$-dilation of $\mcS$. 
\end{enumerate}
\end{lemma}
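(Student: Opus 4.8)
\emph{Plan.} Both implications go through the GNS correspondence for maximally entangled strategies. For an ME strategy $(M,\tau^M,\wtd A)$ with $M=B(\wtd H)$, the map $X\mapsto(X\otimes\Id)\ket{\wtd\psi}$ is a unitary from $L^2(M,\tau^M)$ onto $\wtd H\otimes\wtd H$ sending $\norm{X}_{\tau^M}$ to $\norm{(X\otimes\Id)\ket{\wtd\psi}}$, left multiplication by $Y$ to $Y\otimes\Id$, and right multiplication by $Y$ to $\Id\otimes Y^{T}$; the analogous statements hold for $\mcS=(N,\tau^N,A)$ and, once the relevant matrix amplification is fixed, for $M\otimes M_0$. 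Under this dictionary a local vNA-dilation differs from a local dilation in exactly two respects: (i) all quantities in \Cref{def:vNA-local-dilation} are the squared versions of those in \Cref{def:local_dilation}, which is the source of the $\epsilon\leftrightarrow\epsilon^{2}$ (resp.\ $\epsilon\leftrightarrow\sqrt\epsilon$) mismatch; and (ii) the auxiliary state $\ket{aux}$ is absorbed into the data $(M_0,P)$ together with the partial isometry $W$ instead of appearing explicitly.

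\emph{Part (b).} Starting from a local $(\epsilon,\hat\nu_A)$-vNA-dilation with data $(M_0,P,W)$, note first that since $N=B(H)$ is a factor, $P$ is supported in a single matrix summand of $(M\otimes M_0)^{\infty}$, and then $W=PW$ is supported there too; the inequality $\tau^{M\otimes M_0}(I_{M\otimes M_0}-W^*W)\le\epsilon$ forces that summand to carry trace-weight at least $1-\epsilon$, so up to replacing $\epsilon$ by $2\epsilon$ we may assume $M_0=M_m$ is a full matrix algebra and $P$ has rank $\dim H$. Then $M\otimes M_0=B(\wtd H\otimes\C^{m})$, its GNS vector reorganises as $\ket{\wtd\psi}\otimes\ket{aux}$ with $\ket{aux}$ the maximally entangled state on $\C^{m}\otimes\C^{m}$, and under GNS the partial isometry $W$ becomes a linear map $\wtd H\otimes\C^{m}\to H$ with $W^*W\approx\Id$ and $WW^*\approx\Id_H$ in the normalised $2$-norm. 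I would then take $V_A\colon H\to\wtd H\otimes\C^{m}$ to be an honest isometry obtained by an order-$\sqrt\epsilon$ perturbation of $W^*$, take $V_B$ from the transpose as in the GNS recipe $(M,\tau,A)\rightsquigarrow(\ket\psi,A,A^{T})$, and keep $\ket{aux}$ as above. The state inequality of \Cref{def:local_dilation} holds because $(V_A\otimes V_B)\kp$ differs from the GNS vector of $M\otimes M_0$ only through $W^*W\approx\Id$ and the perturbation, each of size $O(\sqrt\epsilon)$; the $A$-measurement inequality follows from \Cref{eq:tau-infinity-diff} after translating $\norm{\cdot}_{\tau^{M\otimes M_0}}$ into the vector norm against $\ket{\wtd\psi}\otimes\ket{aux}$ and using $V_A^*V_A=\Id$; and the $B$-measurement inequality follows from the \emph{same} inequality \Cref{eq:tau-infinity-diff} via the transpose symmetry of ME strategies (this is precisely why \Cref{def:local_dilation} is stated with separate $A$- and $B$-conditions). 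Tracking the triangle inequalities, with one use of $\norm{\sum_i(A_i-\wtd A_i)^2}_\infty\le4$ from \Cref{lem:op-norm-bound-game-poly}, yields the constant $(4+\sqrt2)\sqrt\epsilon$. The only real work here is the bookkeeping and the reduction of $M_0$ to a factor.

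\emph{Part (a).} This is the harder direction, and the one place where maximal entanglement of \emph{both} $\kp$ and $\ket{\wtd\psi}$ is essential. Start from a local $(\epsilon,\hat\nu)$-dilation $(V_A,V_B,\ket{aux})$. Since $\kp$ is maximally entangled, the reduced state of $(V_A\otimes V_B)\kp$ on $\wtd H_A\otimes K_A$ is $\tfrac1{\dim H}V_AV_A^*$, a scalar multiple of a rank-$\dim H$ projection; comparing this with the reduced state $\tfrac1{\dim\wtd H}\Id_{\wtd H}\otimes\rho^A_{aux}$ of $\ket{\wtd\psi}\otimes\ket{aux}$, and using that the state inequality controls the $\ell^2$-distance of the sorted Schmidt-coefficient vectors by $\epsilon$, forces the ``tail'' weight of $\rho^A_{aux}$ beyond its top $\dim H/\dim\wtd H$ eigenvalues, as well as the weighted squared deviation of those eigenvalues from their common value, to be $O(\epsilon^{2})$; in particular $\dim\wtd H\mid\dim H$ (when this fails, the same comparison forces $\epsilon=\Omega(1/\sqrt{\dim H})$ and a separate, direct ``compression'' construction with a small $M_0$ produces a vNA-dilation whose unavoidable defect $O(1/\dim H)$ is then absorbed into $1700\epsilon^{2}$). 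Consequently, after truncating the small Schmidt components of $\ket{aux}$ and rounding the rest — each an $O(\epsilon)$ change at the vector level, hence only an $O(\epsilon^{2})$ contribution to the squared bound — one is reduced to $K_A=K_B=\C^{m}$ with $m=\dim H/\dim\wtd H$, $V_A,V_B\colon H\to\wtd H\otimes\C^{m}$ unitaries, and $\ket{aux}$ exactly maximally entangled. There I put $M_0=M_m$, $P=I_{M\otimes M_0}$, $W=\Id$, and identify $N\cong P(M\otimes M_0)^{\infty}P$ via $X\mapsto V_AXV_A^*$: then $\tau^N=\tau^{\infty}/\tau^{\infty}(P)$ and $\tau^N(P-WW^*)=\tau^{M\otimes M_0}(I_{M\otimes M_0}-W^*W)=0$ automatically, and the remaining inequality $\expect{x\sim\hat\nu_A}\sum_a\norm{\wtd A^x_a\otimes I_{M_0}-V_AA^x_aV_A^*}^2_{\tau^{M\otimes M_0}}\le 4\epsilon^{2}$ follows by using the state inequality to replace $\ket{\wtd\psi}\otimes\ket{aux}$ by $(V_A\otimes V_B)\kp$ in the $A$-measurement inequality of \Cref{def:local_dilation}, pushing $A^x_a$ past $V_A$ via $V_A^*V_A=\Id$, and noting that (since $\ket{aux}$ is maximally entangled) the reduced state of $\ket{\wtd\psi}\otimes\ket{aux}$ on $\wtd H_A\otimes\C^{m}$ is the maximally mixed state $\tfrac1{\dim H}\Id$, so the state-weighted squared norm \emph{is} the $\tau^{M\otimes M_0}$-norm; the sums over $a$ are handled by \Cref{lem:op-norm-bound-game-poly}, as $\{V_AA^x_aV_A^*\}_a$ and $\{\wtd A^x_a\otimes I_{M_0}\}_a$ are incomplete POVMs. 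Assembling the constants from the cleanup and the triangle inequalities produces $1700\epsilon^{2}$.

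\emph{Main obstacle.} The technical heart is in part (a): one must show that the reduction to ``$\ket{aux}$ maximally entangled'' costs only $O(\epsilon)$ at the vector level — which works precisely because maximal entanglement of $\kp$ forces the tail and spread of $\rho^A_{aux}$ to be $O(\epsilon^{2})$, not merely $O(\epsilon)$, so that the cost enters the \emph{squared} bound as $O(\epsilon^{2})$ rather than degrading it — together with a careful treatment of the non-divisible case $\dim\wtd H\nmid\dim H$. Everything else is routine bookkeeping of triangle inequalities and applications of \Cref{lem:op-norm-bound-game-poly}.
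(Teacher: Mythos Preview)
Your outline for part~(b) is broadly sound; the reduction to $M_0$ a factor (using that $N=B(H)$ forces $P$ into a single summand) is legitimate, and from there extending $W^*$ to an isometry and reading off $V_B$ via the GNS symmetry is the right mechanism. The paper takes a slightly different route---it does \emph{not} reduce $M_0$ to a factor but instead pads $W^*$ to an isometry by adjoining an extra summand $M_1$, and then appeals to uniqueness of the GNS construction to produce the unitary on Bob's side. Either works.

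Part~(a) has a genuine gap. After rounding $\ket{aux}$ to a maximally entangled state in $K_A'\otimes K_B'$ (which is indeed an $O(\epsilon)$ perturbation at the vector level), you assert that ``one is reduced to $V_A,V_B$ unitaries $H\to\wtd H\otimes\C^m$''. But changing $\ket{aux}$ does nothing to $V_A$: it still maps into the \emph{larger} space $\wtd H\otimes K_A$. To land in $\wtd H\otimes K_A'$ you must compose with the projection $P_A$ onto $\wtd H\otimes K_A'$, and $P_AV_A$ is no longer an isometry. The paper handles exactly this: it takes the polar part $V_A'$ of $P_AV_A$, which is only a \emph{partial} isometry, and proves several estimates (their \Cref{lem:partial-iso}) controlling both defects $\Id-(V_A')^*V_A'$ and $\Id-V_A'(V_A')^*$ and the effect on the measurement inequalities. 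The vNA partial isometry is then $W=(V_A')^*$ with genuinely nonzero defects \Cref{eq:iso-proj}; one cannot in general arrange $W=\Id$ and $P=I_{M\otimes M_0}$ as you propose.

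This is also why your case split on $\dim\wtd H\mid\dim H$ is the wrong move. Your lower bound $\epsilon^2\gtrsim r(n-r)/(nd)$ in the non-divisible case is correct, and by choosing $m=\lfloor d/n\rfloor$ or $\lceil d/n\rceil$ one can make one of the two trace-defects $\le 2\epsilon^2$. But that construction picks $W$ purely for dimension reasons and gives no control whatsoever on the measurement term $\expect{x}\sum_a\norm{\wtd A^x_a\otimes I_{M_0}-W^*A^x_aW}^2_\tau$, so the ``separate compression'' does not finish the argument. The paper's partial-isometry route treats both cases at once, with the dimension mismatch absorbed into the small defect of $V_A'$ rather than handled by a side construction.
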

\begin{remark}
    We prove this lemma in the appendix because it takes quite some effort. This is mainly due to the fact that an auxiliary state in a local dilation does not have to be maximally entangled, even if both $\mcS$ and $\tilde{\mcS}$ are ME strategies. As we want to obtain a tracial state on $M\otimes M_0$, this is a problem. We get around this by proving that the auxiliary state can always be taken to be maximally entangled (see Lemmas \ref{lem:me-aux} and \ref{lem:partial-iso}), but this comes at a price in terms of the constant for the approximate local dilation, which is the reason why the constant 1700 appears in the theorem. It is good to note that the auxiliary state is not required to be fully maximally entangled if the auxiliary von Neumann algebra $M_0$ is not a factor, i.e. $B(H_0)$ for some Hilbert space $H_0$. Unfortunately, it is unclear how this freedom can be exploited.
\end{remark}

We now formulate the first lifting result for robust self-testing, where we show that we can use robust self-testing for the class of PME strategies to say something about almost synchronous ME strategies as well. Note that the most natural case is when $\hat{\nu}=\nu$, but to apply this result to the Quantum Low Degree Test, we need this more general version.

\begin{lemma}\label{lem:ME-robust-self-testing-vNA}
Let $G=(\mcX,\nu,\mcA,\lambda)$ be a symmetric game with the marginal distribution $\nu_A$ on $\mcX$ and let $\hat{\nu}$ be a symmetric probability distribution on $\mcX\times\mcX$ and $c\geq 1$ such that $\hat{\nu}_A\leq c\nu_A$, where $\hat{\nu}_A$ is the marginal distribution of $\hat{\nu}$ on $\mcX$. Suppose $\wtd{\mcS}=(M,\tau^M,\wtd{A})$ is a PME strategy that is optimal for $G$ and $G$ $(\kappa,\hat{\nu})$-PME-robustly self-tests $\wtd{\mcS}$. Then for any ME strategy $\mcS=(N,\tau^N,A)$ with $\delta=\dsync(\mcS,\nu_A)$ and $\omega(G;\mcS)\geq \omega(G,\tilde{S})-\epsilon$, $\wtd{\mcS}$ is a $(\kappa'(\epsilon,\delta),\hat{\nu})$-local dilation of $\mcS$,
where $\kappa'(\epsilon,\delta)=\kappa\left(24\sqrt{\delta}+\epsilon\right)+9c\delta$.
\end{lemma}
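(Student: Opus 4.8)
The plan is to reduce to the PME case by replacing $\mcS$ with a nearby PME strategy, apply the hypothesis to that strategy, and then transport the conclusion back by composing local dilations.

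First I would construct, from the ME strategy $\mcS=(N,\tau^N,A)$, a PME strategy $\mcS'=(N,\tau^N,A')$ living on the \emph{same} tracial von Neumann algebra and GNS state, by orthogonalising the POVMs $\{A^x_a\}$. Because $\mcS$ is an ME strategy its reduced density is the maximally mixed state, so \Cref{eq:dsync-vna} gives $\expect{x\sim\nu_A}\sum_a\tau^N\bigl(A^x_a-(A^x_a)^2\bigr)=\delta$. Replacing each $A^x_a$ by an appropriate spectral projection (using the elementary estimate $\operatorname{dist}(t,\{0,1\})^2\le 2t(1-t)$ on $[0,1]$) and then completing the resulting nearly-orthogonal family to a genuine PVM inside $N$, one obtains $A'$ with $\expect{x\sim\nu_A}\sum_a\norm{A^x_a-A'^x_a}_{\tau^N}^2=\mcO(\delta)$, with constants independent of $\abs{\mcX}$ and $\abs{\mcA}$. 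This is the ME-specialised version of the orthogonalisation underlying \Cref{lem:nearby-proj-strat} and \Cref{thm:Vidick}(v): a maximally mixed reduced density is what turns the usual fractional-power dependence on $\delta$ into a linear one. Since the construction stays inside $N$ and does not touch the state, $\mcS'$ is a PME strategy (symmetric, projective, maximally entangled) and $\omega(G;\mcS')\le\omega_q(G)$.

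Next I would compare winning probabilities. Feeding the measurement-perturbation bound into \Cref{lem:subs-meas-ops-corr-est} (with $\rho_A=\rho_B$ the maximally mixed state, so $\gamma_A=\gamma_B=\mcO(\delta)$) and then \Cref{lem:win-prob-est}, one gets $\abs{\omega(G;\mcS)-\omega(G;\mcS')}\le 24\sqrt\delta$, the dominant term being $4\sqrt{\gamma_A}+4\sqrt{\gamma_B}$ and the $12\delta$ asynchronicity term absorbed. As $\wtd{\mcS}$ is optimal and $\omega(G;\mcS)\ge\omega_q(G)-\epsilon$, this yields $\omega(G;\mcS')\ge\omega_q(G)-(\epsilon+24\sqrt\delta)$. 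The $(\kappa,\hat\nu)$-PME-robust self-testing hypothesis, applied to the PME strategy $\mcS'$, then gives that $\wtd{\mcS}$ is a local $(\kappa(\epsilon+24\sqrt\delta),\hat\nu)$-dilation of $\mcS'$. Separately, $\mcS$ and $\mcS'$ share the same maximally entangled state, so taking identity isometries and no auxiliary system, and using the transpose identity $(\Id\otimes X^T)\kp=(X\otimes\Id)\kp$ for the maximally entangled $\kp$ together with $\hat\nu_A\le c\nu_A$, the only nonzero contribution is the measurement term, which is controlled by $\expect{x\sim\hat\nu_A}\sum_a\norm{A^x_a-A'^x_a}_{\tau^N}^2\le c\cdot\mcO(\delta)$; tracking the constants from the first step bounds this so that $\mcS'$ is a local $(9c\delta,\hat\nu)$-dilation of $\mcS$. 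Composing the two local dilations via the triangle inequality (compose the isometries, tensor the auxiliary vectors, and use Minkowski for the root-mean-square measurement conditions) shows that $\wtd{\mcS}$ is a local $\bigl(\kappa(\epsilon+24\sqrt\delta)+9c\delta,\hat\nu\bigr)$-dilation of $\mcS$, which is the claim. Note that $G$ is only assumed symmetric, so $\delta$ enters as an independent parameter rather than being forced small by $\epsilon$; this is why $\kappa'$ depends on both $\epsilon$ and $\delta$.

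The main obstacle is the first step: producing the nearby PME strategy with both quantitative guarantees — a correlation/winning-probability error of order $\sqrt\delta$ \emph{and} a measurement $L^2$-error of order $\delta$ — and, crucially, with \emph{absolute} constants. The fractional-power losses in the general orthogonalisation statements (\Cref{lem:nearby-proj-strat}, \Cref{thm:Vidick}) must be avoided by exploiting that the reduced density is maximally mixed, and the completion of an almost-orthogonal family of spectral projections to an honest PVM must be carried out with a loss that does not grow with the number of outcomes; otherwise the constants would depend on the size of $G$, which would defeat the purpose of the lemma. The remaining steps — the winning-probability comparison and the composition of local dilations — are routine given the results already quoted in the excerpt.
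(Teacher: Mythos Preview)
Your approach is exactly the paper's. The orthogonalisation obstacle you flag is dispatched by a black-box citation to \cite[Theorem~1.2]{dlS22a}: for each $x$ it produces a PVM $\{P^x_a\}\subset N$ with $\sum_a\norm{A^x_a-P^x_a}_{\tau^N}^2\le 9\delta_x$, where $\delta_x=1-\sum_a\tau^N\bigl((A^x_a)^2\bigr)$, so averaging over $x\sim\nu_A$ gives $\gamma\le 9\delta$ with a universal constant independent of $\abs{\mcA}$. After that, the paper applies \Cref{lem:subs-meas-ops-corr-est} with $\mcS'$ (rather than $\mcS$) in the role of the base strategy, so that the asynchronicity term is $12\,\dsync(\mcS';\nu_A)=0$ and one obtains cleanly $\abs{\omega(G;\mcS)-\omega(G;\mcS')}\le 8\sqrt{\gamma}=24\sqrt{\delta}$; the remainder is precisely your triangle-inequality step, keeping the same isometries $V_A,V_B$ throughout --- no genuine composition of dilations is needed since $\mcS$ and $\mcS'$ share the GNS state.
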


\begin{proof}
    Let $\delta_x:=1-\sum_a\tau^N\big((A^x_a)^2\big)$ for all $x\in\mcX$. Then the asynchronicity of $\mcS$ is $\delta:=\dsync(\mcS;\nu_A)=\expect{x\sim \nu_A}\delta_x$.
    By \cite[Theorem 1.2]{dlS22a}, there is a family of PVMs $P=\{P^x_a:a\in\mcA\}_{x\in\mcX}$ in $N$ such that
    \begin{equation*}
    \sum_a\norm{A^x_a-P^x_a}_{\tau^N}^2\leq 9\delta_x
\end{equation*}
for all $x\in\mcX$. So
\begin{align}
    \gamma:=\expect{x\sim\nu_A}\sum_{a}\norm{A^x_a-P^x_a}_{\tau^N}^2\leq9\delta.\label{eq:dls22a}
\end{align}
Consider the PME strategy $\mcS':=(N,\tau^N,P)$. Since $\dsync(\mcS',\nu_A )=0$ and both $\mcS$ and $\mcS'$ are symmetric, by \Cref{lem:subs-meas-ops-corr-est,lem:win-prob-est},
\begin{align*}
    \abs{\omega(G;\mcS)-\omega(G;\mcS')}\leq 8\sqrt{\gamma}=24\sqrt{\delta}.
\end{align*}
Hence, $\omega(G;\mcS')\geq \omega(G;\tilde{\mcS})-\epsilon-24\sqrt{\delta}$. Let $\lambda=\epsilon+24\sqrt{\delta}$. Let $\kp\in H\otimes H$ and $\kpt\in \tH\otimes \tH$ be GNS states for $(N,\tau^N)$ and $(M,\tau^M)$, respectively. Since $G$ $(\kappa,\hat{\nu})$-PME robustly self-tests $\wtd{\mcS}$, there are isometries $V_A:H\arr \wtd{H}\otimes K_A$ and $V_B:H\arr\wtd{H}\otimes K_B$ and a unit vector $\ket{aux}\in K_A\otimes K_B$ such that
    \begin{align*}
    \norm{(V_A\otimes V_B)\ket
    \psi-\ket{\wtd{\psi}}\otimes\ket{aux}}&\leq \kappa\left(\lambda\right),\\
        \left(\expect{x\sim \hat{\nu}_A}\sum_{a}  \norm{(V_A\otimes V_B)(P^x_a\otimes \Id)\ket{\psi} - \big((\wtd{A}^x_a\otimes\Id)\ket
        {\wtd{\psi}}\big)\otimes\ket{aux}}^2    \right)^{1/2}&\leq \kappa\left(\lambda\right),\\
        \left(\expect{y\sim \hat{\nu}_A}\sum_b  \norm{(V_A\otimes V_B)(\Id\otimes (P^y_b)^T)\ket{\psi} - \big((\Id\otimes(\wtd{A}^y_b)^T)\ket
        {\wtd{\psi}}\big)\otimes\ket{aux}}^2    \right)^{1/2}&\leq \kappa\left(\lambda\right).
    \end{align*}
Since $V_A\otimes V_B$ is an isometry, together with \Cref{eq:dls22a} and the fact that $\hat{\nu}_A\leq c\nu_A $, we have
\begin{align*}
    \left(\expect{x\sim \hat{\nu}_A}\sum_{a}  \norm{(V_A\otimes V_B)(A^x_a\otimes \Id)\ket{\psi} - \big((\wtd{A}^x_a\otimes\Id)\ket
        {\wtd{\psi}}\big)\otimes\ket{aux}}^2    \right)^{1/2}&\leq \kappa\left(\lambda\right)+9c\delta,
\end{align*}
and by symmetry
\begin{equation*}
    \left(\expect{y\sim \hat{\nu}_A}\sum_b  \norm{(V_A\otimes V_B)(\Id\otimes (A^y_b)^T)\ket{\psi} - \big((\Id\otimes(\wtd{A}^y_b)^T)\ket
        {\wtd{\psi}}\big)\otimes\ket{aux}}^2    \right)^{1/2}\leq \kappa\left(\lambda\right)+9c\delta.
\end{equation*}
As $\lambda=24\sqrt{\delta}+\epsilon$, we conclude that $\wtd{\mcS}$ is a $\left( \kappa\left(24\sqrt{\delta}+\epsilon\right)+9c\delta, \hat{\nu} \right)$-local dilation of $\mcS$.
\end{proof}

The next lemma shows that local $(\epsilon,\nu)$-vNA-dilations can equivalently be defined in terms of the distance between the measurement operators in $N$ instead of in $M\otimes M_0$. 
\begin{lemma}\label{lem:vNA-close-strat-N-measure}
    Let $\tilde{\mcS}=(M,\tau^M,\tA)$ and $\mcS=(N,\tau^N,A)$ be strategies and $\epsilon\geq 0$. Suppose that there exist a finite dimensional von Neumann algebra $M_0$ with tracial state $\tau^{M_0}$, a projection $P\in (M\otimes M_0)^{\infty}$ of finite trace such that $N\cong P(M\otimes M_0)^{\infty}P$ and $\tau^N=\tau^{\infty}/\tau^{\infty}(P)$, and a partial isometry $W\in P(M\otimes M_0)^{\infty}I_{M\otimes M_0}$ such that
    \begin{equation*}
        \tau^N(P-WW^*)\leq \epsilon, \tau^{M\otimes M_0}(I_{M\otimes M_0}-W^*W)\leq \epsilon.
    \end{equation*}
    Then for all $x\in\mcX$ we have
    \begin{align*}
        \sum_a\norm{\wtd{A}^x_a\otimes I_{M_0}-W^* A^x_a W}^2_{\tau^{M\otimes M_0}}&\leq2\epsilon+\frac{1}{1-\epsilon} \sum_a\norm{W(\wtd{A}^x_a\otimes I_{M_0})W^*-A^x_a}^2_{\tau^{N}}\text{ and}\\
        \sum_a\norm{W(\wtd{A}^x_a\otimes I_{M_0})W^*-A^x_a}^2_{\tau^{N}}&\leq \frac{2\epsilon}{1-\epsilon}+ \frac{1}{1-\epsilon} \sum_a\norm{\wtd{A}^x_a\otimes I_{M_0}-W^* A^x_a W}^2_{\tau^{M\otimes M_0}}.
    \end{align*}
\end{lemma}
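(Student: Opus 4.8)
The plan is to carry out the whole argument inside the ambient tracial von Neumann algebra $\mcM := (M\otimes M_0)^{\infty}$ with its (semifinite) trace $T := \tau^{\infty}$, noting that everything that appears lives in the finite corners $Q\mcM Q$ and $P\mcM P$, where $Q := I_{M\otimes M_0}$. Set $e := W^{*}W$ (a subprojection of $Q$), $h := WW^{*}$ (a subprojection of $P$), and $f := Q-e$, $g := P-h$. The hypotheses give $T(f)=\tau^{M\otimes M_0}(I_{M\otimes M_0}-W^{*}W)\le\epsilon$ and $T(g)=\tau^{\infty}(P-WW^{*})=\tau^{\infty}(P)\,\tau^{N}(P-WW^{*})\le \tau^{\infty}(P)\epsilon$; combining $T(P)=T(h)+T(g)$ with $T(h)=T(e)=1-T(f)\le 1$ gives $T(P)(1-\epsilon)\le 1$, so $1-\epsilon\le\tau^{\infty}(P)\le\frac{1}{1-\epsilon}$. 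Throughout, write $\wtd P_a := \wtd A^{x}_a\otimes I_{M_0}\in Q\mcM Q$ and $R_a := A^{x}_a\in P\mcM P$; these are POVMs in their respective corners, so $0\le\wtd P_a\le Q$ and $0\le R_a\le P$. Recall also $\norm{y}_{\tau^{\infty}}$ agrees with $\norm{y}_{\tau^{M\otimes M_0}}$ on $Q\mcM Q$ and equals $\sqrt{\tau^{\infty}(P)}\,\norm{y}_{\tau^{N}}$ on $P\mcM P=N$.

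The technical heart is a \emph{compression estimate}: for a POVM $\{S_a\}$ in a finite tracial von Neumann algebra $(N_0,\tau_0)$ and any projection $p\in N_0$, one has $\sum_a\norm{S_a-pS_ap}_{\tau_0}^{2}\le 2\tau_0(1_{N_0}-p)$. I would prove this by putting $q:=1_{N_0}-p$, noting $S_a-pS_ap = qS_a+pS_aq$, expanding the square and using $pq=0$ together with trace cyclicity to get $\tau_0((S_a-pS_ap)^{2})=\tau_0(qS_aqS_a)+2\tau_0(S_apS_aq)$; summing over $a$ collapses this to $\tau_0\big(q\sum_a S_a(q+2p)S_a\big)=\tau_0\big(q\sum_a S_a(1_{N_0}+p)S_a\big)$, and since $0\le 1_{N_0}+p\le 2$ and $0\le S_a\le 1_{N_0}$ we have $\sum_a S_a(1_{N_0}+p)S_a\le 2\sum_a S_a=2$, whence the bound $2\tau_0(q)$. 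Applying this in $Q\mcM Q$ with $p=e$ gives $\sum_a\norm{\wtd P_a-e\wtd P_a e}_{\tau^{M\otimes M_0}}^{2}\le 2T(f)\le 2\epsilon$, and in $P\mcM P$ with $p=h$ gives $\sum_a\norm{R_a-hR_ah}_{\tau^{\infty}}^{2}\le 2T(g)$.

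Next I would record two orthogonality (``Pythagorean'') decompositions. With $F_a := \wtd P_a-W^{*}R_aW\in Q\mcM Q$ and $Z_a := W\wtd P_aW^{*}-R_a\in P\mcM P$, the partial-isometry identities $We=W$, $W^{*}h=W^{*}$, $WW^{*}W=W$ give $F_a = W^{*}Z_aW+(\wtd P_a-e\wtd P_a e)$ with $W(\wtd P_a-e\wtd P_a e)W^{*}=0$, and symmetrically $Z_a = WF_aW^{*}-(R_a-hR_ah)$ with $W^{*}(R_a-hR_ah)W=0$. Since $W$ is trace-class ($T(W^{*}W)\le 1$), the cross terms in the corresponding squared $2$-norms vanish by cyclicity, so $\norm{F_a}_{\tau^{\infty}}^{2}=\norm{W^{*}Z_aW}_{\tau^{\infty}}^{2}+\norm{\wtd P_a-e\wtd P_a e}_{\tau^{\infty}}^{2}$ and $\norm{Z_a}_{\tau^{\infty}}^{2}=\norm{WF_aW^{*}}_{\tau^{\infty}}^{2}+\norm{R_a-hR_ah}_{\tau^{\infty}}^{2}$. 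Combining these with the elementary contractions $\norm{W^{*}XW}_{\tau^{\infty}}\le\norm{X}_{\tau^{\infty}}$, $\norm{WXW^{*}}_{\tau^{\infty}}\le\norm{X}_{\tau^{\infty}}$ (from $\norm{W}_{\infty}\le 1$ and $WW^{*},W^{*}W\le\Id$), the two compression estimates, and the norm conversions plus $1-\epsilon\le\tau^{\infty}(P)\le\frac{1}{1-\epsilon}$, gives both displayed inequalities: for the first, $\sum_a\norm{F_a}_{\tau^{M\otimes M_0}}^{2}\le\sum_a\norm{Z_a}_{\tau^{\infty}}^{2}+2\epsilon=\tau^{\infty}(P)\sum_a\norm{Z_a}_{\tau^{N}}^{2}+2\epsilon\le 2\epsilon+\frac{1}{1-\epsilon}\sum_a\norm{Z_a}_{\tau^{N}}^{2}$; for the second, $\tau^{\infty}(P)\sum_a\norm{Z_a}_{\tau^{N}}^{2}=\sum_a\norm{Z_a}_{\tau^{\infty}}^{2}\le\sum_a\norm{F_a}_{\tau^{\infty}}^{2}+2T(g)$, and dividing by $\tau^{\infty}(P)$ uses $\tfrac{2T(g)}{\tau^{\infty}(P)}=2\tau^{N}(P-WW^{*})\le 2\epsilon$ to yield $\sum_a\norm{Z_a}_{\tau^{N}}^{2}\le 2\epsilon+\frac{1}{1-\epsilon}\sum_a\norm{F_a}_{\tau^{M\otimes M_0}}^{2}$, which implies the stated bound with $\frac{2\epsilon}{1-\epsilon}$.

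I expect the only real subtlety to be bookkeeping rather than any deep obstruction: one must keep straight which operators sit in which finite corner — so that $T$ is finite there and cyclicity is legitimate, with the one nontrivial point being that $W$ itself is trace-class — and one must track the two independent sources of the factor $\frac{1}{1-\epsilon}$, namely that the normalising constant $\tau^{\infty}(P)$ need not equal $1$ and that the defect projections $f,g$ carry trace only $\le\epsilon$ (resp. $\le\tau^{\infty}(P)\epsilon$). The genuinely sharp step — producing the clean constant $2$ rather than a larger one — is the identity $q+2p=1_{N_0}+p\le 2$ inside the compression estimate, so that expansion is the one place where care is essential.
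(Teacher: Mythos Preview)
Your proof is correct and uses the same ingredients as the paper's: the bound $1-\epsilon\le\tau^{\infty}(P)\le(1-\epsilon)^{-1}$, trace cyclicity, and the contraction $\|W^{*}XW\|_{\tau^{\infty}}\le\|X\|_{\tau^{\infty}}$. The paper carries out the comparison by directly expanding both squared norms and controlling the difference $\bigl|\sum_a\tau^{\infty}(W^{*}W\wtd A^{x}_aW^{*}W\wtd A^{x}_a-(\wtd A^{x}_a)^{2})\bigr|\le 2\epsilon$ via two H\"older-type steps, whereas you package the same computation as an $L^{2}$-orthogonal (Pythagorean) decomposition $F_a=W^{*}Z_aW+(\wtd P_a-e\wtd P_ae)$ together with a clean ``compression lemma'' $\sum_a\|S_a-pS_ap\|_{\tau_0}^{2}\le 2\tau_0(1-p)$; expanding your compression term shows $\sum_a\|\wtd P_a-e\wtd P_ae\|_{\tau^{\infty}}^{2}=\sum_a\tau^{\infty}\bigl((\wtd A^{x}_a)^{2}-e\wtd A^{x}_ae\wtd A^{x}_a\bigr)$, which is exactly the one-sided form of the paper's estimate. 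Your route is a bit more structured and even yields the marginally sharper additive constant $2\epsilon$ (rather than $\tfrac{2\epsilon}{1-\epsilon}$) in the second inequality before you relax it, but the underlying argument is the same.
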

\begin{proof}
    Let $x\in\mcX$ be arbitrary. Let us identify $\tA_a^x\otimes \Id$ and $\tA_a^x$. Note that
    \begin{align*}
        \left|\sum_a\tau^{\infty}\left(W^*W\tA^x_aW^*W\tA_a^x-(\tA_a^x)^2\right)\right|\leq& \left|\sum_a\tau^{\infty}\left(W^*W\tA^x_aW^*W\tA_a^x-\tA_a^xW^*W\tA_a^x\right)\right|\\
        &+\left|\sum_a\tau^{\infty}\left(\tA^x_aW^*W\tA_a^x-(\tA_a^x)^2\right)\right|\\
        \leq& \tau^{\infty}(I_{M\otimes M_0}-W^*W)\norm{\sum_a(\tA_a^x)^2}_{\infty}\\
        &+\tau^{\infty}(I_{M\otimes M_0}-W^*W)\norm{\sum_a\tA^x_aW^*W\tA_a^x}_{\infty}\\
        \leq& 2\epsilon.
    \end{align*}
    Therefore, we have
    \begin{align*}
        \sum_a\tau^{M\otimes M_0}\left((\wtd{A}^x_a-W^* A^x_a W)^2\right)=&\sum_a\tau^{\infty}\Big((\tA_a^x)^2+WW^*A_a^xWW^*A_a^x\\
        &-\tA_a^xW^*A_a^xW-W^*A_a^xW\tA_a^x\Big)\\
        \leq& 2\epsilon+\sum_a\tau^{\infty}\Big(W^*W\tA_a^xW^*W\tA_a^x+(A_a^x)^2-2\tA_a^xW^*A_a^xW\Big)\\
        =&2\epsilon+\sum_a\tau^{\infty}(P)\tau^N\left(\left(W\wtd{A}^x_aW^*-A^x_a\right)^2\right)
    \end{align*}
    and analogously
    \begin{equation*}
        \tau^N\left((W\wtd{A}^x_aW^*-A^x_a)^2\right)\leq \frac{2\epsilon}{\tau^{\infty}(P)}+\frac{1}{\tau^{\infty}(P)}\tau^{M\otimes M_0}\left((\wtd{A}^x_a-W^* A^x_a W)^2\right).
    \end{equation*}
    Since $1-\epsilon\leq \tau^\infty(P)\leq (1-\epsilon)^{-1}$ by Lemma \ref{lem:ME-strat-dim-est}, we have shown the desired result.
\end{proof}

\section{Robust self-testing based on the spectral gap of the game polynomial}\label{sec4}
In this section we show that a PME-robust self-test whose game polynomial has spectral gap is automatically a robust self-test. The proof consists of three steps. First, we show that any almost perfect strategy $\mcS$ for a PME-robust self-test can be approximately decomposed into `orthogonal' symmetric maximally entangled strategies $\mcS_i$ with high winning probability. Next, we use the PME-robust self-testing properties of the game to find partial isometries from $\mcS_i$ to the ideal strategy and combine and extend them to obtain isometries from $\mcS$ to the ideal strategy. The spectral gap of the game polynomial then allows us to conclude that these isometries form a local dilation.

In the first step of the proof, we will use \cite[Theorem 3.1]{Vid22}, presented as Theorem \ref{thm:Vidick} in this paper, and intermediate results of its proof to construct the decomposition into the strategies $\mcS_i$. One of these intermediate results provides a bound on the commutator of the measurement operators of $\mcS$ and the spectral projections of its reduced density matrix. To make proper use of this, we need the following lemma, which records its consequences.

\begin{lemma}\label{lem:comm-proj-PVM}
	Let $\{A^x_a\}$ be a collection of PVMs on $H_A$ and $\nu$ a symmetric probability measure on $\mcX\times \mcX$ with marginal $\nu_A$ on $\mcX$. Let $\gamma\geq 0$ and let $R$ be a projection on $H_A$ such that
	\begin{equation*}
		\expect{x\sim \nu_A}\sum_a\frac{1}{\Tr(R)}\norm{A_a^{ x}R-RA_a^{x}}_2^2\leq \gamma.
	\end{equation*}
     If $\kp\in (RH_A)\otimes (RH_A)$ is a maximally entangled state, then
    \begin{equation*}
        \dsync(\mcS,\nu_A)=\expect{x\sim\nu_A}\sum_a\frac{1}{\Tr(R)}\big|\Tr(RA_a^x)-\Tr(RA_a^xRA_a^x)\big|\leq \frac{1}{2}\gamma
    \end{equation*}
    for the ME strategy $\mcS=(\kp,\{RA^x_aR\})$.
\end{lemma}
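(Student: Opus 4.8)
The plan is to expand $\dsync(\mcS,\nu_A)$ using the explicit form of the maximally entangled state and then reduce everything to the commutator bound in the hypothesis. First I would observe that for the ME strategy $\mcS=(\kp,\{RA^x_aR\})$ on $(RH_A)\otimes(RH_A)$ with $\kp$ maximally entangled, Ando's formula (or the direct GNS description) gives $\dsync(\mcS,\nu_A)=1-\expect{x\sim\nu_A}\sum_a \tr_R\big((RA^x_aR)^2\big)$, where $\tr_R$ is the normalised trace on $B(RH_A)$, i.e. $\tr_R(\cdot)=\frac{1}{\Tr(R)}\Tr(R\,\cdot\,)$. Since $\{A^x_a\}$ is a PVM, $\sum_a A^x_a=\Id$ and $(A^x_a)^2=A^x_a$, so $\expect{x}\sum_a \tr_R(RA^x_aR)=\expect{x}\tr_R(R)=1$; hence
\begin{equation*}
\dsync(\mcS,\nu_A)=\expect{x\sim\nu_A}\sum_a \tr_R\big(RA^x_aR-(RA^x_aR)^2\big)=\expect{x\sim\nu_A}\sum_a \frac{1}{\Tr(R)}\big(\Tr(RA^x_a)-\Tr(RA^x_aRA^x_a)\big),
\end{equation*}
which gives the stated equality (each summand is nonnegative since $RA^x_aR\leq \Id$ as an operator on $RH_A$, so the absolute values are harmless).

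Next I would bound each summand by the commutator. Using $A^x_a=(A^x_a)^2$ and cyclicity of the trace,
\begin{equation*}
\Tr(RA^x_a)-\Tr(RA^x_aRA^x_a)=\Tr\big(RA^x_a(\Id-R)A^x_a\big)=\Tr\big(A^x_a(\Id-R)A^x_aR\big).
\end{equation*}
Since $R$ is a projection and $A^x_a$ commutes with nothing in particular, I would rewrite $A^x_aR-RA^x_a$ and note $R(A^x_aR-RA^x_a)=RA^x_aR-RA^x_a$ while $(A^x_aR-RA^x_a)(\Id-R)$ picks out the relevant block; concretely, $\Tr\big(A^x_a(\Id-R)A^x_aR\big)=\Tr\big((\Id-R)A^x_aR\cdot A^x_a\big)$ and one checks $(\Id-R)A^x_aR = (\Id-R)(A^x_aR-RA^x_a)$, so this equals $\Tr\big((A^x_aR-RA^x_a)^*(\Id-R)A^x_a(\Id-R)(A^x_aR-RA^x_a)^{\,?}\big)$ — rather than push this identity too hard, the clean route is: $\Tr(RA^x_a(\Id-R)A^x_a)=\norm{(\Id-R)A^x_aR}_2^2 \le \norm{A^x_aR-RA^x_a}_2^2$ because $(\Id-R)A^x_aR=(\Id-R)(A^x_aR-RA^x_a)$ and $\Id-R$ is a contraction. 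Dividing by $\Tr(R)$, summing over $a$, and averaging over $x\sim\nu_A$ yields
\begin{equation*}
\dsync(\mcS,\nu_A)\leq \expect{x\sim\nu_A}\sum_a\frac{1}{\Tr(R)}\norm{A^x_aR-RA^x_a}_2^2 = ?
\end{equation*}
which is $\gamma$, not $\gamma/2$. To recover the factor $1/2$ I would instead symmetrise: writing $\Tr(RA^x_a(\Id-R)A^x_a)=\frac12\Tr\big((A^x_aR-RA^x_a)^*(A^x_aR-RA^x_a)\big)$, which holds exactly because expanding $\norm{A^x_aR-RA^x_a}_2^2=\Tr(RA^x_aRA^x_a)-\Tr(RA^x_a R A^x_a R)+\dots$ — more carefully, $\norm{[A^x_a,R]}_2^2=\Tr\big((A^x_aR-RA^x_a)(RA^x_a-A^x_aR)\big)=2\Tr(RA^x_a(\Id-R)A^x_a)$ using $R^2=R$ and $(A^x_a)^2=A^x_a$ and the trace's cyclicity. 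Hence each summand equals exactly $\frac{1}{2\Tr(R)}\norm{A^x_aR-RA^x_a}_2^2$, and summing/averaging gives $\dsync(\mcS,\nu_A)\le\frac12\gamma$.

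I expect the only real subtlety to be the bookkeeping in that last identity $\norm{[A^x_a,R]}_2^2=2\Tr(RA^x_a(\Id-R)A^x_a)$ — it uses both that $A^x_a$ is a \emph{projection} (so that $(A^x_a)^2=A^x_a$) and that $R$ is a projection, and without projectivity one would only get an inequality with a worse constant. Everything else is routine: the identification of $\dsync$ with $1-\sum_a\tr_R((RA^x_aR)^2)$ is \Cref{eq:dsync-vna} applied in the algebra $B(RH_A)$ with its normalised trace, and the nonnegativity of each term justifies dropping the absolute value signs in the statement.
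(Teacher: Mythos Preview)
Your proposal is correct and follows essentially the same route as the paper: both compute $\dsync(\mcS,\nu_A)=\expect{x}\sum_a\frac{1}{\Tr(R)}\big(\Tr(RA_a^x)-\Tr(RA_a^xRA_a^x)\big)$ via Ando's formula and then invoke the exact identity $\norm{A_a^xR-RA_a^x}_2^2=2\big(\Tr(RA_a^x)-\Tr(RA_a^xRA_a^x)\big)$, which requires precisely that $A_a^x$ and $R$ are projections. The paper simply states this identity up front rather than first trying the cruder contraction bound that loses the factor $1/2$; you could tighten your write-up by deleting that detour.
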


\begin{proof}
    Observe that 
    \begin{equation}\label{eq:comm-lem-observation}
        \Tr(A_a^{ x}R-A_a^{ x}RA_a^{ x}R)=\frac{1}{2}\norm{A_a^{ x}R-RA_a^{ x}}_2^2.
    \end{equation}
    Since
    \begin{equation*}
    		\dsync(\mcS,\nu_A)=1-\expect{x\sim\nu_A}\sum_a\bra{\psi}A_a^x\otimes (A_a^x)^T\kp=\expect{x\sim\nu_A}\sum_a\frac{1}{\Tr(R)}\big(\Tr(RA_a^x)-\Tr(RA_a^xRA_a^x)\big)
    \end{equation*}
    by Ando's formula, we immediately see from \Cref{eq:comm-lem-observation} that $\dsync(\mcS,\nu_A)\leq \frac{1}{2}\gamma$. 
\end{proof}

We will now prove the main theorem of this section. Because the proof requires many computations, we will include several claims in the proof to guide the reader. They serve as announcements of the next step in the proof and indicate the general flow of the argument. For the remainder of this section and the next, we will be working with two probability distributions on $\mcX\times\mcX$; one determining the winning probability and one for the local dilations. The most natural situation is when both distributions are identical, but the Quantum Low Degree Test requires us to treat the case with distinct distributions.

\begin{theorem} \label{thm:constr-isom-gen-strat}
    Let $\mcS=(\kp,A,B)$ be a strategy for a perfect symmetric non-local game $G=(\mcX,\nu,\mcA,D)$ with winning probability $1-\epsilon$. Let $\rho_A$ and $\rho_B$ be the reduced density matrices on $H_A$ and $H_B$, respectively, $\nu_A$ the marginal of $\nu$ on $\mcX$ and $\delta=\dsync(\mcS,\nu_A)$. Let $\hat{\nu}$ be a symmetric probability distribution on $\mcX\times\mcX$ with marginal $\hat{\nu}_A$ on $\mcX$ and $c\geq 1$ such that $\hat{\nu}\leq c\nu$. Suppose that $G$ $(\kappa,\hat{\nu})$-PME-robustly self-tests the optimal strategy $\tilde{\mcS}=(\kpt,\tA,\tB)$ on $\tilde{H}_A\otimes \tilde{H}_B$. Then there exist Hilbert spaces $K_A$ and $K_B$ and isometries $V_A:H_A\rightarrow \tilde{H}_A\otimes K_A$ and $V_B:H_B\rightarrow \tilde{H}_B\otimes K_B$ such that for the game $G_{\hat{\nu}}=(\mcX,\hat{\nu},\mcA,D)$ the strategy $\hat{\mcS}=(\kp,V_A^*(\tA\otimes \Id_{K_A}) V_A,V_B^*(\tB\otimes \Id_{K_B}) V_B)$ has winning probability $\omega(G_{\hat{\nu}};\hat{\mcS})\geq 1-\mathcal{O}\big((c\cdot\id+\kappa)(2\epsilon+\poly(\delta))\big)$, and
    \begin{align*}
        \expect{x\sim\hat{\nu}_A}\sum_a\Tr\left(\left(A_a^{x}-V_A^*(\tA_a^x\otimes \Id_{K_A})V_A\right)^2\rho_A\right)&=\mcO\left((c\cdot\id+\kappa)^2(\poly(\delta)+2\epsilon)\right),\\
        \expect{y\sim\hat{\nu}_A}\sum_b\Tr\left(\left(B_b^{y}-V_B^*(\tB_a^x\otimes \Id_{K_B})V_B\right)^2\rho_B\right)&= \mcO\left((c\cdot\id+\kappa)^2(\poly(\delta)+2\epsilon)\right).
    \end{align*}
\end{theorem}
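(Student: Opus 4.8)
The plan is to write the near-optimal, near-synchronous strategy $\mcS$ as a classical superposition of near-perfect maximally entangled strategies, invoke the PME-robust self-test on each summand, and glue the resulting local dilations into a single pair of isometries $(V_A,V_B)$; the $\rho_A$- and $\rho_B$-weighted distances appearing in the conclusion are exactly the quantities that such a gluing controls. First I would pass to a projective strategy: by \Cref{lem:nearby-proj-strat} there is a projective $\mcS'=(\kp,A',B')$ on the same state with $\dsync(\mcS',\nu_A)=\mcO(\delta^{1/8})$, with $A'$ (resp.\ $B'$) within $\mcO(\delta^{1/4})$ of $A$ (resp.\ $B$) in the $\rho_A$- (resp.\ $\rho_B$-) weighted $2$-norm, and with winning probability within $\mcO(\delta^{1/8})$ of $\omega(G;\mcS)$. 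Any estimate proved for $\mcS'$ then converts to one for $\mcS$ after replacing $\delta$ by $\poly(\delta)$ and adding $\poly(\delta)$ to the sub-optimality, so from now on I assume $\mcS$ is projective.

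Applying \Cref{thm:Vidick} to $\mcS$, and (by the remark following it) also to the $B$-side, produces the spectral subspaces $P_\lambda H_A$ of $\rho_A$, the probability measure $\mu$ with $d\mu(\lambda)=\Tr(P_\lambda)\,d\lambda$, and PME strategies $\mcS'_\lambda=(\ket{\psi_\lambda},A^\lambda)$ on $P_\lambda H_A\otimes P_\lambda H_A$ such that: $A^{\lambda,x}_a$ approximates $A^x_a$ on the $\lambda$-th layer in the $\rho_\lambda$-weighted norm (averaged over $\lambda$, with error $\poly(\delta)$); $\int\omega(G;\mcS'_\lambda)\,d\mu(\lambda)\ge 1-\epsilon-\poly(\delta)$; and, by \Cref{rmk:comm-meas-proj}, $A^x_a$ approximately commutes with every $P_\lambda$ with total measure-weighted commutator error $\mcO(\sqrt{\delta})$. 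Since $\hat\nu\le c\nu$, the $\hat\nu$-weighted versions of all these bounds hold with an extra factor $c$. Writing $\epsilon_\lambda=1-\omega(G;\mcS'_\lambda)$, so $\int\epsilon_\lambda\,d\mu(\lambda)\le\epsilon+\poly(\delta)$, the PME-robust self-test makes $\tilde{\mcS}=(\kpt,\tA,\tB)$ a local $(\kappa(\epsilon_\lambda),\hat\nu)$-dilation of each $\mcS'_\lambda$; read through the von Neumann formulation of \Cref{sec3} (\Cref{lem:PME-robust-self-testing-vNA}, \Cref{lem:vNA-close-strat-N-measure}), this provides isometries $V^\lambda_A\colon P_\lambda H_A\to\tH_A\otimes K^\lambda_A$, $V^\lambda_B\colon P_\lambda H_A\to\tH_B\otimes K^\lambda_B$ and a unit vector $\ket{aux_\lambda}$ with $\norm{(V^\lambda_A\otimes V^\lambda_B)\ket{\psi_\lambda}-\kpt\otimes\ket{aux_\lambda}}=\mcO(\kappa(\epsilon_\lambda))$ and the operator-level bound $\expect{x\sim\hat\nu_A}\sum_a\norm{(V^\lambda_A)^*(\tA^x_a\otimes\Id)V^\lambda_A-A^{\lambda,x}_a}^2_{\tau_\lambda}=\mcO(\kappa(\epsilon_\lambda)^2)$, where $\tau_\lambda$ is the normalised trace on $P_\lambda H_A$, together with their $B$-side analogues.

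Next I combine the $V^\lambda_A$ into one isometry $V_A\colon H_A\to\tH_A\otimes K_A$ (and the $V^\lambda_B$ into $V_B\colon H_B\to\tH_B\otimes K_B$) using the spectral decomposition of $\rho_A$: on each spectral subspace one restricts the relevant layer isometry, so that $V_A^*(\tA^x_a\otimes\Id)V_A$ is block-diagonal with respect to that decomposition. Since $\Tr(X^2\rho_A)$ decomposes over the spectral subspaces and, by the commutator bound, $A^x_a$ is $\poly(\delta)$-close in $\rho_A$-weighted norm to its own block-diagonal part, the first displayed estimate of the theorem follows by summing the layerwise bounds above --- converting $\tau_\lambda$-norms to Hilbert--Schmidt norms with the weights supplied by $\mu$, and using a Jensen/convexity step together with the a priori bound of \Cref{lem:op-norm-bound-game-poly} on low-value layers to turn $\int\kappa(\epsilon_\lambda)^2\,d\mu$ into a quantity polynomially related to $\kappa$ and $\epsilon,\delta$; the $B$-side estimate is identical. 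For the winning probability, the layerwise state estimates combine --- via the relation between $\kp$ and the $\ket{\psi_\lambda}$ coming from $\rho_A=\int\rho_\lambda\,d\mu$ and the Schmidt identification of $H_A$ with $H_B$ --- to give $(V_A\otimes V_B)\kp\approx\kpt\otimes\ket{aux}$ for a single $\ket{aux}$. As $\tilde{\mcS}$ is perfect for $G$, hence for $G_{\hat\nu}$ (using $\hat\nu\le c\nu$), writing the game polynomial of $\hat{\mcS}$ for $G_{\hat\nu}$ as $(V_A\otimes V_B)^*(T_{G_{\hat\nu},\tilde{\mcS}}\otimes\Id)(V_A\otimes V_B)$ and using $\norm{T_{G_{\hat\nu},\tilde{\mcS}}}\le 1$ and $T_{G_{\hat\nu},\tilde{\mcS}}\kpt=\kpt$ yields $\omega(G_{\hat\nu};\hat{\mcS})\ge 1-\mcO\big((c\cdot\id+\kappa)(2\epsilon+\poly(\delta))\big)$; alternatively one routes this through a correlation estimate and \Cref{lem:win-prob-est}.

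The hard part is the gluing. The spectral subspaces of $\rho_A$ are nested rather than orthogonal, so restricting and re-summing the layer isometries has to be arranged so that (i) the off--block-diagonal part of $A^x_a$ is controlled solely through the measure-weighted commutator estimate, which forces care with how $\mu$ weights the various layers; (ii) the resulting $V_A$ and $V_B$ stay \emph{compatible} on $\kp$, i.e.\ realise a common auxiliary vector $\ket{aux}$; and (iii) each step transfers from the exactly symmetric PME pieces $\mcS'_\lambda$ back to the merely approximately symmetric $\mcS$ using the smallness of $\delta$. Organising this bookkeeping --- which is the purpose of the several claims in the full proof --- is where essentially all of the work lies.
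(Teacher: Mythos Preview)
Your outline has the right shape through the Vidick decomposition, but the ``gluing'' step, which you correctly flag as the hard part, is missing the paper's central construction. You propose to restrict the layer isometries $V_A^\lambda$ to the eigenspaces of $\rho_A$ and sum. The problem is that the PME self-test gives you a bound of the form $\expect{x}\sum_a\norm{(V_A^\lambda)^*(\tA^x_a\otimes\Id)V_A^\lambda-A^{\lambda,x}_a}^2_{\tau_\lambda}\le\mcO(\kappa(\epsilon_\lambda)^2)$ in the \emph{normalised} trace on the whole of $P_\lambda H_A$; this does not localise to a single eigenspace, which may carry an arbitrarily small fraction of the trace. The paper's solution is a dyadic orthogonal partition: pick $\lambda_1<\lambda_2<\cdots$ in $\Lambda$ so that $\dim H_{\lambda_{i+1}}\le\tfrac12\dim H_{\lambda_i}$, set $Q_i=P_{\lambda_i}-P_{\lambda_{i+1}}$, and show (using the commutator estimate for both $P_{\lambda_i}$ and $P_{\lambda_{i+1}}$) that the orthogonal strategies $\mcS_i=(\ket{\psi_i},Q_iA'Q_i)$ still have high winning probability. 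Because $\Tr(Q_i)\ge\tfrac12\Tr(P_{\lambda_i})$, the self-test bound on $P_{\lambda_i}$ transfers to $Q_i$ with only a constant loss, and the partial isometries $V_i$ obtained from self-testing $\mcS_i$ sum directly since the $Q_i$ are orthogonal. Without this, I do not see how your gluing goes through.

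Two further points. First, your ``Jensen/convexity step'' to control $\int\kappa(\epsilon_\lambda)^2\,d\mu$ does not work: nothing is assumed about $\kappa$ beyond $\kappa(t)\to0$ as $t\to0$. The paper instead uses a Markov cutoff, defining a good set $\Lambda$ on which both $\epsilon_\lambda\le\epsilon+\sqrt\alpha$ and the commutator bound hold, applies $\kappa$ uniformly there, and uses \Cref{lem:op-norm-bound-game-poly} only to bound the contribution of the small-measure complement. Second, the claim that the layerwise state estimates combine to give $(V_A\otimes V_B)\kp\approx\kpt\otimes\ket{aux}$ for a single $\ket{aux}$ is not established in this theorem and is in fact false at this stage: the auxiliary vectors coming from different orthogonal blocks land in different summands of $K_A\otimes K_B$ and have no reason to align. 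The winning-probability conclusion is obtained purely through the operator-level estimates and \Cref{lem:subs-meas-ops-corr-est}; state closeness is only recovered later in \Cref{thm:lifting-PME-assumtion-if-spec-gap} using the spectral gap.
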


\begin{proof}
	Let $\mcS'=(\kp,A',B')$ be the projective strategy given by Lemma \ref{lem:nearby-proj-strat} and let $\delta'=\dsync(\mcS',\nu_A)$. By Lemma \ref{lem:nearby-proj-strat}, we know that $\delta'=\mathcal{O}(\delta^{\frac{1}{8}})$. Let the measure $\mu$ on $\mbR_+$, the projections $P_{\lambda}$ of $H_A$ onto $H_{\lambda}$ and the family of strategies $\mcS_{\lambda}=(\ket{\psi_{\lambda}},P_{\lambda}AP_{\lambda})$ on $H_{\lambda}$ with reduced density matrix $\rho_{\lambda}$ be as in Theorem \ref{thm:Vidick}. Let $C,\ C'$ and $C^{\lambda}$ be the correlations of $\mcS$, $\mcS'$ and $\mcS_{\lambda}$, respectively. Let
    \begin{equation*}
        \expect{x,y\sim \nu}\sum_{a,b}|C_{x,y,a,b}-\int_0^{\infty}C^{\lambda}_{x,y,a,b}d\mu(\lambda)|=\alpha
    \end{equation*}
    and 
    \begin{equation*}
        \expect{x\sim \nu_A}\sum_a\int_0^{\infty}\frac{1}{\Tr(P_{\lambda})}\norm{A_a^{\prime x}P_{\lambda}-P_{\lambda}A_a^{\prime x}}_2^2d\mu(\lambda)=\beta.
    \end{equation*}
    \begin{claim}\label{clm:constr-Lambda}
    		The set $\Lambda\subset\R_+$, defined by
    		\begin{equation*}
    			\Lambda=\{\lambda\geq 0|\omega(\mcS_{\lambda})\geq 1-\sqrt{\alpha}-\epsilon \text{ and } \expect{x\sim \nu_A}\sum_a\frac{1}{\Tr(P_{\lambda})}\norm{A_a^{\prime x}P_{\lambda}-P_{\lambda}A_a^{\prime x}}_2^2\leq \sqrt{\beta}\},
    		\end{equation*}
    		satisfies $\mu(\Lambda)\geq 1-\sqrt{\alpha}-\sqrt{\beta}$.
    \end{claim}

    By Theorem \ref{thm:Vidick}(iv), Lemma \ref{lem:nearby-proj-strat}, Lemma \ref{lem:win-prob-est} and the triangle inequality, we know that $\alpha\leq \poly(\delta')+\mathcal{O}(\delta^{\frac{1}{8}})\leq \poly(\delta)$. Lemma \ref{lem:win-prob-est} tells us that for any $t>0$ we have that
    \begin{equation*}
        \mu(\{\lambda\geq 0|\omega(\mcS_{\lambda})\geq1-\epsilon-t\})\geq 1-\frac{\alpha}{t}.
    \end{equation*}
    Next, Theorem \ref{thm:Vidick} and Remark \ref{rmk:comm-meas-proj} state that $\beta\leq 2\sqrt{2\delta'}\leq \poly(\delta)$, and we also have for every $s\geq 0$ that
    \begin{equation*}
        \mu(\{\lambda\geq0|\expect{x\sim \nu_A}\sum_a\frac{1}{\Tr(P_{\lambda})}\norm{A_a^{\prime x}P_{\lambda}-P_{\lambda}A_a^{\prime x}}_2^2\leq s\})\geq 1-\frac{\beta}{s}.
    \end{equation*}
    Choosing $s$ and $t$ is a trade-off between the strength of the bound and the measure of the set for which the bound holds. We choose $s=\sqrt{\beta}$ and $t=\sqrt{\alpha}$, but many choices are possible here. If we define
    \begin{equation*}
        \Lambda=\{\lambda\geq 0|\omega(\mcS_{\lambda})\geq 1-\sqrt{\alpha}-\epsilon \text{ and } \expect{x\sim \nu_A}\sum_a\frac{1}{\Tr(P_{\lambda})}\norm{A_a^{\prime x}P_{\lambda}-P_{\lambda}A_a^{\prime x}}_2^2\leq \sqrt{\beta}\},
    \end{equation*}
    we find that $\mu(\Lambda)\geq 1-\sqrt{\alpha}-\sqrt{\beta}$. From now on we will purely work with $\mcS_{\lambda}$ for $\lambda\in \Lambda$. For $\lambda\in \R_+\backslash\Lambda$ the crudest estimates will suffice.
  \clmprvd{\ref{clm:constr-Lambda}}
  
  The strategies $\mcS_{\lambda}$ for $\lambda\in \Lambda$ give an approximate decomposition of the strategy $\mcS$. However, this is not the right decomposition for us when we want to construct isometries. In essence, the current decomposition is a decomposition into smaller and smaller strategies. What we need is an approximate decomposition into `orthogonal' strategies, so our next aim is to construct this. Note that the projections $P_{\lambda}$ are ordered; $\lambda\leq \lambda'$ implies that $P_{\lambda}\geq P_{\lambda'}$, so the dimensions of the Hilbert spaces $H_{\lambda}$ are also ordered. We recursively define a partition $\{\Lambda_i\}_{i=1}^k$ of $\Lambda$ by setting $\lambda_1=\min(\Lambda)$,
    \begin{equation*}
        \Lambda_i=\{\lambda\in\Lambda|\dim H_{\lambda_i}\geq \dim H_{\lambda}>\frac{1}{2}\dim H_{\lambda_i}\}\text{ and }
        \lambda_{i+1}=\min(\Lambda\backslash\bigcup_{j=1}^i\Lambda_j),
    \end{equation*}
    where $k\in \N$ is such that $\Lambda=\bigcup_{j=1}^k\Lambda_j$. We now define projections $Q_i=P_{\lambda_i}-P_{\lambda_{i+1}}$, where $1\leq i\leq k-1$ and $Q_k=P_{\lambda_k}$ and set $H_i=Q_iH_A$. Let $\ket{\psi_i}\in H_i\otimes H_i$ be a maximally entangled state. We first aim to prove that the strategies $\mcS_i=(\ket{\psi_i},Q_iA'Q_i)$ have high winning probability. An intermediate step is to show that $Q_i$ almost commutes with $A_a^{\prime x}$ in an averaged sense.
    \begin{claim}\label{clm:comm-and-strat-props-of-Q}
    		For $1\leq i\leq k$ the $Q_i$ and $\mcS_i$ constructed above satisfy
    		\begin{equation*}
    			\expect{x\sim\nu_A}\sum_a\frac{1}{\Tr(Q_i)}\norm{A_a^{\prime x}Q_i-Q_iA_a^{\prime x}}_2^2\leq (\sqrt{2}+1)^2\sqrt{\beta}
    		\end{equation*}
    		and
    		\begin{equation*}
    			\dsync(\mcS_i,\nu_A)\leq \frac{1}{2}(\sqrt{2}+1)^2\sqrt{\beta}\text{ and }\omega(\mcS_i)\geq 1-2\epsilon-2\sqrt{\alpha}-6\sqrt{\beta}-8\sqrt{\frac{5}{2}+\sqrt{2}}\sqrt[4]{\beta}.
    		\end{equation*}
    \end{claim}
    Let $1\leq i\leq k-1$. Since $\lambda_i,\lambda_{i+1}\in \Lambda$, we know that
    \begin{equation*}
        \expect{x\sim \nu_A}\sum_a\frac{1}{\Tr(P_{\lambda})}\norm{A_a^{\prime x}P_{\lambda}-P_{\lambda}A_a^{\prime x}}_2^2\leq \sqrt{\beta}
    \end{equation*}
    for $\lambda\in \{\lambda_i,\lambda_{i+1}\}$. From this we get
    \begin{align*}
        \left(\expect{x\sim \nu_A}\sum_a\norm{A_a^{\prime x}Q_i-Q_iA_a^{\prime x}}_2^2\right)^{\frac{1}{2}}&\leq \sqrt{\Tr(P_{\lambda_i})}\sqrt[4]{\beta}+\sqrt{\Tr(P_{\lambda_{i+1}})}\sqrt[4]{\beta}\\
        &\leq (\sqrt{2}+1)\sqrt{\Tr(Q_i)}\sqrt[4]{\beta}
    \end{align*}
    by using the triangle inequality in the first inequality and the fact that $\dim H_{\lambda_i}\geq 2\dim H_{\lambda_{i+1}}$ in the second. Taking $\sqrt{\Tr(Q_i)}$ to the other side gives us
    \begin{equation}\label{eq:Qi-almost-commute}
        \left(\expect{x\sim \nu_A}\sum_a\frac{1}{\Tr(Q_i)}\norm{A_a^{\prime x}Q_i-Q_iA_a^{\prime x}}_2^2\right)^{\frac{1}{2}}\leq (\sqrt{2}+1)\sqrt[4]{\beta}.
    \end{equation}
    By Lemma \ref{lem:comm-proj-PVM}, we find that
    \begin{equation*}
        \dsync(\mcS_i,\nu_A)\leq\frac{(\sqrt{2}+1)^2}{2}\sqrt{\beta}.
    \end{equation*}

    Using \Cref{eq:Qi-almost-commute} and the fact that $\omega(\mcS_{\lambda_i}),\omega(\mcS_{\lambda_{i+1}})\geq 1-\sqrt{\alpha}-\epsilon$, we are now ready to show that $\omega(\mcS_i)\geq 1-\sqrt{\alpha}-\epsilon-(1+\frac{3}{2}\sqrt{2})\sqrt[4]{\beta}$. We will do this by showing that the winning probability of $\mcS_{\lambda_i}$ is close to the winning probability of the strategy $(\ket{\psi_{\lambda_i}},Q_iAQ_i+P_{\lambda_{i+1}}AP_{\lambda_{i+1}})$. For this, we compute
    \begin{align*}
        \expect{x\sim \nu_A}\sum_a\frac{1}{\Tr(P_{\lambda_i})}\Tr\left((P_{\lambda_i}A_a^{\prime x}P_{\lambda_i}-Q_iA_a^{\prime x}Q_i-P_{\lambda_{i+1}}A_a^{\prime x}P_{\lambda_{i+1}})^2P_{\lambda_i}\right)\\
        =\expect{x\sim \nu_A}\sum_a\frac{1}{\Tr(P_{\lambda_i})}\Tr\big(P_{\lambda_i}A_a^{\prime x}P_{\lambda_i}A_a^{\prime x}-Q_iA_a^{\prime x}Q_iA_a^{\prime x}&-P_{\lambda_{i+1}}A_a^{\prime x}P_{\lambda_{i+1}}A_a^{\prime x}\big),
    \end{align*}
    since $Q_i$ and $P_{\lambda_{i+1}}$ are orthogonal. Using Lemma \ref{lem:comm-proj-PVM} for each of the three terms, we find that
    \begin{align*}
        \expect{x\sim \nu_A}\sum_a\frac{1}{\Tr(P_{\lambda_i})}\Tr\big(&(P_{\lambda_i}A_a^{\prime x}P_{\lambda_i}-Q_iA_a^{\prime x}Q_i-P_{\lambda_{i+1}}A_a^{\prime x}P_{\lambda_{i+1}})^2P_{\lambda_i}\big)\\
        &\leq(\frac{5}{2}+\sqrt{2})\sqrt{\beta}+\expect{x\sim \nu_A}\sum_a\big|\frac{1}{\Tr(P_{\lambda_i})}\Tr\big(P_{\lambda_i}A_a^{\prime x}-Q_iA_a^{\prime x}-P_{\lambda_{i+1}}A_a^{\prime x}\big)\big|\\
        &=(\frac{5}{2}+\sqrt{2})\sqrt{\beta}.
    \end{align*}
    By Lemma \ref{lem:subs-meas-ops-corr-est} and Lemma \ref{lem:win-prob-est}, we conclude that
    \begin{equation*}
        |\omega(\mcS_{\lambda_i})-\omega((\ket{\psi_{\lambda_i}},Q_iAQ_i+P_{\lambda_{i+1}}AP_{\lambda_{i+1}}))|\leq 6\sqrt{\beta}+8\sqrt{\frac{5}{2}+\sqrt{2}}\sqrt[4]{\beta}.
    \end{equation*}
    As 
    \begin{equation*}
        \omega((\ket{\psi_{\lambda_i}},Q_iAQ_i+P_{\lambda_{i+1}}AP_{\lambda_{i+1}}))=\frac{\Tr(Q_i)}{\Tr(P_{\lambda_i})}\omega(\mcS_i)+\frac{\Tr(P_{\lambda_{i+1}})}{\Tr(P_{\lambda_i})}\omega(\mcS_{\lambda_{i+1}}),
    \end{equation*}
    we find that
    \begin{align*}
        \omega(\mcS_i)&\geq \frac{\Tr(P_{\lambda_i})}{\Tr(Q_i)}(1-\epsilon-\sqrt{\alpha})-\frac{\Tr(P_{\lambda_{i+1}})}{\Tr(Q_i)}-6\sqrt{\beta}-8\sqrt{\frac{5}{2}+\sqrt{2}}\sqrt[4]{\beta}\\
        &\geq 1-2\epsilon-2\sqrt{\alpha}-6\sqrt{\beta}-8\sqrt{\frac{5}{2}+\sqrt{2}}\sqrt[4]{\beta}.
    \end{align*}
    Note that this proof works for $1\leq i\leq k-1$. However, $Q_k=P_{\lambda_k}$ and $\mcS_k=\mcS_{\lambda_k}$, so in that case the result holds immediately since $\lambda_i\in\Lambda$. \clmprvd{\ref{clm:comm-and-strat-props-of-Q}}
     
     For convenience, we define the constant
     \begin{equation*}
     	\gamma=2\epsilon+2\sqrt{\alpha}+6\sqrt{\beta}+8\sqrt{\frac{5}{2}+\sqrt{2}}\sqrt[4]{\beta}
     \end{equation*}
     going forward.
     
     Having obtained our approximate decomposition into orthogonal strategies that have high winning probability, we are in a position to use the PME-robust self-testing properties of $G$. Our goal will be to construct a single partial isometry $W$ that works well for each set of measurement operators $P_{\lambda_i}AP_{\lambda_i}$. Let $1\leq i\leq k$. Define $H_i=Q_iH_A$, consider $(B(H_i),\tr_i)$, where $\tr_i$ is the normalised trace on $B(H_i)$, and let $\delta_i=\dsync(\mcS_i,\nu_A)$. Since $\tilde{\mcS}$ and $\mcS_i$ are ME strategies, we can view them in the von Neumann algebra picture as strategies $(M,\tau^M,\tA)$ and $(N_i,\tau^{N_i},A_i)$, respectively. Since $G$ is a $(\kappa,\hat{\nu})$-PME-robust self-test for $\tilde{\mcS}$, we know by Lemmas \ref{lem:PME-robust-self-testing-vNA} and \ref{lem:ME-robust-self-testing-vNA} that $\tilde{\mcS}$ is a local $(\theta_i,\hat{\nu}_A)$-vNA-dilation of $\mcS_i$, with
     \begin{equation*}
         \theta_i=1700\left(\kappa\big(24\sqrt{\delta_i}+(1-\omega(\mcS_i))\big)+9c\delta_i\right)^2.
     \end{equation*}
     This means that there exists a finite dimensional von Neumann algebra $M_i$, a corresponding von Neumann algebra $(M\otimes M_i)^{\infty}$ with trace denoted by $\tau^{\infty}_i$, a projection $R_i\in (M\otimes M_i)^{\infty}$ of finite trace such that $N_i\cong R_i(M\otimes M_i)^{\infty}R_i$ and $\tr^{N_i}=\tau^{\infty}_i/\tau^{\infty}_i(R_i)$, and a partial isometry $V_i\in R_i(M\otimes M_i)^{\infty}I_{M\otimes M_i}$ such that
     \begin{equation*}
         \frac{1}{\Tr(Q_i)}\Tr(Q_i-V_iV_i^*)=\tau^{N_i}(R_i-V_iV_i^*)\leq \theta_i, \tau^{\infty}_i(I_{M\otimes M_i}-V_i^*V_i)\leq \theta_i
     \end{equation*}
     and
     \begin{align}\label{eq:quality-part-iso-Q-i}
         \expect{x\sim\hat{\nu}_A}\sum_a\frac{1}{\Tr(Q_i)}\Tr\Big((Q_iA_a^{\prime x}Q_i-V_i\tA_a^x&V_i^*)^2Q_i\Big) \\
         =\expect{x\sim\hat{\nu}_A}\sum_a\tau^{N_i}&\left(\left((A_i)_a^x-V_i\tA_a^xV_i^*\right)^2\right)\leq 2\theta_i+\frac{\theta_i}{1-\theta_i},\nonumber
     \end{align}
     by Lemma \ref{lem:vNA-close-strat-N-measure}. Note that we have identified $B(H_i)=N_i$ and $R_i(M\otimes M_i)^{\infty}R_i$, leading us to identify $Q_i$ and $R_i$. We have also identified $\tA_a^x\in M$ and $\tA_a^x\otimes \Id_{H_i}\otimes \Id_{B(\ell^2(\N))}\in (M\otimes M_i)^{\infty}$, and we will continue to do so in the rest of this proof. Lastly, remark that the expectations in \Cref{eq:quality-part-iso-Q-i} are with respect to $\hat{\nu}_A$ instead of $\nu_A$. From now on, we will mainly need expectations with respect to $\hat{\nu}_A$. We will freely use that we can obtain estimates on expectations with respect to $\hat{\nu}_A$ by multiplying expectations with respect to $\nu_A$ by $c$, as $\hat{\nu}_A\leq c\nu_A$. 

     Define
     \begin{equation*}
         \hat{M}=\bigoplus_{i=1}^kM_i,\ 
         \check{M}=(M\otimes \hat{M})^{\infty}\text{ and }         \hat{R}=\sum_{i=1}^kR_i.
     \end{equation*}
     We observe that $R\check{M}R\cong B(H_{\lambda_1})$, and we are now ready to define our desired partial isometry $V$ by 
     \begin{equation*}
         V=\sum_{i=1}^kV_i\in R\check{M}I_{M\otimes \hat{M}}.
     \end{equation*}
     
    The definition of the $\lambda_i$ allows us to compute that
    \begin{equation*}
    		\frac{1}{\Tr(P_{\lambda_1})}\Tr(P_{\lambda_1}-VV^*)=\frac{1}{\Tr(P_{\lambda_1})}\sum_{i=1}^k\Tr(Q_i-V_iV_i^*)\leq\sum_{i=1}^k\frac{\Tr(Q_i)}{\Tr(P_{\lambda_1})}\theta_i.
    \end{equation*}
    Let us define
    \begin{equation*}
        \theta=1700\left(\kappa\Big(24(1+\frac{1}{2}\sqrt{2})\sqrt[4]{\beta}+\gamma\Big)+\frac{9}{2}(\sqrt{2}+1)^2c\sqrt{\beta}\right)^2,
    \end{equation*}
    so $\theta_i\leq \theta$ for all $i$. We also have that $\sum_{i=j}^k\Tr(Q_i)=\Tr(P_{\lambda_j})$ for all $1\leq j\leq k$, so we find that 
    \begin{equation*}
    		\frac{1}{\Tr(P_{\lambda_j})}\Tr(P_{\lambda_j}-VV^*)\leq \theta.
    \end{equation*}
    If $\lambda\in\Lambda_j$, then it follows from this equation that
    \begin{equation}\label{eq:constr-iso-tr-est}
        \frac{1}{\Tr(P_{\lambda})}\Tr(P_{\lambda}-VV^*P_{\lambda}VV^*)= \frac{1}{\Tr(P_{\lambda})}\Tr((P_{\lambda_j}-VV^*)P_{\lambda})\leq 2\theta,
    \end{equation}
    where we use the cyclicity of the trace in the first step and Hölder together with $\Tr(P_{\lambda_j})\leq 2\Tr(P_{\lambda})$ in the second. 
    
    Next, we set out to show a relation between $A^{\lambda}$ and $V\tA V^*$. Our goal is to prove the following claim.
    \begin{claim}\label{clm:univ-part-iso}
    		The partial isometry $V$ is a good partial isometry on each subspace $H_{\lambda}$ with $\lambda\in\Lambda$, in the sense that
    		\begin{equation*}
    			\expect{x\sim\hat{\nu}_A}\sum_a\frac{1}{\Tr(P_{\lambda})}\Tr\left((A_a^{\prime x}-V\tA_a^xV^*)^2P_{\lambda}\right)\leq (\sqrt{2}+1)^2c\sqrt{\beta}+4\theta+\frac{2\theta}{1-\theta}.
    		\end{equation*}
    \end{claim}
    By the way $V$ is constructed, we have
    \begin{equation*}
    		V\tA_a^xV^*Q_i=V_i\tA_a^xV_i^*Q_i.
    \end{equation*}
    Let $\lambda\in \Lambda$ and let $1\leq j\leq k$ be such that $\lambda\in \Lambda_j$. By decomposing $P_{\lambda}$ into $P_{\lambda}-P_{\lambda_{j+1}},Q_{j+1},\dots,Q_k$, we find that
    \begin{align*}
    		\expect{x\sim\hat{\nu}_A}\sum_a\Tr\left((A_a^{\prime x}-V\tA_a^xV^*)^2P_{\lambda}\right)=&\expect{x\sim\hat{\nu}_A}\sum_a\Tr\left((A_a^{\prime x}-V_j\tA_a^xV_j^*)^2(P_{\lambda}-P_{\lambda_{j+1}})\right)\\
    		&+\expect{x\sim\hat{\nu}_A}\sum_a\sum_{i=j+1}^k\Tr\left((A_a^{\prime x}-V_i\tA_a^xV_i^*)^2Q_i\right).
    \end{align*}
    Since $(A_a^{\prime x}-V_j\tA_a^xV_j^*)^2$ is positive and $P_{\lambda}\leq P_{\lambda_j}$, we have the estimate
    \begin{equation*}
    		\expect{x\sim\hat{\nu}_A}\sum_a\Tr\left((A_a^{\prime x}-V_j\tA_a^xV_j^*)^2(P_{\lambda}-P_{\lambda_{j+1}})\right)\leq \expect{x\sim\hat{\nu}_A}\sum_a\Tr\left((A_a^{\prime x}-V_j\tA_a^xV_j^*)^2Q_j\right),
    \end{equation*}
    so
    \begin{equation*}
    		\expect{x\sim\hat{\nu}_A}\sum_a\Tr\left((A_a^{\prime x}-V\tA_a^xV^*)^2P_{\lambda}\right)\leq \expect{x\sim\hat{\nu}_A}\sum_a\sum_{i=j}^k\Tr\left((A_a^{\prime x}-V_i\tA_a^xV_i^*)^2Q_i\right).
    \end{equation*}
    Note that $V_i^*Q_i=V_i^*$. Using this knowledge, expanding the inner term of the right hand side gives
    \begin{equation*}
    		\Tr\left((A_a^{\prime x}-V_i\tA_a^xV_i^*)^2Q_i\right)=\Tr\left(A_a^{\prime x}Q_i-A_a^{\prime x}V_i\tA_a^xV_i^*-V_i\tA_a^xV_i^*A_a^{\prime x}+V_i\tA_a^xV_i^*V_i\tA_a^xV_i^*\right).
    \end{equation*}
    Note that
    \begin{equation*}
        \dsync(\mcS_i,\hat{\nu}_A)=\expect{x\sim\hat{\nu}_A}\sum_a\frac{1}{\Tr(Q_i)}\Tr(A_a^{\prime x}Q_i-A_a^{\prime x}Q_iA_a^{\prime x}Q_i).
    \end{equation*}
    By using Claim \ref{clm:comm-and-strat-props-of-Q} and Lemma \ref{lem:comm-proj-PVM}, we find
    \begin{equation*}
    		\expect{x\sim\hat{\nu}_A}\sum_a\Tr\big((A_a^{\prime x}-V_i\tA_a^xV_i^*)^2Q_i\big)\leq \Tr(Q_i)\frac{(\sqrt{2}+1)^2}{2}c\sqrt{\beta}+\expect{x\sim\hat{\nu}_A}\sum_a\norm{Q_iA_a^{\prime x}Q_i-V_i\tA_a^xV_i^*}_2^2.
    \end{equation*}
    Summing the first term on the right hand side gives
    \begin{equation*}
    		\sum_{i=j}^k\Tr(Q_i)\frac{(\sqrt{2}+1)^2}{2}c\sqrt{\beta}=\Tr(P_{\lambda_j})\frac{(\sqrt{2}+1)^2}{2}c\sqrt{\beta}\leq \Tr(P_{\lambda})(\sqrt{2}+1)^2c\sqrt{\beta}.
    \end{equation*}
    Summing the second term we use \Cref{eq:quality-part-iso-Q-i} and an analogous summation estimate to see that
    \begin{equation*}
    		\expect{x\sim\hat{\nu}_A}\sum_a\sum_{i=j}^k\norm{Q_iA_a^{\prime x}Q_i-V_i\tA_a^xV_i^*}_2^2\leq 2\Tr(P_{\lambda})\left(2\theta+\frac{\theta}{1-\theta}\right).
    \end{equation*}
    All in all we conclude that
    \begin{equation*}
    		\expect{x\sim\hat{\nu}_A}\sum_a\frac{1}{\Tr(P_{\lambda})}\Tr\left((A_a^{\prime x}-V\tA_a^xV^*)^2P_{\lambda}\right)\leq (\sqrt{2}+1)^2c\sqrt{\beta}+4\theta +\frac{2\theta}{1-\theta}
    \end{equation*}
    by pulling $\Tr(P_{\lambda})$ to the other side.\clmprvd{\ref{clm:univ-part-iso}}
	
	We are now ready to finish the proof of the theorem. What is left to do is to pad $V$ by an arbitrary partial isometry which has full support on the kernel of $V$. This only introduces a small error because $\mu(\R_+\backslash\Lambda)\leq \sqrt{\alpha}+\sqrt{\beta}$ by Claim \ref{clm:constr-Lambda} and because $\Tr(P_{\lambda_1}-VV^*)\leq \theta\Tr(P_{\lambda_1})$. Let $P^{\perp}=\Id_{B(H_A)}-VV^*$, $R^{\perp}\in \check{M}$ a projection such that $R^{\perp}\check{M}R^{\perp}\cong P^{\perp}B(H_A)P^{\perp}$ and $\hat{R}R^{\perp}=0$. We identify $P^{\perp}B(H_A)P^{\perp}$ and $R^{\perp}\check{M}R^{\perp}$ and let $M^{\perp}\in \hat{M}\otimes B(\ell^2(\N))$ be a von Neumann algebra and $V^{\perp}\in R^{\perp}M^{\infty}I_{M\otimes M^{\perp}}$ a partial isometry such that $V^{\perp}(V^{\perp})^*=R^{\perp}$ and $I_{\hat{M}}M^{\perp}I_{\hat{M}}=\{0\}$. Let $W=V+V^{\perp}\in \check{M}I_{M\otimes (\hat{M}\oplus M^{\perp})}$. First, observe that
	\begin{equation*}
		\norm{\rho_A-VV^*\rho_A VV^*}_1\leq\norm{\int_{\lambda\in \R_+\backslash \Lambda}\rho_{\lambda}d\mu(\lambda)}_1+\int_{\lambda\in\Lambda}\norm{\rho_{\lambda}-VV^*\rho_{\lambda}VV^*}_1d\mu(\lambda)\leq \sqrt{\alpha}+\sqrt{\beta}+2\theta
	\end{equation*}
	by construction of $\Lambda$ and \Cref{eq:constr-iso-tr-est}, and
	\begin{equation*}
		\norm{\sum_a(A_a^{\prime x}-\tilde{W}\tA_a^x\tilde{W}^*)^2}_{\infty}\leq 4
	\end{equation*}
	by Lemma \ref{lem:op-norm-bound-game-poly}. This allows us to compute that 
	\begin{align*}
		\expect{x\sim\hat{\nu}_A}\sum_a\Tr&\left((A_a^{\prime x}-\tilde{W}\tA_a^x\tilde{W}^*)^2\rho_A\right)\\
        &\leq 4(\sqrt{\alpha}+\sqrt{\beta}+2\theta)+\expect{x\sim\hat{\nu}_A}\sum_a\int_{\lambda\in \Lambda}\Tr\left((A_a^{\prime x}-\tilde{W}\tA_a^x\tilde{W}^*)^2VV^*\rho_{\lambda}VV^*\right)d\mu(\lambda)\\
		&\leq 4(\sqrt{\alpha}+\sqrt{\beta}+2\theta)+(\sqrt{2}+1)^2c\sqrt{\beta}+4\theta+\frac{2\theta}{1-\theta}.
	\end{align*}
	By the construction of $\mcS'$ using Lemma \ref{lem:nearby-proj-strat}, the triangle inequality and the fact that the estimate becomes trivial if $\theta\geq 0.28$, we obtain
	\begin{equation*}
		\expect{x\sim\hat{\nu}_A}\sum_a\Tr\left((A_a^{x}-\tilde{W}\tA_a^x\tilde{W}^*)^2\rho_A\right)= \mcO(\sqrt{\alpha}+c\sqrt{\beta})+15\theta,
	\end{equation*}
	showing that $\tilde{W}$ is a good isometry. Now define $V_A=\tilde{W}$, and by repeating the proof for the B-side, we obtain an isometry $V_B$ with the corresponding properties. The spaces $K_A$ and $K_B$ are found by taking the Hilbert space upon which $\hat{M}\oplus M^{\perp}$ acts. We now define the strategy $\hat{\mcS}=(\kp,V_A\tA V_A^*,V_B\tB V_B^*)$. For an estimate on the success probablity of $\hat{\mcS}$, we can just use Lemma \ref{lem:subs-meas-ops-corr-est} to show that $\omega(G_{\hat{\nu}};\hat{\mcS})= 1-c\epsilon-\mcO(\delta+\sqrt{\sqrt{\alpha}+c\sqrt{\beta}+\theta})$, so $\omega(G_{\hat{\nu}};\hat{\mcS})= 1-\mcO\big((c\cdot \id+\kappa)(\poly(\delta)+2\epsilon)\big)$.
\end{proof}

\begin{theorem} \label{thm:lifting-PME-assumtion-if-spec-gap}
    Let $\mcS$ be a strategy for a perfect symmetric non-local game $G=(\mcX,\nu,\mcA,D)$ with winning probability $1-\epsilon$. Let $\nu_A$ be the marginal of $\nu$ on $\mcX$ and let $\delta=\dsync(\mcS,\nu_A)$. Let $\hat{\nu}$ be a symmetric probability distribution on $\mcX\times\mcX$ with marginal $\hat{\nu}_A$ on $\mcX$ and $c\geq 1$ such that $\hat{\nu}\leq c\nu$. Define $G_{\hat{\nu}}=(\mcX,\hat{\nu},\mcA,D)$. Suppose that $G$ $(\kappa,\hat{\nu})$-PME-robustly self-tests $\tilde{\mcS}$ and that the game polynomial $T_{G_{\hat{\nu}},\tilde{\mcS}}$ has spectral gap $\alpha$. Then $\tilde{\mcS}$ is a local $\Big(\mathcal{O}\big(\frac{1}{\sqrt{\alpha}}\sqrt{c\cdot\id+\kappa}(2\epsilon+\poly(\delta))\big),\hat{\nu}\Big)$-dilation of $\mcS$.
\end{theorem}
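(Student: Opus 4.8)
The plan is to bootstrap from \Cref{thm:constr-isom-gen-strat}, which already supplies isometries $V_A\colon H_A\to\tilde H_A\otimes K_A$ and $V_B\colon H_B\to\tilde H_B\otimes K_B$ for which the strategy $\hat{\mcS}=(\kp,V_A^*(\tA\otimes\Id_{K_A})V_A,V_B^*(\tB\otimes\Id_{K_B})V_B)$ satisfies $\omega(G_{\hat\nu};\hat{\mcS})\ge 1-\eta$ with $\eta=\mcO\big((c\cdot\id+\kappa)(2\epsilon+\poly(\delta))\big)$, and whose measurement operators are close to those of $\mcS$:
\[
\expect{x\sim\hat\nu_A}\sum_a\Tr\big((A_a^x-V_A^*(\tA_a^x\otimes\Id_{K_A})V_A)^2\rho_A\big)=\mcO\big((c\cdot\id+\kappa)^2(\poly(\delta)+2\epsilon)\big)=:\eta',
\]
and similarly on the $B$-side. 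Writing $v:=(V_A\otimes V_B)\kp$, the one ingredient still missing for a local dilation in the sense of \Cref{def:local_dilation} is an auxiliary unit vector $\ket{aux}$ with $\norm{v-\kpt\otimes\ket{aux}}$ small, and producing it is exactly where the spectral gap is used.

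First I would note that $\hat\nu\le c\nu$ forces $\mathrm{supp}(\hat\nu)\subseteq\mathrm{supp}(\nu)$, so the perfect strategy $\tilde{\mcS}$ for $G$ is also perfect for $G_{\hat\nu}$; hence $T:=T_{G_{\hat\nu},\tilde{\mcS}}$ satisfies $0\le T\le\Id$ and its largest eigenvalue is $1$, attained at $\kpt$, which is simple with the remaining spectrum contained in $[0,1-\alpha]$ by the spectral-gap hypothesis. Expanding the definition of $\omega(G_{\hat\nu};\hat{\mcS})$ and reordering tensor factors identifies it with $\bra{v}(T\otimes\Id_{K_A\otimes K_B})\ket{v}$; hence, writing $P_1$ for the orthogonal projection onto $\C\kpt\otimes K_A\otimes K_B$, the inequality $T\otimes\Id\le P_1+(1-\alpha)(\Id-P_1)$ together with $\omega(G_{\hat\nu};\hat{\mcS})\ge 1-\eta$ gives $\norm{(\Id-P_1)v}^2\le\eta/\alpha$. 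Letting $\ket{aux}_0$ be the vector with $P_1v=\kpt\otimes\ket{aux}_0$ and setting $\ket{aux}:=\ket{aux}_0/\norm{\ket{aux}_0}$ (the case $\eta/\alpha\ge1$ being trivial, as all relevant distances are $\le2$), we get
\[
\mu:=\norm{v-\kpt\otimes\ket{aux}}\ \le\ \sqrt{\eta/\alpha}+\eta/\alpha\ =\ \mcO\Big(\tfrac1{\sqrt\alpha}\sqrt{(c\cdot\id+\kappa)(2\epsilon+\poly(\delta))}\Big),
\]
which is the first inequality of \Cref{def:local_dilation} with the asserted bound; this also yields $\eta=\mu^2\alpha\le\mu$ (using $\alpha\le1$) and $\sqrt{\eta'}=\mcO(\eta)=\mcO(\mu)$, making all error terms uniform.

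It then remains to check the two measurement inequalities; I would do the $A$-side, the $B$-side being identical. With $B_a^x:=V_A^*(\tA_a^x\otimes\Id_{K_A})V_A$ and $\Pi_A:=V_AV_A^*$, and using $V_A^*V_A=\Id$, $(\Pi_A\otimes\Id)v=v$ and that $V_A,V_B$ are isometries, I would split $(V_A\otimes V_B)(A_a^x\otimes\Id)\kp-\big((\tA_a^x\otimes\Id)\kpt\big)\otimes\ket{aux}$ as
\[
(V_A\otimes V_B)\big((A_a^x-B_a^x)\otimes\Id\big)\kp\ +\ \big(\Pi_A(\tA_a^x\otimes\Id_{K_A})\otimes\Id\big)(v-\kpt\otimes\ket{aux})\ +\ \big((\Pi_A-\Id)(\tA_a^x\otimes\Id_{K_A})\otimes\Id\big)(\kpt\otimes\ket{aux}).
\]
The $\expect{x\sim\hat\nu_A}\sum_a(\cdot)^2$ of the first summand equals $\eta'$; of the second it is at most $\norm{v-\kpt\otimes\ket{aux}}^2=\mu^2$, since $\{\tA_a^x\}_a$ is a PVM and $\Pi_A$ is a contraction. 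For the third summand I would use that $\kpt$ is maximally entangled: its reduced density on $\tilde H_A$ is $\tfrac1d\Id$, so the reduced density $\sigma$ of $\kpt\otimes\ket{aux}$ on $\tilde H_A\otimes K_A$ satisfies $\sum_a(\tA_a^x\otimes\Id_{K_A})\,\sigma\,(\tA_a^x\otimes\Id_{K_A})=\sigma$ for every $x$, which collapses the sum over $a$ to $\Tr\big((\Id-\Pi_A)\sigma\big)=1-\norm{(\Pi_A\otimes\Id)(\kpt\otimes\ket{aux})}^2$; since $(\Pi_A\otimes\Id)v=v$ and $\mathrm{Re}\langle v,\kpt\otimes\ket{aux}\rangle\ge1-\tfrac12\mu^2$, this last quantity is $\le\mu^2$. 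Taking square roots and combining the three summands by the triangle inequality bounds the $A$-side term by $\mcO(\mu)$, and with the $B$-side this exhibits $\tilde{\mcS}$ as a local $\big(\mcO(\tfrac1{\sqrt\alpha}\sqrt{(c\cdot\id+\kappa)(2\epsilon+\poly(\delta))}),\hat\nu\big)$-dilation of $\mcS$.

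The step I expect to be delicate is the third summand above. Bounding $1-\norm{(\Pi_A\otimes\Id)(\kpt\otimes\ket{aux})}$ naively by the triangle inequality only gives $\mcO(\mu)$, hence $\mcO(\sqrt\mu)$ after the square root, which is too weak for the claimed polynomial relationship; one must extract the quadratic gain $\mathrm{Re}\langle v,\kpt\otimes\ket{aux}\rangle\ge1-\tfrac12\mu^2$ to reach $\mcO(\mu^2)$. It is also in this step that the maximal entanglement of $\kpt$ is genuinely needed (it is what forces $\sum_a\tA_a^x\sigma\tA_a^x=\sigma$), which fits the hypothesis being PME-robust self-testing rather than self-testing for arbitrary ME strategies.
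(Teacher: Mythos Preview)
Your proposal is correct and follows essentially the same route as the paper's proof: invoke \Cref{thm:constr-isom-gen-strat} to obtain the isometries and the near-perfect pulled-back strategy, use the spectral gap of $T_{G_{\hat\nu},\tilde{\mcS}}\otimes\Id$ to extract $\ket{aux}$ with $\norm{v-\kpt\otimes\ket{aux}}=\mcO(\sqrt{\eta/\alpha})$, and then verify the measurement inequalities via the same three-term splitting, handling the delicate $(\Pi_A-\Id)(\tA_a^x\otimes\Id)$ piece by exploiting that $\kpt$ is maximally entangled together with $(\Pi_A\otimes\Id)v=v$. Your explicit remark that $\hat\nu\le c\nu$ forces $\tilde{\mcS}$ to be perfect for $G_{\hat\nu}$ (so that $1$ is indeed the top eigenvalue of $T$) is a point the paper leaves implicit.
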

\begin{proof}
    Let the optimal strategy $\tilde{\mcS}$ be given by $\tilde{\mcS}=(\kpt,\tA,\tB)$. Because $\mcS$ and $G$ satisfy the requirements of Theorem \ref{thm:constr-isom-gen-strat}, we know that there exist Hilbert spaces $K_A$ and $K_B$ and isometries $V_A:H_A\rightarrow \tH_A\otimes K_A$ and $V_B:H_B\rightarrow \tH_B\otimes K_B$ such that $\hat{\mcS}=(\kp,V_A^*(\tA\otimes \Id_{K_A})V_A,V_B^*(\tB\otimes \Id_{K_B})V_B)$ has winning probability $\omega(\hat{\mcS})= 1-\mcO\big((c\cdot\id+\kappa)(2\epsilon+\poly(\delta))\big)$ and 
    \begin{align}
        \expect{x\sim\hat{\nu}_A}\sum_a\Tr\left((A_a^{x}-V_A^*\tA_a^xV_A)^2\rho_A\right)&= \mcO\big((c\cdot\id+\kappa)^2(\poly(\delta)+2\epsilon)\big),\label{eq:input-POVM-est-A-thm-self-test-implication}\\
        \expect{y\sim\hat{\nu}_A}\sum_b\Tr\left((B_b^{y}-V_B^*\tB_a^xV_B)^2\rho_B\right)&= \mcO\big((c\cdot\id+\kappa)^2(\poly(\delta)+2\epsilon)\big).\label{eq:input-POVM-est-B-thm-self-test-implication}
    \end{align}
    Here we have identified $\tA_a^x\in B(\tH_A)$ and $\tA_a^x\otimes \Id_{K_A}\in B(\tH_A\otimes K_A)$ and $\tB_b^y\in B(\tH_B)$ and $\tB_b^y\otimes \Id_{K_B}\in B(\tH_B\otimes K_B)$, and we will continue to do so in this proof. From the winning probability estimate it follows that 
    \begin{align*}
        \bra{\psi}\left(\sum_{x,y}\hat{\nu}(x,y)\sum_{a,b}D(a,b|x,y)(V_A\otimes V_B)^*(\tA_a^x\otimes \tB_b^y)(V_A\otimes V_B)\right)\kp&\\
        \geq 1-\mathcal{O}\big((c\cdot\id+\kappa)&(2\epsilon+\poly(\delta))\big),
    \end{align*}
    which implies that
    \begin{equation*}
        \bra{\psi}(V_A\otimes V_B)^*(T_{G_{\hat{\nu}},\tilde{\mcS}}\otimes \Id_{K_A\otimes K_B}(V_A\otimes V_B)\kp\geq 1-\mathcal{O}\big((c\cdot\id+\kappa)(2\epsilon+\poly(\delta))\big).
    \end{equation*}
    Let $P\in B(\tH_A\otimes \tH_B\otimes K_A\otimes K_B)$ be the projection onto the 1-eigenspace of $T_{G,\tilde{\mcS}}\otimes \Id_{K_A\otimes K_B}$, which is the largest eigenvalue since $G$ is perfect. Then we have
    \begin{equation*}
        \bra{\psi}(V_A\otimes V_B)^*(1-P)(T_{G_{\hat{\nu}},\tilde{\mcS}}\otimes \Id_{K_A\otimes K_B}(1-P)(V_A\otimes V_B)\kp\leq (1-\alpha)\norm{(1-P)(V_A\otimes V_B)\kp}^2
    \end{equation*}
    and 
    \begin{equation*}
        \bra{\psi}(V_A\otimes V_B)^*P(T_{G_{\hat{\nu}},\tilde{\mcS}}\otimes \Id_{K_A\otimes K_B}P(V_A\otimes V_B)\kp= \norm{P(V_A\otimes V_B)\kp}^2,
    \end{equation*}
    so
    \begin{equation*}
        \bra{\psi}(V_A\otimes V_B)^*(T_{G_{\hat{\nu}},\tilde{\mcS}}\otimes \Id_{K_A\otimes K_B}(V_A\otimes V_B)\kp\leq 1-\alpha \norm{(1-P)(V_A\otimes V_B)\kp}^2.
    \end{equation*}
    This directly implies that 
    \begin{equation*}
        \norm{(1-P)(V_A\otimes V_B)\kp}\leq \mathcal{O}\big(\frac{1}{\sqrt{\alpha}}\sqrt{c\cdot\id+\kappa}(2\epsilon+\poly(\delta))\big).
    \end{equation*}
    Since the 1-eigenspace of $T_{G,\tilde{\mcS}}$ is one-dimensional, we know that there exists a state $\ket{aux}\in K_A\otimes K_B$ such that
    \begin{equation*}
        \frac{1}{\norm{P(V_A\otimes V_B)\kp}}P(V_A\otimes V_B)\kp=\kpt\otimes \ket{aux}.
    \end{equation*}
    By the triangle inequality, we find that
    \begin{equation*}
        \norm{(V_A\otimes V_B)\kp-\kpt\otimes\ket{aux}}\leq \mathcal{O}\big(\frac{1}{\sqrt{\alpha}}\sqrt{c\cdot\id+\kappa}(2\epsilon+\poly(\delta))\big).
    \end{equation*}
    Next, note that
    \begin{equation}\label{eq:est-1-thm-self-test-implication}
        \Tr\big((A_a^x-V_A^*\tA_a^xV_A)^2\rho_A\big)= \norm{(V_A(A_a^x-V_A^*\tA_a^xV_A)\otimes V_B)\kp}^2
    \end{equation}
    since $V_A$ is an isometry. Moreover, observe that
    \begin{align}\label{eq:est-2-thm-self-test-implication}
        \sum_a\norm{(V_AV_A^*\tA_a^x\otimes \Id_{\tH_B\otimes K_B})(V_A\otimes V_B\kp-\kpt\otimes&\ket{aux})}^2\\
        &\leq\norm{(V_A\otimes V_B)\kp-\kpt\otimes\ket{aux}}^2,\nonumber
    \end{align}
    using the fact that $V_AV_A^*$ is a contraction and 
    \begin{equation}\label{eq:sum-square-POVM-est}
        \sum_a(\tA_a^x)^2\leq \sum_a\tA_a^x=\Id_{\tH_A\otimes K_A}.
    \end{equation}
    Consequently, our goal will be to provide an estimate for
    \begin{equation*}
        \sum_a\norm{((V_AV_A^*\tA_a^x-\tA_a^x)\otimes \Id_{\tH_B\otimes K_B})\kpt\otimes\ket{aux}}^2.
    \end{equation*}
    First, we get rid of the sum over $a$. Let $m=\dim(\tH_A)$.  Since $\kpt$ is maximally entangled, we have
    \begin{align*}
        \sum_a\norm{((&V_AV_A^*\tA_a^x-\tA_a^x)\otimes \Id_{\tH_B\otimes K_B})\kpt\otimes\ket{aux}}^2\\
        &=\sum_a\Tr\left(((V_AV_A^*\tA_a^x-\tA_a^x)^*(V_AV_A^*\tA_a^x-\tA_a^x)\otimes \Id_{\tH_B\otimes K_B})\kpt\otimes\ket{aux}\bra{\tilde{\psi}}\otimes\bra{aux}\right)\\
        &=\frac{1}{m}\sum_a\Tr\left(((V_AV_A^*\tA_a^x-\tA_a^x)^*(V_AV_A^*\tA_a^x-\tA_a^x)\otimes \Id_{\tH_B\otimes K_B})(\Id_{\tH_A}\otimes\ket{aux}\bra{aux})\right)\\
        &\leq\frac{1}{m}\Tr\left(((V_AV_A^*-\Id_{\tH_A\otimes K_A})^*(V_AV_A^*-\Id_{\tH_A\otimes K_A})\otimes \Id_{\tH_B\otimes K_B})(\Id_{\tH_A}\otimes\ket{aux}\bra{aux})\right)\\
        &=\norm{((V_AV_A^*-\Id_{\tH_A\otimes K_A})\otimes \Id_{\tH_B\otimes K_B})\kpt\otimes\ket{aux}}^2,
    \end{align*}
    where we use the cyclicity of the trace and the standard estimate \ref{eq:sum-square-POVM-est} for the inequality. By adding $0$ in a clever way, we find
    \begin{align*}
        \norm{((V_AV_A^*-\Id_{\tH_A\otimes K_A})&\otimes \Id_{\tH_B\otimes K_B})\kpt\otimes\ket{aux}}^2\\
        =&\lVert((V_AV_A^*-\Id_{\tH_A\otimes K_A})\otimes \Id_{\tH_B\otimes K_B})(\kpt\otimes\ket{aux}-(V_A\otimes V_B)\kp)\\
        &+((V_AV_A^*-\Id_{\tH_A\otimes K_A})\otimes \Id_{\tH_B\otimes K_B})(V_A\otimes V_B)\kp\rVert^2.
    \end{align*}
    Since $V_A$ is an isometry, $V_A^*V_A=\Id_{\tH_A\otimes K_A}$, so the second term is zero. Furthermore, 
    \begin{equation*}
        0\leq V_AV_A^*\leq \Id_{\tH_A\otimes K_A},
    \end{equation*}
    so 
    \begin{equation*}
        \norm{\Id_{\tH_A\otimes K_A}-V_AV_A^*}_{\infty}\leq 1.
    \end{equation*}
    Consequently,
    \begin{equation*}
        \norm{((V_AV_A^*-\Id_{\tH_A\otimes K_A})\otimes \Id_{\tH_B\otimes K_B})\kpt\otimes\ket{aux}}\leq \norm{(V_A\otimes V_B)\kp-\kpt\otimes\ket{aux}}.
    \end{equation*}
    This equation, together with equations \ref{eq:est-1-thm-self-test-implication} and \ref{eq:est-2-thm-self-test-implication}, the starting estimate \ref{eq:input-POVM-est-A-thm-self-test-implication} and the triangle inequality yields
    \begin{align*}
        \expect{x\sim \hat{\nu}_A}\Big(\sum_a\norm{(V_A\otimes V_B)(A_a^x\otimes \Id_{H_B})\kp-(\tA_a^x\otimes &\Id_{\tH_B\otimes K_B})\kpt\otimes\ket{aux}}^2\Big)^{\frac{1}{2}}\\
        &\leq \mathcal{O}\big(\frac{1}{\sqrt{\alpha}}\sqrt{c\cdot\id+\kappa}(2\epsilon+\poly(\delta))\big).
    \end{align*}
    By doing the analogous proof for the B-side, which gives us the same $\ket{aux}$, we have obtained the three inequalities that show that $\tilde{\mcS}$ is a local $(\mathcal{O}\big(\alpha^{-1/2}\sqrt{c\cdot\id+\kappa}(2\epsilon+\poly(\delta))\big),\hat{\nu})$-dilation of $\mcS$.
\end{proof}

\begin{corollary}\label{cor:lifting-PME-assumption-if-spec-gap-beta-sync}
    Let $G=(\mcX,\nu,\mcA,D)$ be a perfect $\beta$-synchronous non-local game  $(\kappa,\hat{\nu})$-PME-robust self-testing $\tilde{\mcS}$ and let $G_{\hat{\nu}}=(\mcX,\hat{\nu},\mcA,D)$. Suppose that the game polynomial $T_{G_{\hat{\nu}},\tilde{\mcS}}$ has spectral gap $\alpha$ and $\hat{\nu}\leq c\nu$ for $c\geq 1$. Then $G$ is a $(\kappa',\hat{\nu})$-robust self-test for $\kappa'(\epsilon)=\mcO(\alpha^{-1/2}\sqrt{c\cdot\id+\kappa}(\poly(\beta^{-1}\epsilon))$.
\end{corollary}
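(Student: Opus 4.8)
The plan is to reduce the statement directly to \Cref{thm:lifting-PME-assumtion-if-spec-gap}, using the $\beta$-synchronicity hypothesis only to control the asynchronicity of an arbitrary near-optimal strategy in terms of $\epsilon$. Fix an arbitrary strategy $\mcS=(\kp,A,B)$ for $G$ with $\omega(G;\mcS)\geq 1-\epsilon$; since $G$ is perfect, $1$ is its quantum value, so it suffices to work with $\epsilon':=1-\omega(G;\mcS)\leq\epsilon$ and appeal to monotonicity of the final bound at the end. We must exhibit $\tilde{\mcS}$ as a local $(\kappa'(\epsilon),\hat{\nu})$-dilation of $\mcS$. Note first that $G$, being $\beta$-synchronous, is symmetric, and that $\hat{\nu}$ is symmetric and satisfies $\hat{\nu}\leq c\nu$ with $c\geq 1$; together with the PME-robust self-testing hypothesis and the spectral-gap assumption on $T_{G_{\hat{\nu}},\tilde{\mcS}}$, all hypotheses of \Cref{thm:lifting-PME-assumtion-if-spec-gap} are in force for $\mcS$.

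Next I would bound $\delta:=\dsync(\mcS;\nu_A)$. By \Cref{lem:value-dsync}, $1-\epsilon'=\omega(G;\mcS)\leq 1-\beta\,\dsync(\mcS;\nu_A)$, hence $\delta\leq \epsilon'/\beta\leq \epsilon/\beta$. Feeding this into the conclusion of \Cref{thm:lifting-PME-assumtion-if-spec-gap}, $\tilde{\mcS}$ is a local $\big(\mcO(\alpha^{-1/2}\sqrt{c\cdot\id+\kappa}(2\epsilon'+\poly(\delta))),\hat{\nu}\big)$-dilation of $\mcS$. It then remains to simplify the error argument $2\epsilon'+\poly(\delta)$. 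Since $\beta\in(0,1)$ we have $\epsilon'\leq\epsilon\leq\beta^{-1}\epsilon$, and since $\delta\leq\beta^{-1}\epsilon$ we get $\poly(\delta)\leq\poly(\beta^{-1}\epsilon)$; absorbing constants, $2\epsilon'+\poly(\delta)\leq\poly(\beta^{-1}\epsilon)$ for a suitable polynomial.

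Finally, to transfer this bound through the function $t\mapsto\sqrt{c\cdot\id+\kappa}(t)=\sqrt{ct+\kappa(t)}$, I would replace $\kappa$ by its non-decreasing monotone envelope (which is still a valid robustness function, does not affect whether $G$ is a robust self-test, and only weakens the bound); then $t\mapsto\sqrt{ct+\kappa(t)}$ is non-decreasing, so $\sqrt{c\cdot\id+\kappa}(2\epsilon'+\poly(\delta))\leq\sqrt{c\cdot\id+\kappa}(\poly(\beta^{-1}\epsilon))$. This gives $\kappa'(\epsilon)=\mcO(\alpha^{-1/2}\sqrt{c\cdot\id+\kappa}(\poly(\beta^{-1}\epsilon)))$, and as $\mcS$ was an arbitrary $\epsilon$-optimal strategy, $G$ is a $(\kappa',\hat{\nu})$-robust self-test for the class $\mcC_{all}$, as claimed.

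The argument is essentially bookkeeping that splices together \Cref{lem:value-dsync} and \Cref{thm:lifting-PME-assumtion-if-spec-gap}; the only genuinely delicate point is the last step, namely that one may evaluate $\sqrt{c\cdot\id+\kappa}$ at the larger quantity $\poly(\beta^{-1}\epsilon)$. This requires the (standard, but implicit) assumption that $\kappa$ may be taken non-decreasing, so I would flag this normalisation explicitly rather than leave it tacit.
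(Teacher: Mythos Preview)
Your proposal is correct and follows exactly the paper's approach: bound the asynchronicity via \Cref{lem:value-dsync} by $\delta\leq\beta^{-1}\epsilon$ and then invoke \Cref{thm:lifting-PME-assumtion-if-spec-gap}. The paper's proof is a single sentence to this effect; your additional care in making explicit the monotonicity normalisation of $\kappa$ (so that $\sqrt{c\cdot\id+\kappa}$ may be evaluated at the larger argument) is a reasonable point the paper leaves tacit.
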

\begin{proof}
    This is immediate from the previous theorem after realising that the asynchronicity is bounded above by $\beta^{-1}\epsilon$ for a $\beta$-synchronous game as shown in Lemma \ref{lem:value-dsync}.
\end{proof}
\begin{remark}
    The constants hidden in $\mcO$ and $\poly$ are universal and do not depend on any property of the game.
\end{remark}

\section{Spectral gap for robust synchronous self-tests}\label{sec5}
In the previous section we proved that every $\beta$-synchronous PME-robust self-test whose game polynomial has spectral gap is a robust self-test. This raises the question whether the spectral gap condition is necessary. Are there examples of such games for which the spectral gap can be arbitrarily small, or is there some lower bound on the spectral gap? It turns out that a lower bound exists if the non-local game is a $\kappa$-PME-robust self-test, i.e. if the probability distribution for robust self-test is the same as the probability distribution of the game. This is a consequence of several elements of the proof of Theorem \ref{thm:constr-isom-gen-strat} under the additional assumption that the Hilbert space $H_A\otimes H_B$ for a strategy $\mcS$ equals the Hilbert space $\tH_A\otimes \tH_A$ for the strategy $\tilde{\mcS}$ self-tested by the game. We rely on a dimension estimate, given in Lemma \ref{lem:contained-ME-strat-dim-est}, to prove Lemma \ref{lem:nearby-max-ent-strat}, stating that a PME strategy exists with high winning probability such that its reduced density matrix is close to the one of $\mcS$. From there we are able to prove the desired theorem.

\begin{lemma}\label{lem:contained-ME-strat-dim-est}
    Let $\kp\in H\otimes H$ and $\kpt\in \tH\otimes\tH$ be maximally entangled states on Hilbert spaces such that $\dim(H)\leq \dim(\tH)$ and $\epsilon\geq 0$. Suppose that there exist isometries $V_A:H\rightarrow \tH\otimes K_A$ and $V_B:H\rightarrow \tH\otimes K_B$ and a state $\ket{aux}\in K_A\otimes K_B$ such that
    \begin{equation*}
        \norm{(V_A\otimes V_B)\kp-\kpt\otimes\ket{aux}}\leq \epsilon.
    \end{equation*}
    Then $\dim(H)\geq (1-\epsilon^2)\dim(\tH)$.
\end{lemma}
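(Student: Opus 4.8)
The plan is to project onto the range of the Alice-side isometry and thereby reduce the whole statement to one scalar inequality. Write $d=\dim(H)$ and $\tilde d=\dim(\tilde H)$, and set $\ket{\Phi}=(V_A\otimes V_B)\kp$. Since $V_A^*V_A=\Id_H$, the operator $\Pi_A:=V_AV_A^*\in B(\tilde H\otimes K_A)$ is a projection of rank $d$, and $(\Pi_A\otimes\Id)\ket{\Phi}=(V_AV_A^*V_A\otimes V_B)\kp=\ket{\Phi}$. In words, the vector $\ket{\Phi}$ only ``uses'' a $d$-dimensional subspace of Alice's space $\tilde H\otimes K_A$, and the hypothesis says $\ket{\Phi}$ is $\epsilon$-close to $\kpt\otimes\ket{aux}$.

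First I would convert the hypothesis into a lower bound on $\bra{\tilde\psi}\!\otimes\!\bra{aux}\,(\Pi_A\otimes\Id)\,\kpt\otimes\ket{aux}$. Put $\ket{g}=\kpt\otimes\ket{aux}-\ket{\Phi}$, so $\norm{\ket g}\le\epsilon$; since $\ket{\Phi}$ and $\kpt\otimes\ket{aux}$ are unit vectors one has $\mathrm{Re}\langle\Phi|g\rangle=-\tfrac12\norm{\ket g}^2$. A one-line expansion, using $(\Pi_A\otimes\Id)\ket{\Phi}=\ket{\Phi}$ and $\Pi_A\ge0$, then gives
\[
\bra{\tilde\psi}\!\otimes\!\bra{aux}\,(\Pi_A\otimes\Id)\,\kpt\otimes\ket{aux}\ \ge\ 1+2\,\mathrm{Re}\langle\Phi|g\rangle\ =\ 1-\norm{\ket g}^2\ \ge\ 1-\epsilon^2 .
\]
(If $\epsilon\ge1$ the asserted conclusion is vacuous, so we may assume $\epsilon<1$.)

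Second, I would identify the left-hand side as $\Tr(\Pi_A\,\sigma)$, where $\sigma$ is the reduced density matrix of $\kpt\otimes\ket{aux}$ on Alice's space $\tilde H\otimes K_A$. Because $\kpt$ is maximally entangled on $\tilde H\otimes\tilde H$ and $\ket{aux}$ factors across $K_A\otimes K_B$, we get $\sigma=\tfrac1{\tilde d}\Id_{\tilde H}\otimes\Tr_{K_B}(\ket{aux}\bra{aux})$, so $\norm{\sigma}_\infty\le 1/\tilde d$. Finally, for any finite-rank projection $\Pi_A$ and any positive operator $\sigma$ one has $\Tr(\Pi_A\,\sigma)=\Tr(\Pi_A\sigma\Pi_A)=\sum_{i}\langle w_i|\sigma|w_i\rangle\le\mathrm{rank}(\Pi_A)\,\norm{\sigma}_\infty$, summing over an orthonormal basis $\{w_i\}$ of $\mathrm{Ran}(\Pi_A)$. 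Combining the last three facts yields $1-\epsilon^2\le d/\tilde d$, i.e. $\dim(H)\ge(1-\epsilon^2)\dim(\tilde H)$.

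I do not expect a genuine obstacle here; the only real choice is what to project onto. Trying to compare $\ket{\Phi}$ and $\kpt\otimes\ket{aux}$ directly, or to bound the rank of the partial trace of $\Pi_A$ down to $\tilde H$ alone, both fail (partial trace can raise rank), whereas the range projection $\Pi_A=V_AV_A^*$ cleanly encodes that Alice's half of $\ket{\Phi}$ lives in only $d$ dimensions; the rest is bookkeeping with traces.
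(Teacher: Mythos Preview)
Your proof is correct and takes a genuinely different route from the paper's. The paper argues via Schmidt coefficients: it invokes \Cref{lemma:Schmidt-bound} (whose proof uses Birkhoff--von Neumann and the rearrangement inequality) to bound $\norm{\bflambda-\bfmu}_2\le\epsilon$, then computes this $\ell_2$-distance explicitly from the known Schmidt data of the two maximally-entangled states to get $\epsilon^2\ge(m-n)/m$. Your argument bypasses Schmidt decompositions entirely: you project onto the range of $V_A$, use the elementary identity $\mathrm{Re}\langle\Phi|g\rangle=-\tfrac12\norm{g}^2$ to get $\langle\Pi_A\rangle\ge 1-\epsilon^2$, and then bound $\Tr(\Pi_A\sigma)\le\mathrm{rank}(\Pi_A)\,\norm{\sigma}_\infty=d/\tilde d$ using only that the Alice-marginal of $\kpt$ is maximally mixed. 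Your approach is more self-contained and elementary for this particular lemma (and incidentally never uses the hypothesis $\dim(H)\le\dim(\tilde H)$, which is only there to make the conclusion nontrivial). The paper's route has the advantage that \Cref{lemma:Schmidt-bound} is reused elsewhere in the appendix, so the Schmidt-coefficient machinery is already in place.
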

\begin{proof}
    Let $n=\dim(H)$ and $m=\dim(\tH)$, so we know that $n\leq m$. Let $d=\dim(\tH\otimes K_A)$ and
    \begin{equation*}
        (V_A\otimes V_B)\kp=\sum_{i=1}^d\lambda_i\ket{u_i}\otimes \ket{v_i}\text{ and } \kpt\otimes \ket{aux}=\sum_{i=1}^d\mu_i\ket{\tilde{u}_i}\otimes \ket{\tilde{v}_i}
    \end{equation*}
    be Schmidt decompositions, where $\lambda_1\geq\lambda_2\geq\dots\geq\lambda_d$ and $\mu_1\geq\mu_2\geq \dots\geq\mu_d$. Since $\kp$ is maximally entangled on $H\otimes H$, we know that $\lambda_i=1/\sqrt{n}$ if $i\leq n$ and $\lambda_i=0$ otherwise. Since $\kpt$ is maximally entangled on $\tH\otimes\tH$, we know that $\mu_1\leq 1/\sqrt{m}$. Consequently,
    \begin{equation*}
        \norm{(\lambda_i)_i-(\mu_i)_i}_2^2\geq n(\frac{1}{\sqrt{n}}-\frac{1}{\sqrt{m}})^2+\frac{m-n}{m}\geq \frac{m-n}{m}.
    \end{equation*}
    Combining this with Lemma \ref{lemma:Schmidt-bound}, we conclude that 
    \begin{equation*}
        m\cdot (1-\epsilon^2)\leq n\leq m.\qedhere
    \end{equation*}
\end{proof}

\begin{lemma}\label{lem:nearby-max-ent-strat}
    Let $\mcS=(\kp,A,B)$ be a strategy for a perfect symmetric non-local game $G=(\mcX,\nu,\mcA,D)$ with winning probability $1-\epsilon$. Let $\rho_A$ be the reduced density matrix on $H_A$, $\nu_A$ the marginal of $\nu$ on $\mcX$, and $\delta=\dsync(\mcS,\nu_A)$. Let $\hat{\nu}$ be a probability distribution on $\mcX\times\mcX$. Suppose that $G$ $(\kappa,\hat{\nu})$-PME-robustly self-tests the optimal PME strategy $\tilde{\mcS}=(\kpt,\tA)$ on $\tH_A\otimes \tH_A$ and that $\dim(\tH_A)=\dim(H_A)$. Then there exists a PME strategy $\hat{\mcS}_A=(\ket{\hat{\psi}},\hat{A})$ on $\hat{H}_A\subset H_A$ with reduced density matrix $\hat{\rho}_A$, such that
    \begin{align}
        \norm{\rho_A-\hat{\rho}_A}_1&\leq \mathcal{O}\big(\kappa(\epsilon+\poly(\delta))^2+\poly(\delta)\big)\text{ and}\label{eq:lem-nearby-max-ent-strat-rho-est}\\
        \omega(G;\hat{\mcS}_A)&\geq 1-\epsilon-\poly(\delta).\label{eq:lem-nearby-max-ent-strat-succ-prob-est}
    \end{align}
\end{lemma}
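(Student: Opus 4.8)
The plan is to extract $\hat{\mcS}_A$ from the approximate decomposition of $\mcS$ built inside the proof of \Cref{thm:constr-isom-gen-strat}, and to exploit the extra hypothesis $\dim(\tH_A)=\dim(H_A)$ through the dimension estimate \Cref{lem:contained-ME-strat-dim-est}: this will force \emph{every} piece of that decomposition indexed by the ``good'' set $\Lambda$ to have a reduced density matrix close, in trace norm, to the maximally mixed state on all of $H_A$. Concretely I would replay the opening of that proof: pass to a nearby projective strategy $\mcS'$ via \Cref{lem:nearby-proj-strat}, and apply \Cref{thm:Vidick} to obtain the spectral projections $P_\lambda=\chi_{\geq\lambda}(\rho_A)$, the probability measure $\mu$ with $\rho_A=\int_0^\infty\rho_\lambda\,d\mu(\lambda)$ (here $\rho_\lambda$ is maximally mixed on $H_\lambda=P_\lambda H_A$), the ME strategies $\mcS_\lambda=(\ket{\psi_\lambda},P_\lambda A'P_\lambda)$, and the PME strategies $\mcS_\lambda'=(\ket{\psi_\lambda},A^\lambda)$ from \Cref{thm:Vidick}(v). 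Take $\Lambda$ exactly as in \Cref{clm:constr-Lambda}, so $\mu(\Lambda)\geq 1-\poly(\delta)$ and, for every $\lambda\in\Lambda$, $\omega(G;\mcS_\lambda)\geq 1-\epsilon-\poly(\delta)$ while $\dsync(\mcS_\lambda,\nu_A)\leq\poly(\delta)$ by the commutator bound defining $\Lambda$ together with \Cref{lem:comm-proj-PVM}.

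The heart of the argument is then a per-$\lambda$ statement for $\lambda\in\Lambda$. Because $\dsync(\mcS_\lambda,\nu_A)$ is small, \cite[Theorem 1.2]{dlS22a} produces a PME strategy $\mcS_\lambda''$ on $H_\lambda$ with the \emph{same} state $\ket{\psi_\lambda}$ and, by \Cref{lem:subs-meas-ops-corr-est} and \Cref{lem:win-prob-est}, winning probability $\omega(G;\mcS_\lambda'')\geq 1-\epsilon-\poly(\delta)$. Since $\omega_q(G)=1$ and $G$ is a $(\kappa,\hat\nu)$-PME-robust self-test, $\tilde{\mcS}$ is a local $(\kappa(\epsilon+\poly(\delta)),\hat\nu)$-dilation of $\mcS_\lambda''$; only the state-closeness inequality of this dilation is needed, so $\hat\nu$ (and any constant $c$) is irrelevant here, exactly as in the proof of \Cref{lem:ME-robust-self-testing-vNA}. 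This gives isometries $V_A,V_B\colon H_\lambda\to\tH_A\otimes K$ and a unit vector $\ket{aux}$ with $\norm{(V_A\otimes V_B)\ket{\psi_\lambda}-\kpt\otimes\ket{aux}}\leq\epsilon'$, where $\epsilon'=\kappa(\epsilon+\poly(\delta))$. Since $H_\lambda\subseteq H_A$ and $\dim(H_A)=\dim(\tH_A)$, \Cref{lem:contained-ME-strat-dim-est} applies and yields $\dim(H_\lambda)\geq(1-(\epsilon')^2)\dim(H_A)$; a short computation with the mutually orthogonal projections $P_\lambda$ and $\Id_{H_A}-P_\lambda$ then shows $\norm{\rho_\lambda-\dim(H_A)^{-1}\Id_{H_A}}_1=2\big(1-\dim(H_\lambda)/\dim(H_A)\big)\leq 2(\epsilon')^2$.

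To finish I integrate over $\lambda$: using $\norm{\rho_\lambda-\dim(H_A)^{-1}\Id_{H_A}}_1\leq 2$ on $\R_+\setminus\Lambda$, one gets $\norm{\rho_A-\dim(H_A)^{-1}\Id_{H_A}}_1\leq\int_\Lambda\norm{\rho_\lambda-\dim(H_A)^{-1}\Id_{H_A}}_1\,d\mu(\lambda)+2\mu(\R_+\setminus\Lambda)\leq 2(\epsilon')^2+2\poly(\delta)$. By \Cref{thm:Vidick}(v), \Cref{lem:win-prob-est} and \Cref{lem:nearby-proj-strat} the $\mu$-average over $\Lambda$ of $\omega(G;\mcS_\lambda')$ is at least $1-\epsilon-\poly(\delta)$, so I may choose $\lambda^\ast\in\Lambda$ with $\omega(G;\mcS_{\lambda^\ast}')\geq 1-\epsilon-\poly(\delta)$. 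Taking $\hat{\mcS}_A=\mcS_{\lambda^\ast}'$, $\hat H_A=H_{\lambda^\ast}\subseteq H_A$ and $\hat\rho_A=\rho_{\lambda^\ast}$ gives \Cref{eq:lem-nearby-max-ent-strat-succ-prob-est} directly, while \Cref{eq:lem-nearby-max-ent-strat-rho-est} follows from $\norm{\rho_A-\hat\rho_A}_1\leq\norm{\rho_A-\dim(H_A)^{-1}\Id_{H_A}}_1+\norm{\dim(H_A)^{-1}\Id_{H_A}-\rho_{\lambda^\ast}}_1\leq 4(\epsilon')^2+2\poly(\delta)=\mathcal{O}\big(\kappa(\epsilon+\poly(\delta))^2+\poly(\delta)\big)$, the estimate being vacuous when $\epsilon'\geq 1$.

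The only real obstacle I anticipate is bookkeeping: one must check that the PME companion $\mcS_\lambda''$ fed into the self-test genuinely retains the maximally entangled state $\ket{\psi_\lambda}$ on $H_\lambda$ (so that \Cref{lem:contained-ME-strat-dim-est} sees a maximally entangled state of the right dimension), and that each winning-probability loss along $\mcS\rightsquigarrow\mcS'\rightsquigarrow\mcS_\lambda\rightsquigarrow\mcS_\lambda''$ stays $\poly(\delta)$. These are all already recorded, with explicit constants, in \Cref{thm:Vidick}, \Cref{lem:nearby-proj-strat} and the proof of \Cref{clm:comm-and-strat-props-of-Q}, so this step is assembly rather than genuinely new analysis.
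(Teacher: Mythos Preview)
Your proposal is correct and follows essentially the same route as the paper: Vidick decomposition, a good set $\Lambda$, per-$\lambda$ application of the self-test together with the dimension estimate \Cref{lem:contained-ME-strat-dim-est}, and an integration/triangle inequality to control $\|\rho_A-\hat\rho_A\|_1$. The paper streamlines the argument by feeding the PME strategies $\mcS_\lambda'$ from \Cref{thm:Vidick}(v) directly into the self-test (so your detour through \cite{dlS22a} to build $\mcS_\lambda''$ and the averaging step to locate $\lambda^\ast$ are both unnecessary---one may simply take $\lambda_0=\min(\Lambda)$), and by comparing each $\rho_\lambda$ to $\rho_{\lambda_0}$ rather than to $\dim(H_A)^{-1}\Id_{H_A}$; these are cosmetic differences, not substantive ones.
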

\begin{remark}
    In both the above and the subsequent lemma, $\hat{\nu}$ does not affect the conclusions. This happens because the condition that $G$ is a $(\kappa,\hat{\nu})$-PME-robust self-test is slightly stronger than necessary, since we do not need the equations in the definition of a local dilation concerning the measurement operators. As we are not aware of any use of the slightly more general statement, we refrain from introducing the nomenclature required to formally state this more general statement.
\end{remark}

\begin{proof}
	Let $\mcS'=(\kp,A',B')$ be the projective strategy given by Lemma \ref{lem:nearby-proj-strat} and let $\delta'=\dsync(\mcS',\nu_A)$. By Lemma \ref{lem:nearby-proj-strat}, we know that $\delta'=\mathcal{O}(\delta^{\frac{1}{8}})$. Let the measure $\mu$ on $\mbR_+$, the projections $P_{\lambda}$ of $H_A$ onto $H_{\lambda}$ and the family of strategies $\mcS_{\lambda}'=(\ket{\psi_{\lambda}},A^{\lambda})$ on $H_{\lambda}$ with reduced density matrix $\rho_{\lambda}$ be as in Theorem \ref{thm:Vidick}. Let $C,\ C'$ and $C^{\lambda}$ be the correlations of $\mcS$, $\mcS'$ and $\mcS_{\lambda}'$, respectively. Let
    \begin{equation*}
        \expect{x,y\sim \nu}\sum_{a,b}|C_{x,y,a,b}-\int_0^{\infty}C^{\lambda}_{x,y,a,b}d\mu(\lambda)|=\alpha.
    \end{equation*}
    
    By Theorem \ref{thm:Vidick}(v), Lemma \ref{lem:nearby-proj-strat} and the triangle inequality, we know that $\alpha\leq \poly(\delta')+\mathcal{O}(\delta^{\frac{1}{8}})\leq \poly(\delta)$. Lemma \ref{lem:win-prob-est} tells us that for any $t>0$ we have that
    \begin{equation*}
        \mu(\{\lambda\geq 0|\omega(\mcS_{\lambda}')\geq1-\epsilon-t\})\geq 1-\frac{\alpha}{t}.
    \end{equation*}
    As in the proof of Theorem \ref{thm:constr-isom-gen-strat}, choosing $t$ is a trade-off between the strength of the bound and the measure of the set for which the bound holds. We choose $t=\sqrt{\alpha}$, but many choices are possible here. If we define
    \begin{equation*}
        \Lambda=\{\lambda\geq 0|\omega(\mcS_{\lambda}')\geq 1-\sqrt{\alpha}-\epsilon\},
    \end{equation*}
    we find that $\mu(\Lambda)\geq 1-\sqrt{\alpha}$.

    Let $\lambda\in\Lambda$ and let $m$ be the dimension of $\tH_A$. Since $\omega(\mcS_{\lambda}')\geq 1-\sqrt{\alpha}-\epsilon$ and $G$ is a $(\kappa,\hat{\nu})$-PME-robust self-test, we know that $\tilde{\mcS}$ is a local $(\kappa(\sqrt{\alpha}+\epsilon),\hat{\nu})$-dilation of $\mcS_{\lambda}'$. Consequently, there exist isometries $V_A:H_\lambda\rightarrow \tH_A\otimes K_A$ and $V_B: H_{\lambda}\rightarrow \tH_A\otimes K_B$ and a unit vector $\ket{aux}\in K_A\otimes K_B$ such that 
    \begin{equation*}
        \norm{(V_A\otimes V_B)\ket{\psi_{\lambda}}-\kpt\otimes\ket{aux}}\leq \kappa(\sqrt{\alpha}+\epsilon).
    \end{equation*}
    By Lemma \ref{lem:contained-ME-strat-dim-est}, we know that $\dim(H_{\lambda})\geq (1-\kappa(\sqrt{\alpha}+\epsilon)^2)m$.
    
    Let $P_{\lambda}$ be the projection of $H_A$ onto $H_{\lambda}$, so $\rho_{\lambda}=\Tr(P_{\lambda})^{-1}P_{\lambda}$. The $P_{\lambda}$ are constructed in Theorem \ref{thm:Vidick} as the spectral projections of $\rho_A$, so $P_{\lambda}=\chi_{\geq \lambda}(\rho_A)$. This implies that $P_{\lambda_1}\leq P_{\lambda_2}$ if and only if $\lambda_1\geq \lambda_2$. Let $\lambda_0=\min(\Lambda)$, which exists because $\Lambda$ is closed. Using the estimate of $\dim(H_{\lambda})$ and the order on the projections $P_{\lambda}$, we now calculate that
\begin{align*}
        \norm{\rho_{\lambda}-\rho_{\lambda_0}}_1=&\norm{\frac{1}{\Tr(P_{\lambda})}P_{\lambda}-\frac{1}{\Tr(P_{\lambda_0})}P_{\lambda_0}}_1\\
	\leq& \left(\frac{1}{\Tr(P_{\lambda})}-\frac{1}{\Tr(P_{\lambda_0})}\right)\norm{P_{\lambda}}_1+\frac{1}{\Tr(P_{\lambda_0})}\norm{P_{\lambda}-P_{\lambda_0}}_1\\
	\leq& \frac{1}{m}\left(\frac{1}{1-\kappa(\epsilon+\sqrt{\alpha})^2}-1\right)m+\frac{m\left(1-\left(1-\kappa(\epsilon+\sqrt{\alpha})^2\right)\right)}{m(1-\kappa(\epsilon+\sqrt{\alpha})^2)}\\
    = & \frac{2\kappa(\epsilon+\sqrt{\alpha})^2}{1-\kappa(\epsilon+\sqrt{\alpha})^2}\leq 4\kappa(\epsilon+\sqrt{\alpha})^2, 
    \end{align*}
    if $\kappa(\epsilon+\sqrt{\alpha})^2\leq 1/2$. 
    By directly estimating that $\norm{\rho_{\lambda}-\rho_{\lambda_0}}_1\leq 4\kappa(\epsilon+\sqrt{\alpha})^2$ if $\kappa(\epsilon+\sqrt{\alpha})^2\geq 1/2$, we know that 
    \begin{equation*}
        \norm{\rho_{\lambda}-\rho_{\lambda_0}}_1\leq 4\kappa(\epsilon+\sqrt{\alpha})^2
    \end{equation*}
    always holds. Since $\mu(\Lambda)\geq 1-\sqrt{\alpha}$ and 
    \begin{equation*}
        \rho_A=\int_0^{\infty}\rho_{\lambda}d\mu(\lambda),
    \end{equation*}
    it follows that $\norm{\rho_A-\rho_{\lambda_0}}_1\leq 4\kappa(\epsilon+\sqrt{\alpha})^2+\sqrt{\alpha}$. Since $\alpha,\beta\leq \poly(\delta)$, this shows \Cref{eq:lem-nearby-max-ent-strat-rho-est}. Note that we automatically satisfy  \Cref{eq:lem-nearby-max-ent-strat-succ-prob-est} since $\lambda_0\in \Lambda$, so the proof is complete.
\end{proof}

\begin{theorem} \label{thm:auto-spec-gap}
    Let $G=(\mcX,\nu,\mcA,D)$ be a $\beta$-synchronous non-local game and let $\hat{\nu}$ be a probability distribution on $\mcX\times\mcX$. Suppose that $G$  $(\kappa,\hat{\nu})$-PME-robustly self-tests a perfect strategy $\mcS$. There exists universal constants $C_1,C_2,\zeta>0$ such that the spectral gap of the game polynomial $T_{G,\mcS}$ is at least $C_1\beta((\mathrm{id}+\kappa^2)^{-1}(C_2))^{\zeta}$.
\end{theorem}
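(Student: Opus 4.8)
The plan is to argue by contradiction, converting a small spectral gap into a whole family of near‑optimal strategies whose reduced states on $\tH_A$ are simultaneously almost maximally mixed, which turns out to be impossible. Write $T:=T_{G,\mcS}$; since $\sum_{a,b}\tA^x_a\otimes\tB^y_b=\Id$ one has $0\leq T\leq\Id$, and $T\kpt=\kpt$ because $\mcS$ is perfect, so $1$ is the largest eigenvalue of $T$ with eigenvector $\kpt$. Let $\alpha$ be the spectral gap of $T$. By the definition of spectral gap there is a unit vector $\ket\phi$ with $\ket\phi\perp\kpt$ and $T\ket\phi=(1-\alpha)\ket\phi$; if the eigenvalue $1$ has multiplicity larger than one we set $\alpha=0$ and take $\ket\phi$ to be any unit eigenvector for eigenvalue $1$ orthogonal to $\kpt$. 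For $\varphi\in\R$ put $\ket{\chi_\varphi}:=\tfrac1{\sqrt2}(\kpt+e^{i\varphi}\ket\phi)$; orthogonality makes this a unit vector, and since $\bra{\tilde\psi}T\ket\phi=(1-\alpha)\langle\tilde\psi|\phi\rangle=0$ we get $\bra{\chi_\varphi}T\ket{\chi_\varphi}=1-\tfrac\alpha2$. Hence each $\mcS_{\chi_\varphi}:=(\ket{\chi_\varphi},\tA,\tB)$ is a strategy for $G$ with winning probability $\geq1-\tfrac\alpha2$, so by \Cref{lem:value-dsync} (recall $G$ is $\beta$‑synchronous) $\dsync(\mcS_{\chi_\varphi},\nu_A)\leq\tfrac{\alpha}{2\beta}$; likewise $\mcS_\phi:=(\ket\phi,\tA,\tB)$ wins with probability $\geq1-\alpha$ and has $\dsync(\mcS_\phi,\nu_A)\leq\alpha/\beta$. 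We may assume $\alpha$ is smaller than any prescribed universal constant, as otherwise there is nothing to prove.

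Next I would feed $\mcS_\phi$ and every $\mcS_{\chi_\varphi}$ into \Cref{lem:nearby-max-ent-strat}; this is legitimate since all these strategies live on $\tH_A\otimes\tH_A$, so $\dim(H_A)=\dim(\tH_A)$ as the lemma demands. For each of them the lemma produces a PME strategy whose reduced density matrix lies within $\mcO\!\big(\kappa(\alpha+\poly(\alpha/\beta))^2+\poly(\alpha/\beta)\big)$ in trace norm of $\rho_{\phi,A}$, respectively of $\rho_{\chi_\varphi,A}$. Each such PME strategy is near‑perfect to within $\mcO(\poly(\alpha/\beta))$, so the PME self‑testing hypothesis makes $\mcS$ a local $\big(\kappa(\mcO(\poly(\alpha/\beta))),\hat\nu\big)$‑dilation of it, and \Cref{lem:contained-ME-strat-dim-est} then forces the supporting subspace to have dimension at least $\big(1-\kappa(\mcO(\poly(\alpha/\beta)))^2\big)\dim(\tH_A)$; consequently that PME strategy's (maximally mixed) reduced density matrix is within $2\kappa(\mcO(\poly(\alpha/\beta)))^2$ of $\tfrac1m\Id_{\tH_A}$, where $m:=\dim(\tH_A)$ — this dimension estimate is precisely the one used inside the proof of \Cref{lem:nearby-max-ent-strat}. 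Chaining the two estimates produces one error $f=\mcO\!\big(\kappa(\alpha+\poly(\alpha/\beta))^2+\poly(\alpha/\beta)\big)$, independent of $\varphi$, with
\begin{equation*}
\norm{\rho_{\phi,A}-\tfrac1m\Id_{\tH_A}}_1\leq f\qquad\text{and}\qquad\norm{\rho_{\chi_\varphi,A}-\tfrac1m\Id_{\tH_A}}_1\leq f\ \ \text{for all }\varphi .
\end{equation*}

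I would then extract the contradiction from the $\varphi$‑dependence. Set $D:=\Tr_B(\kpt\bra\phi)\in B(\tH_A)$. Expanding the partial trace gives $\rho_{\chi_\varphi,A}=\tfrac12\big(\tfrac1m\Id+\rho_{\phi,A}\big)+\tfrac12\big(e^{-i\varphi}D+e^{i\varphi}D^\ast\big)$, so subtracting the displayed bound at $\varphi=0$ from that at $\varphi=\pi$, and at $\varphi=\pi/2$ from that at $\varphi=3\pi/2$, yields $\norm{D+D^\ast}_1\leq2f$ and $\norm{D-D^\ast}_1\leq2f$, hence $\norm{D}_1\leq2f$. On the other hand, writing the Schmidt decomposition $\ket\phi=\sum_i\lambda_i\ket{\alpha_i}\otimes\ket{\beta_i}$ and computing $D$ in the Schmidt basis of $\kpt$ shows $D$ has singular values $\lambda_i/\sqrt m$; a short estimate using $\sum_i\lambda_i^2=1$ and $|\lambda_i-\tfrac1{\sqrt m}|\leq\sqrt m\,|\lambda_i^2-\tfrac1m|$ then gives
\begin{equation*}
\norm{D}_1=\tfrac1{\sqrt m}\sum_i\lambda_i\geq 1-\sum_i\Big|\lambda_i^2-\tfrac1m\Big|=1-\norm{\rho_{\phi,A}-\tfrac1m\Id_{\tH_A}}_1\geq 1-f .
\end{equation*}
Combining, $1-f\leq\norm{D}_1\leq2f$, so $f\geq\tfrac13$. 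Since $\kappa(0)=0$ and $\poly(0)=0$ this already excludes $\alpha=0$; in general, unwinding $f=\mcO\!\big(\kappa(\alpha+\poly(\alpha/\beta))^2+\poly(\alpha/\beta)\big)\geq\tfrac13$ — writing $s=\Theta\big((\alpha/\beta)^\eta\big)$ for the fractional exponent $\eta>0$ supplied by the $\poly$‑terms of \Cref{lem:nearby-max-ent-strat}, so that $f=\mcO\big(s+\kappa(s)^2\big)=\mcO\big((\mathrm{id}+\kappa^2)(s)\big)$ — forces $s\geq(\mathrm{id}+\kappa^2)^{-1}(C_2)$ and therefore $\alpha\geq C_1\,\beta\,\big((\mathrm{id}+\kappa^2)^{-1}(C_2)\big)^{1/\eta}$ for universal constants $C_1,C_2>0$, which is the asserted bound with $\zeta=1/\eta$.

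The step I expect to be the main obstacle is the bookkeeping in the second paragraph: making the error $f$ honestly uniform in $\varphi$ and of the stated shape, and — most delicately — reading the dimension estimate $\dim\geq\big(1-\kappa(\mcO(\poly(\alpha/\beta)))^2\big)m$ out of the \emph{proof} of \Cref{lem:nearby-max-ent-strat} (it is not part of its statement) while tracking exactly which quantities enter $\kappa$ and which enter additively, so that the final inversion genuinely yields $(\mathrm{id}+\kappa^2)^{-1}$ rather than some other composition. The remaining ingredients — the bound $0\leq T\leq\Id$, the partial‑trace expansion, the singular‑value computation for $D$, and the four‑point sampling in $\varphi$ — are all routine.
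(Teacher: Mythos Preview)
Your argument is correct and follows the same overall skeleton as the paper: construct near-perfect strategies $(\ket{\chi_\varphi},\tA,\tB)$ with $\ket{\chi_\varphi}=\tfrac{1}{\sqrt2}(\kpt+e^{i\varphi}\ket\phi)$, feed them into \Cref{lem:nearby-max-ent-strat}, use the dimension bound from \Cref{lem:contained-ME-strat-dim-est} to conclude that the resulting reduced densities are $f$-close to $\tfrac1m\Id$, and derive a contradiction forcing $f\geq c$ for a universal $c$; the inversion to the stated bound on $\alpha$ then goes through exactly as in the paper.

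The genuine difference is in how the contradiction is extracted. The paper picks one (or two) phases, works directly with $\rho_{\chi_\varphi,A}$, and proves a universal lower bound $\norm{\rho_{\chi_\varphi,A}-\tfrac1n\Id}_1\geq\tfrac{1}{33}$ for a suitable choice of phase via a case split on the eigenvalues of $S+S^*$, invoking the majorization inequality \cite[Theorem III.5.1]{Bha97}. You instead sample four phases to isolate the cross term $D=\Tr_B(\kpt\bra\phi)$, obtain $\norm{D}_1\leq 2f$, and pair this with the elementary termwise inequality $\tfrac1{\sqrt m}\lambda_i\geq \tfrac1m-|\lambda_i^2-\tfrac1m|$ (applied after controlling $\rho_{\phi,A}$) to get $\norm{D}_1\geq 1-f$, hence $f\geq\tfrac13$. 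Your route avoids the Bhatia majorization and the case analysis entirely, at the cost of a few more invocations of \Cref{lem:nearby-max-ent-strat}; the final constants are comparable. A small simplification: you need not apply the lemma to $\mcS_\phi$ separately, since $\rho_{\phi,A}-\tfrac1m\Id=(\rho_{\chi_0,A}-\tfrac1m\Id)+(\rho_{\chi_\pi,A}-\tfrac1m\Id)$ already gives $\norm{\rho_{\phi,A}-\tfrac1m\Id}_1\leq 2f$ from the $\varphi\in\{0,\pi\}$ estimates.
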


\begin{proof}
    Let $\mcS=(\kpz,A)$ on $H_A\otimes H_A$ be the perfect strategy for $G$, $n$ be the dimension of $H_A$, $T=T_{G,\mcS}$ and $\Delta$ its spectral gap. Let $\kpo$ be a state in $H_A\otimes H_A$ orthogonal to $\kpz$ such that $\bra{\psi_1}T\kpo=1-\Delta$. Our goal is to use $\kpz$ and $\kpo$ to construct a state $\kp$ with high winning probability, but whose reduced density matrix on $H_A$ is not close to the normalised identity. Lemma \ref{lem:nearby-max-ent-strat} allows us to relate this to $\kappa$, from which we can derive the theorem. Note that we have the freedom to multiply $\kpo$ by a phase, which we will use later.

    Let $\ket{u_j}_{j=1}^n$ be an orthonormal basis of $H_A$ such that
    \begin{equation*}
        \kpz=\frac{1}{\sqrt{n}}\sum_j\ket{u_j}\ket{u_j}
    \end{equation*}
    and let $(S_{jk})_{jk}$ be a matrix such that
    \begin{equation*}
        \kpo=\sum_{jk}S_{jk}\ket{u_j}\ket{u_k}.
    \end{equation*}
    Define 
    \begin{equation*}
        \kp=\frac{1}{\sqrt{2}}\kpz+\frac{1}{\sqrt{2}}\kpo,
    \end{equation*}
    so 
    \begin{equation*}
        \kp=\frac{1}{\sqrt{2}}\sum_{jk}(S_{jk}+\frac{1}{\sqrt{n}}\delta_{jk})\ket{u_j}\ket{u_k}.
    \end{equation*}
    First note that the winning probability of $(\kp,A)$ is given by
    \begin{equation*}
        \omega((\kp,A))=\bra{\psi}T\ket{\psi}=\frac{1}{2}(\bra{\psi_0}T\kpz+\bra{\psi_1}T\kpo)=1-\frac{1}{2}\Delta,
    \end{equation*}
    and that this probability does not depend on the phase of $\kpo$. Next, we calculate its reduced density matrix $\rho$. For a state $\ket{\phi}$, given by
    \begin{equation*}
        \ket{\phi}=\sum_{jk}F_{jk}\ket{u_j}\ket{u_k},
    \end{equation*}
    the reduced density matrix on $H_A$ is given by
    \begin{equation*}
        \Tr_B(\ket{\phi}\bra{\phi})=\sum_{jkj'k'}F_{jk}\overline{F_{j'k'}}\delta_{kk'}\ket{u_j}\bra{u_{j'}}=\sum_{jj'}(FF^*)_{jj'}\ket{u_j}\bra{u_{j'}}.
    \end{equation*}
    Consequently, we find that
    \begin{equation*}
        \rho=\frac{1}{2}(S+\frac{1}{\sqrt{n}})(S+\frac{1}{\sqrt{n}})^*
    \end{equation*}
    with respect to the basis $\ket{u_j}$. Therefore, 
    \begin{equation*}
        \norm{\frac{1}{n}-\rho}_1=\norm{\frac{1}{n}-\frac{1}{2}(S+\frac{1}{\sqrt{n}})(S+\frac{1}{\sqrt{n}})^*}_1=\norm{\frac{1}{2n}-\frac{1}{2}SS^*-\frac{1}{2\sqrt{n}}(S+S^*)}_1.
    \end{equation*}
    If we consider the state $\ket{\psi'}$, given by
    \begin{equation*}
        \kp=\frac{1}{\sqrt{2}}\kpz-\frac{1}{\sqrt{2}}\kpo
    \end{equation*}
    and its reduced density matrix $\rho'$, then we have that
    \begin{equation*}
        \norm{\frac{1}{n}-\rho'}_1=\norm{\frac{1}{n}-\frac{1}{2}(-S+\frac{1}{\sqrt{n}})(-S+\frac{1}{\sqrt{n}})^*}_1=\norm{\frac{1}{2n}-\frac{1}{2}SS^*+\frac{1}{2\sqrt{n}}(S+S^*)}_1.
    \end{equation*}
    Using the triangle inequality, it follows that
    \begin{equation*}
        \frac{1}{\sqrt{n}}\norm{S+S^*}_1\leq \norm{\frac{1}{n}-\rho}_1+\norm{\frac{1}{n}-\rho'}_1.
    \end{equation*}
    We will now use our freedom in the phase of $\kpo$ to assume some properties of $S$ without loss of generality. It holds in general that $\norm{A+iB}_2^2=\norm{A}_2^2+\norm{B}_2^2$ for self-adjoint matrices $A$ and $B$, so we can assume that $\norm{S+S^*}_2^2\geq 2$, since $\Tr(SS^*)=1$. Next, we can possibly multiply $S$ by $-1$ to assume that
    \begin{equation*}
        \frac{1}{2\sqrt{n}}\norm{S+S^*}_1\leq \norm{\frac{1}{n}-\rho}_1.
    \end{equation*}
    We now claim that at least one of
    \begin{equation*}
        \frac{1}{16+12\sqrt{2}}\leq \frac{1}{2\sqrt{n}}\norm{S+S^*}_1 \text{ and } \frac{1}{8+6\sqrt{2}}\leq \frac{1}{2}\norm{\frac{1}{n}-SS^*}_1
    \end{equation*}
    holds. Let $(\lambda_j)_{j=1}^n$ be the eigenvalues of $S+S^*$ and define the sets
    \begin{equation*}
        \Lambda_>=\{1\leq j\leq n||\lambda_j|\geq \frac{8+6\sqrt{2}}{\sqrt{n}}\} \text{ and } \Lambda_<=\{1\leq j\leq n||\lambda_j|< \frac{8+6\sqrt{2}}{\sqrt{n}}\}.
    \end{equation*}
    Now we have that 
    \begin{equation*}
        \sum_{j\in \Lambda_>}\lambda_j^2\geq 1\text{ or } \sum_{j\in \Lambda_<}\lambda_j^2\geq 1.
    \end{equation*}
    In the first case, we use that the sequence of eigenvalues of $4SS^*$ majorises the squares of the eigenvalues of both the positive and negative parts of $S+S^*$ \cite[Theorem III.5.1]{Bha97}, so 
    \begin{equation*}
        \norm{\frac{1}{n}-SS^*}_1\geq \sum_{j\in\Lambda_>}\left(\frac{1}{8}\lambda_j^2-\frac{1}{n}\right)\geq \sum_{j\in\Lambda_>}\left(\frac{1}{8}\lambda_j^2-\frac{1}{136+96\sqrt{2}}\lambda_j^2\right)\geq \frac{2}{8+6\sqrt{2}},
    \end{equation*}
    where the final step involves some manipulations of fractions. In the other case we find that
    \begin{equation*}
        1\leq \sum_{j\in \Lambda_<}\lambda_j^2\leq \sum_{j\in \Lambda_<}\frac{8+6\sqrt{2}}{\sqrt{n}}|\lambda_j|,
    \end{equation*}
    so $(8+6\sqrt{2})\norm{S+S^*}_1\geq\sqrt{n}$. This shows the claim. Since 
    \begin{equation*}
        \norm{\frac{1}{n}-\rho}_1\geq \frac{1}{2}\norm{\frac{1}{n}-SS^*}_1-\frac{1}{2\sqrt{n}}\norm{S+S^*}_1,
    \end{equation*}
    we can now conclude that
    \begin{equation}\label{eq:thm-auto-spec-gap-univ-low-bound}
        \norm{\frac{1}{n}-\rho}_1\geq \frac{1}{16+12\sqrt{2}}>\frac{1}{33}
    \end{equation}
    in either case of the claim. We have now constructed a state $\kp$ with high winning probability such that the reduced density matrix is not close to the normalised identity.

    We are now in a position to use Lemma \ref{lem:nearby-max-ent-strat}. Note that the winning probability controls the asynchronicity by Lemma \ref{lem:value-dsync} since $G$ is $\beta$-synchronous, so $\dsync((\kp,A))\leq \Delta/(2\beta)$. Using Lemma \ref{lem:nearby-max-ent-strat} we obtain a maximally entangled strategy $\hat{\mcS}=(\ket{\hat{\psi}},\hat{A})$ on $\hat{H}\otimes \hat{H}$ with reduced density matrix $\hat{\rho}$ such that $\omega(\hat{\mcS})\geq 1-\poly(\Delta/\beta)$ and
    \begin{equation*}
        \norm{\rho-\hat{\rho}}_1= \mcO(\kappa(\poly(\frac{\Delta}{\beta}))^2+\poly(\frac{\Delta}{\beta})).
    \end{equation*}
    By the $(\kappa,\hat{\nu})$-PME-robust self-testing of $G$ and Lemma \ref{lem:contained-ME-strat-dim-est}, we know that $\dim(\hat{H})\geq (1-\kappa(\poly(\Delta/\beta))^2)n$. From this it follows that
    \begin{equation*}
        \norm{\hat{\rho}-\frac{1}{n}}_1=\dim(\hat{H})\left(\frac{1}{\dim(\hat{H})}-\frac{1}{n}\right)+(n-\dim(\hat{H}))\frac{1}{n}\leq 2\kappa\left(\poly\left(\frac{\Delta}{\beta}\right)\right)^2,
    \end{equation*}
    so 
    \begin{equation*}
        \norm{\rho-\frac{1}{n}}_1= \mcO(\kappa(\poly(\frac{\Delta}{\beta}))^2+\poly(\frac{\Delta}{\beta})).
    \end{equation*}
    By \Cref{eq:thm-auto-spec-gap-univ-low-bound}, we have now shown that
    \begin{equation*}
        \frac{1}{33}= \mcO(\kappa(\poly(\frac{\Delta}{\beta}))^2+\poly(\frac{\Delta}{\beta})).
    \end{equation*}
    Inverting this inequality shows that there exist constants $C_1,C_2,\zeta>0$ such that
    \begin{equation*}
        \Delta\geq C_1\beta ((\mathrm{id}+\kappa^2)^{-1}(C_2))^\zeta,
    \end{equation*}
    proving the theorem.
\end{proof}

\begin{corollary} \label{cor:PME-robust-self-testing-implies-robust-self-testing}
    There exist universal constants $C_1,C_2,C_3,\zeta_1,\zeta_2>0$ such that every $\beta$-synchronous $\kappa$-PME-robust self-test is a $\kappa'$-robust self-test with
    \begin{equation*}
        \kappa'(\epsilon)\leq C_1\frac{\sqrt{(\mathrm{id}+\kappa)}(C_2(\frac{\epsilon}{\beta})^{\zeta_1})}{\beta ((\mathrm{id}+\kappa^2)^{-1}(C_3))^{\zeta_2}}.
    \end{equation*}
\end{corollary}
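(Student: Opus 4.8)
The plan is to simply chain the two main results established so far in this section: \Cref{thm:auto-spec-gap}, which for a PME-robust self-test produces a quantitative lower bound on the spectral gap of the game polynomial automatically, and \Cref{cor:lifting-PME-assumption-if-spec-gap-beta-sync}, which upgrades such a spectral gap into a genuine robust self-test. Since a $\kappa$-PME-robust self-test is by definition a $(\kappa,\nu)$-PME-robust self-test, throughout we take $\hat{\nu}=\nu$, so that $c=1$ and $G_{\hat{\nu}}=G$.

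First I would invoke \Cref{thm:auto-spec-gap} with $\hat{\nu}=\nu$: since $G$ is $\beta$-synchronous and $\kappa$-PME-robustly self-tests its perfect strategy $\tilde{\mcS}$, the game polynomial $T_{G,\tilde{\mcS}}$ has spectral gap
\[
    \alpha \;\geq\; C_1'\,\beta\,\bigl((\mathrm{id}+\kappa^2)^{-1}(C_2')\bigr)^{\zeta'}
\]
for the universal constants $C_1',C_2',\zeta'>0$ provided by that theorem. Next I would feed this $\alpha$ into \Cref{cor:lifting-PME-assumption-if-spec-gap-beta-sync}, again with $\hat{\nu}=\nu$ and $c=1$: its hypotheses---namely that $G$ is perfect, $\beta$-synchronous, a $(\kappa,\nu)$-PME-robust self-test, and that $T_{G,\tilde{\mcS}}$ has a (strictly positive) spectral gap $\alpha$---are exactly what we have, so it yields that $G$ is a $\kappa'$-robust self-test with
\[
    \kappa'(\epsilon) \;=\; \mcO\!\left(\alpha^{-1/2}\,\sqrt{\mathrm{id}+\kappa}\bigl(\poly(\beta^{-1}\epsilon)\bigr)\right).
\]
Substituting the lower bound on $\alpha$, using $\beta\in(0,1)$ so that $\beta^{-1/2}\leq\beta^{-1}$, and unfolding the $\mcO(\cdot)$ and $\poly(\cdot)$ into explicit exponents and constants then produces a bound of exactly the claimed shape, with new universal constants $C_1,C_2,C_3,\zeta_1,\zeta_2>0$. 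These constants are universal because the constants hidden in \Cref{cor:lifting-PME-assumption-if-spec-gap-beta-sync} are universal (by the remark following it) and those of \Cref{thm:auto-spec-gap} are as well, so nothing depends on the size or other data of the game.

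I expect no genuine analytic obstacle here; the work is purely bookkeeping. The two points that need care are: (i) confirming that the ``$\poly$'' appearing in \Cref{cor:lifting-PME-assumption-if-spec-gap-beta-sync} can be taken to be a single monomial up to a constant, so that it may be written as $C_2(\epsilon/\beta)^{\zeta_1}$ inside the argument of $\sqrt{\mathrm{id}+\kappa}$; and (ii) checking that the $\beta$-dependence coming from the factor $\alpha^{-1/2}$ together with that from $\poly(\beta^{-1}\epsilon)$ can be collected into the single denominator factor $\beta$ of the claimed bound (harmlessly weakening the estimate via $\beta<1$), while the term $(\mathrm{id}+\kappa^2)^{-1}(C_3)$ coming from $\alpha^{-1/2}$ is kept in the denominator with an exponent $\zeta_2$. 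No additional estimate beyond those already proved is needed.
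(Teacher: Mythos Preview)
Your proposal is correct and is exactly the (implicit) argument the paper intends: the corollary is stated without proof precisely because it follows by combining \Cref{thm:auto-spec-gap} (spectral gap lower bound) with \Cref{cor:lifting-PME-assumption-if-spec-gap-beta-sync} (upgrading PME-robustness plus spectral gap to full robustness) in the case $\hat{\nu}=\nu$, $c=1$. Your bookkeeping remarks about absorbing $\beta^{-1/2}$ into $\beta^{-1}$ and writing $\poly$ as a monomial are also on point.
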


\section{Spectral gap of the Quantum Low Degree Test} \label{sec:QLDT}
In Corollary \ref{cor:PME-robust-self-testing-implies-robust-self-testing}, we have found that every $\kappa$-PME-robust self-test is also a robust self-test with a related robustness. However, it is not known that the Quantum Low Degree Test is a $(\kappa,\nu_{\mathrm{qldt}})$-PME-robust self-test. Using the notation in Definition \ref{def:qubit-test}, we only know that the Quantum Low Degree Test is a $(\kappa,\nu')$-PME-robust self-test. We are therefore forced to explicitly compute a lower bound for the spectral gap for this distribution. In fact, we are able to calculate the exact spectral gap.

\begin{theorem}\label{thm:QLDT-spec-gap}
    Let $G_{\mathrm{qldt}}=(\mcX_{\mathrm{qldt}},\nu_{\mathrm{qldt}},\mcA_{\mathrm{qldt}},D_{\mathrm{qldt}})$ be the Quantum Low Degree Test for $k$ qubits with optimal strategy $\mcS$ and let $G'=(\mcX_{\mathrm{qldt}},\nu_{\mathrm{qldt}}',\mcA_{\mathrm{qldt}},D_{\mathrm{qldt}})$. Let $d$ be the relative distance of the code used in the Quantum Low Degree Test. Then the spectral gap of the game polynomial $T_{G',\mcS}$ is $\frac{d}{2}$.   
\end{theorem}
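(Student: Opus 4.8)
The plan is to write the game polynomial $T:=T_{G',\mathcal S}$ down explicitly for the ideal Pauli strategy and diagonalise it. First I would unpack the Quantum Low Degree Test and its renormalised restriction $\nu'_{\mathrm{qldt}}$ to the image of $\phi$ from \cite{CVY23}: writing $E_1,\dots,E_n\in\mathbb F_2^k$ for the columns of the $k\times n$ generating matrix of $C_{\mathrm{RM2}}$, so that $S_X=S_Z=\{E_1,\dots,E_n\}$, the game $G'$ is the consistency test in which the referee picks $W\in\{X,Z\}$ and $j\in\{1,\dots,n\}$ uniformly at random, sends the question $\phi(W,E_j)$ to \emph{both} players, and accepts iff they return the same bit of $\mathbb F_2$. (This form is forced: for the ideal strategy $\mathcal S$ any pair $(x,y)$ in the image of $\phi$ corresponding to two distinct Paulis yields the flat correlation $C_{x,y,a,b}=\tfrac14$, so a perfect game can impose no constraint on such a pair; only diagonal pairs with a common $W$ and column carry a non-trivial predicate.) In $\mathcal S$ one has $\widetilde A^{\phi(W,E_j)}_a=\tfrac12(\Id+(-1)^a\sigma^W(E_j))$, and since $X,Z$ are real symmetric, Bob's operators coincide with Alice's, so using $\sum_a(\Id+(-1)^aP)\otimes(\Id+(-1)^aP)=2(\Id\otimes\Id+P\otimes P)$,
\[
T=\mathbb E_{W,j}\sum_{a\in\mathbb F_2}\widetilde A^{\phi(W,E_j)}_a\otimes\widetilde A^{\phi(W,E_j)}_a=\tfrac12\Id+\tfrac12 R,\qquad R:=\frac1{2n}\sum_{j=1}^n\Bigl(\sigma^X(E_j)^{\otimes2}+\sigma^Z(E_j)^{\otimes2}\Bigr).
\]
Thus $\mathrm{spec}(T)=\tfrac12+\tfrac12\,\mathrm{spec}(R)$, and it remains to identify the two largest eigenvalues of $R$.

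Next I would diagonalise $R$ using the identification $\mathbb C^{2^k}\otimes\mathbb C^{2^k}\cong M_{2^k}(\mathbb C)$ under which $P\otimes Q$ acts on a matrix by $M\mapsto PMQ^{T}$; since each $\sigma^W(E_j)$ is real symmetric, $R$ acts by $M\mapsto\frac1{2n}\sum_j\bigl(\sigma^X(E_j)M\sigma^X(E_j)+\sigma^Z(E_j)M\sigma^Z(E_j)\bigr)$. Taking $M=\sigma^X(\mathbf b)\sigma^Z(\mathbf c)$ for $\mathbf b,\mathbf c\in\mathbb F_2^k$ and using the commutation relations $\sigma^X(E_j)\,\sigma^Z(\mathbf c)=(-1)^{\langle E_j,\mathbf c\rangle}\sigma^Z(\mathbf c)\,\sigma^X(E_j)$ (and its analogue with $X\leftrightarrow Z$) together with $\sigma^W(E_j)^2=\Id$, one computes
\[
R\bigl(\sigma^X(\mathbf b)\sigma^Z(\mathbf c)\bigr)=\tfrac12\bigl(\widehat\chi(\mathbf b)+\widehat\chi(\mathbf c)\bigr)\,\sigma^X(\mathbf b)\sigma^Z(\mathbf c),\qquad \widehat\chi(\mathbf v):=\frac1n\sum_{j=1}^n(-1)^{\langle E_j,\mathbf v\rangle}.
\]
Since the $4^k$ operators $\sigma^X(\mathbf b)\sigma^Z(\mathbf c)$ form an orthogonal basis of $M_{2^k}(\mathbb C)$, this fully diagonalises $R$; in particular $M=\Id$, i.e.\ the maximally entangled state, is the eigenvector with eigenvalue $\widehat\chi(0)=1$.

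Finally I would translate $\widehat\chi$ into the language of the code. Because the rows of the generating matrix are a basis of $C_{\mathrm{RM2}}$, the tuple $(\langle E_1,\mathbf v\rangle,\dots,\langle E_n,\mathbf v\rangle)\in\mathbb F_2^n$ is exactly the codeword $\mathbf v^{T}E$, so $\widehat\chi(\mathbf v)=\tfrac1n\bigl(n-2\,\mathrm{wt}(\mathbf v^{T}E)\bigr)=1-\tfrac2n\mathrm{wt}(\mathbf v^{T}E)$. Hence $\widehat\chi(0)=1$, and for $\mathbf v\neq0$ the codeword $\mathbf v^{T}E$ is non-zero (full row rank), so $\mathrm{wt}(\mathbf v^{T}E)\ge dn$ and $\widehat\chi(\mathbf v)\le1-2d$, with equality attained at a minimum-weight codeword. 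Therefore the largest eigenvalue of $R$ is $1$ with multiplicity one, and its second-largest eigenvalue is $\tfrac12\bigl(1+(1-2d)\bigr)=1-d$, attained on $\sigma^X(\mathbf b)$ for $\mathbf b$ a minimum-weight message. Consequently $T=\tfrac12\Id+\tfrac12R$ has largest eigenvalue $1$ with multiplicity one and second-largest eigenvalue $\tfrac12\bigl(1+(1-d)\bigr)=1-\tfrac d2$, so the spectral gap of $T_{G',\mathcal S}$ is $\tfrac d2$.

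The main obstacle is not the linear algebra but establishing the precise form of $G'$, i.e.\ verifying from the construction in \cite{CVY23} that the restriction $\nu'_{\mathrm{qldt}}$ to the image of $\phi$ is \emph{exactly} the uniform consistency test above: equal total weight on $X$- and $Z$-type questions, uniform weight over the $n$ code positions, and no extra question pairs carrying a trivial predicate. This exactness is needed for the value $d/2$: unequal $X$-versus-$Z$ weighting, non-uniform weighting of the columns $E_j$, or a fraction $c_0$ of trivially-satisfied pairs would respectively replace $d/2$ by $\tfrac d2$ scaled by the smaller of the two total weights, by a quantity controlled by a weighted Fourier coefficient of the code rather than its minimum distance, or by $(1-c_0)\tfrac d2$.
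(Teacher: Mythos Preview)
Your approach is essentially the same as the paper's. Both write the game polynomial as
\[
T_{G',\mcS}=\tfrac12\Id+\tfrac{1}{4n}\sum_{W\in\{X,Z\}}\sum_{\mathbf a\in S_W}\sigma^W(\mathbf a)\otimes\sigma^W(\mathbf a),
\]
diagonalise it in the Bell/Pauli basis of $\C^{2^k}\otimes\C^{2^k}$, and identify the eigenvalues with $\tfrac12+\tfrac14\bigl(\widehat\chi(\mathbf a)+\widehat\chi(\mathbf b)\bigr)$, then read off the gap from the minimum distance of the code. The paper works directly with tensor Bell states $\ket{\psi_{\mathbf a\mathbf b}}$; you use the equivalent matrix picture $\C^{2^k}\otimes\C^{2^k}\cong M_{2^k}(\C)$ with basis $\sigma^X(\mathbf b)\sigma^Z(\mathbf c)$. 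These are the same eigenbasis up to phases.

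One point worth correcting in your description of $G'$: the paper records (from \cite{CVY23}) that for each $(W,\mathbf a)$ there are \emph{two distinct} questions $x_{W,\mathbf a}^1,x_{W,\mathbf a}^2$ in $\mcX_{\mathrm{qldt}}$ with the same ideal measurement $\sigma^W(\mathbf a)$, and $\nu'_{\mathrm{qldt}}$ places mass $\tfrac{1}{4n}$ on each ordered cross pair $(x^1,x^2)$ and $(x^2,x^1)$; it is not the diagonal ``same question to both players'' test you describe. Your heuristic that distinct questions must correspond to distinct Paulis (hence only diagonal pairs survive) is therefore not quite right. This does not affect the computation, since the game polynomial depends only on the ideal measurement operators and both descriptions yield the same $T$, but your last-paragraph caveat is well placed: the exact structure of $\nu'_{\mathrm{qldt}}$ really does need to be lifted from \cite{CVY23} rather than deduced.
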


\begin{proof}
    For an element $a\in\mbF_2$ let $\overline{a}=1-a$. Let $B=\{\ket{\psi_{ab}}|a,b\in\mbF_2\}$ be the Bell basis of $\C^2\otimes \C^2$ given by 
	\begin{equation*}
		\ket{\psi_{ab}}=\frac{1}{2}\left(\ket{0a}+(-1)^b\ket{1\overline{a}}\right)
	\end{equation*}
	for all $a,b\in \mbF_2$. For $\bfa,\bfb\in \mbF_2^k$ define 
	\begin{equation*}
		\ket{\psi_{\bfa\bfb}}=\bigotimes_{i=1}^k\ket{\psi_{a_ib_i}}.
	\end{equation*}
	Note that 
	\begin{equation*}
		\bra{\psi_{ab}}X\otimes X\ket{\psi_{ab}}=(-1)^b\text{ and } \bra{\psi_{ab}}Z\otimes Z\ket{\psi_{ab}}=(-1)^a
	\end{equation*}
	for all $a,b\in \mbF_2$, so
	\begin{equation}\label{eq:sigma-W-meas-outcome}
		\bra{\psi_{\bfa\bfb}}\sigma^X(\bfc)\otimes \sigma^X(\bfc)\ket{\psi_{\bfa\bfb}}=(-1)^{\bfb\cdot \mathbf{c}}\text{ and } \bra{\psi_{\bfa\bfb}}\sigma^Z(c)\otimes \sigma^Z(c)\ket{\psi_{\bfa\bfb}}=(-1)^{\bfa\cdot \mathbf{c}}
	\end{equation}
	for all $\bfa,\bfb,\mathbf{c}\in\mbF_2^k$. 

    The game polynomial $T_{G',\mcS}$ is given by
	\begin{equation*}
		T_{G',\mcS}=\expect{(x,y)\sim\nu_{\qldt}'}\sum_{a,b}D_{\qldt}(a,b|x,y)A_a^x\otimes (A_b^y)^T,
	\end{equation*}
	where $\{A^x\}_{x\in\mcX_{\qldt}}$ are the measurement operators of the ideal strategy. Let $E$ be the generating matrix for the code $C_{\mathrm{RM2}}$ used in the construction of the Quantum Low Degree Test, so $S_X=S_Z=\{(E_{ij})_{i=1}^k|1\leq j\leq n\}$, and let $n$ be the length of the code. For each $W\in \{X,Z\}$ and $\bfa\in S_W$ there are two questions $x_{W,\bfa}^1$ and $x_{W,\bfa}^2$ in $\mcX_{\qldt}$ such that
	\begin{equation*}
		U(A^y)=\sigma^W(\bfa)\text{ for } y\in \{x_{W,\bfa}^1,x_{W,\bfa}^2\}.
	\end{equation*}
	Moreover, 
	\begin{equation*}
		\nu'_{\qldt}(x_{W,\bfa}^1,x_{W,\bfa}^2)=\nu'_{\qldt}(x_{W,\bfa}^2,x_{W,\bfa}^1)= \frac{1}{2}\cdot \frac{1}{2}\cdot\frac{1}{n},
	\end{equation*}
	where this propability is obtained by multiplying the probability that this $W$ is chosen, the probability that question 1 goes to Alice and the probability that $\bfa$ is chosen from $S_W$. For these questions, Alice and Bob win if they give the same answer. Since
	\begin{equation*}
		A^{x_{W,\bfa}^1}_0\otimes (A^{x_{W,\bfa}^2}_0)^T+A^{x_{W,\bfa}^1}_1\otimes (A^{x_{W,\bfa}^2}_1)^T=\frac{1}{2}\left(1+\sigma^W(\bfa)\otimes \sigma^W(\bfa)\right),
	\end{equation*}
	this implies that 
	\begin{equation*}
		T_{G',\mcS}= \frac{1}{2}+\frac{1}{4n}\sum_{W\in\{X,Z\}}\sum_{\bfa\in S_X}\sigma^W(\bfa)\otimes \sigma^W(\bfa),
	\end{equation*}
	where the factor arises from the fact that we are considering both $W=X$ and $W=Z$ and the observation that both orders of questions give the same expression.
	
	 By \Cref{eq:sigma-W-meas-outcome}, the above inequality tells us that
	\begin{equation*}
		\bra{\psi_{\bfa\bfb}}T_{G',\mcS}\ket{\psi_{\bfa\bfb}}= \frac{1}{2}+\frac{1}{4n}\sum_{\bfc\in S_X}\left((-1)^{\bfa\cdot \bfc}+(-1)^{\bfb\cdot \bfc}\right).
	\end{equation*}
	Therefore, the question we need to answer is for given $\bfa\in \mbF_2^k$, how many $\bfc\in S_X$ are there such that $\bfa\cdot \bfc=1$? For this, we need to recall how to calculate the physical representation $\bfa_{\mathrm{phys}}\in \mbF_2^n$ of a logical code word $\bfa_{\mathrm{logic}}\in\mbF_2^k$. This connection is defined by the generating matrix and is given by
	\begin{equation*}
		\bfa_{\mathrm{phys}}=\sum_{i=1}^ka_{\mathrm{logic},i}\cdot (E_{ij})_{j=1}^n.
	\end{equation*}
	Another way to represent this is by describing each $a_{\mathrm{phys},i}$, which gives
	\begin{equation*}
		a_{\mathrm{phys},i}=\bfa_{\mathrm{logic}}\cdot (E_{ij})_{i=1}^k.
	\end{equation*}
	This means that the number of $\bfc\in S_X$, i.e. the number of columns of $E$, for which $\bfa\cdot \bfc=1$ precisely equals the number of ones in the physical representation of the logical code word $\bfa$. By the nature of the code, this is at least $dn$ if $\bfa\neq 0$. Consequently, 
	\begin{equation*}
		\bra{\psi_{\bfa\bfb}}T_{G',\mcS}\ket{\psi_{\bfa\bfb}}\leq 1-\frac{d}{2}
	\end{equation*}
	unless $\bfa=0=\bfb$. Since the Quantum Low Degree Test is perfect and the distance in a code is attained, this shows that the spectral gap is $\frac{d}{2}$. 
\end{proof}

\begin{remark}
    The above proof is closely related to \cite[Example 1.2]{dlS22b}. Since $\bfa\mapsto \sigma^W(\bfa)\otimes \sigma^W(\bfa)$ is a unitary representation of $\mbF_2^k$, \cite[Example 1.2]{dlS22b} tells us that the operator 
    \begin{equation*}
        \frac{1}{n}\sum_{\bfa\in S_X}\sigma^W(\bfa)\otimes \sigma^W(\bfa)
    \end{equation*}
    has spectral gap $2d$, if you do not account for multiplicities. Doing this for both $W=X$ and $W=Z$ and arguing that $T_{G',\mcS}$ only has a one-dimensional eigenspace for eigenvalue 1, using much of the proof above, also yields \ref{thm:QLDT-spec-gap}.
\end{remark}

\begin{corollary}\label{cor:QLDT}
    The $1/2$-synchronised version of the Quantum Low Degree Test is a $\big(k,\poly(\log(k))\cdot\poly(\epsilon)\big)$-qubit test.
\end{corollary}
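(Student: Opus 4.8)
The plan is to deduce \Cref{cor:QLDT} by combining \Cref{lem:beta-synchronised}, \Cref{thm:QLDT-spec-gap} and \Cref{cor:lifting-PME-assumption-if-spec-gap-beta-sync}. Write $G_{\mathrm{qldt}}$ for the Quantum Low Degree Test and $G''$ for its $\tfrac12$-synchronised version. Recall from \Cref{sec:intro-QLDT} that $G_{\mathrm{qldt}}$ is a symmetric, perfect $(k,\kappa)$-$\mcC_{PME}$-qubit test with $\kappa(\epsilon)=\poly(\log(k))\cdot\poly(\epsilon)$, that its ideal strategy $\tilde{\mcS}$ uses no auxiliary Hilbert spaces — hence $\tilde{\mcS}\in\mcC_{PME}$ — and that $S_X=S_Z$ is the set of columns of the generating matrix; by \Cref{def:qubit-test} this says $G_{\mathrm{qldt}}$ $(\kappa,\nu'_{\mathrm{qldt}})$-PME-robustly self-tests $\tilde{\mcS}$, with $\nu'_{\mathrm{qldt}}$ the renormalised restriction of $\nu_{\mathrm{qldt}}$ to the image of $\phi$. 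The steps are: (1) pass PME-robust self-testing from $G_{\mathrm{qldt}}$ to $G''$; (2) quote the exact spectral-gap value of $G_{\mathrm{qldt}}$ restricted to the image of $\phi$; (3) invoke the spectral-gap lifting corollary — in the form allowing a distance distribution $\hat{\nu}\neq\nu$, since $\nu'_{\mathrm{qldt}}$ is not the game distribution — and then check that the result is exactly the soundness clause of \Cref{def:qubit-test} for $G''$.

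For (1): the distribution of $G''$ is $\nu''(x,y)=\tfrac12\nu_A(x)\delta_{x,y}+\tfrac12\nu_{\mathrm{qldt}}(x,y)$, which is symmetric, satisfies $\nu''(x,x)\geq\tfrac12\nu''_A(x)$, and inherits $D_{\mathrm{qldt}}$, so $G''$ is $\tfrac12$-synchronous; moreover $\tilde{\mcS}$ wins $G''$ perfectly, because every pair in the support of $\nu''$ — including every diagonal pair $(x,x)$, as $\nu_{\mathrm{qldt}}(x,x)>0$ by synchronicity — already lies in the support of $\nu_{\mathrm{qldt}}$, so $G''$ is perfect and complete with the same data $(S_X,S_Z,\phi,\tilde{\mcS})$ and the same (quasipolynomially many) questions. \Cref{lem:beta-synchronised} with $\beta=\tfrac12$ and $\hat{\nu}=\nu'_{\mathrm{qldt}}$ then shows that $G''$ $(\kappa(2\,\cdot\,),\nu'_{\mathrm{qldt}})$-PME-robustly self-tests $\tilde{\mcS}$. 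For (2): \Cref{thm:QLDT-spec-gap} gives that the game polynomial $T_{G',\tilde{\mcS}}$ with $G'=(\mcX_{\mathrm{qldt}},\nu'_{\mathrm{qldt}},\mcA_{\mathrm{qldt}},D_{\mathrm{qldt}})$ has spectral gap $\alpha=\delta_{\mathrm{code}}/2$, where $\delta_{\mathrm{code}}$ is the relative distance of the binary Reed--Muller code used; for the parameter choice producing quasipolynomially many questions $\delta_{\mathrm{code}}=\Omega(1)$, so $\alpha^{-1/2}=O(1)$ (at worst $\poly(\log(k))$). For (3): since $\nu''(x,y)\geq\tfrac12\nu_{\mathrm{qldt}}(x,y)$ one gets $\nu'_{\mathrm{qldt}}\leq c\,\nu''$ with $c=2/\nu_{\mathrm{qldt}}\big((\mathrm{im}\,\phi)^2\big)$, and $\nu_{\mathrm{qldt}}\big((\mathrm{im}\,\phi)^2\big)=\Omega(1/\poly(\log(k)))$ since the Pauli-basis part of the test runs with non-negligible probability, so $c=\poly(\log(k))$. \Cref{cor:lifting-PME-assumption-if-spec-gap-beta-sync} then gives that $G''$ is a $(\kappa',\nu'_{\mathrm{qldt}})$-robust self-test for $\mcC_{all}$ with $\kappa'(\epsilon)=\mcO\!\big(\alpha^{-1/2}\sqrt{c\cdot\id+\kappa(2\,\cdot\,)}\,(\poly(2\epsilon))\big)$, which, inserting $\alpha^{-1/2},c=\poly(\log(k))$ and $\kappa(\epsilon)=\poly(\log(k))\cdot\poly(\epsilon)$, is again $\poly(\log(k))\cdot\poly(\epsilon)$.

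It remains to match distributions: \Cref{def:qubit-test} asks for soundness of $G''$ with respect to the renormalised restriction $\mu'$ of $\nu''$ to the image of $\phi$, whereas (3) yields soundness with respect to $\nu'_{\mathrm{qldt}}$. These two probability distributions differ only on the diagonal of $(\mathrm{im}\,\phi)^2$, and the ratio of their marginals on $\mcX$ at $x\in\mathrm{im}\,\phi$ is controlled by $1+\nu_A(x)/\sum_{y\in\mathrm{im}\,\phi}\nu_{\mathrm{qldt}}(x,y)$, which is $\poly(\log(k))$ because, conditioned on one player receiving a Pauli-basis question, the other receives one with probability $\Omega(1/\poly(\log(k)))$ — indeed with probability $1$ if, as in the construction, the image of $\phi$ is reached only through a subtest that queries both players inside it. Since the measurement a strategy assigns to a question does not depend on that question's partner, a local $(\eta,\nu'_{\mathrm{qldt}})$-dilation is automatically a local $(\poly(\log(k))\cdot\eta,\mu')$-dilation; hence $G''$ is a $(k,\poly(\log(k))\cdot\poly(\epsilon))$-$\mcC_{all}$-qubit test, which is \Cref{cor:QLDT}. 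The main obstacle is this distribution bookkeeping: one must extract from the explicit construction of $\nu_{\mathrm{qldt}}$ the two quantitative facts that the image of $\phi$ is reached, and is cross-checked between the two players, with probability $\Omega(1/\poly(\log(k)))$, so that both the synchronising perturbation (step 1) and the restriction to the image of $\phi$ (final step) cost only polylogarithmic factors; the rest is a direct application of the three cited results.
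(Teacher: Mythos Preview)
Your approach is correct and matches the paper's: combine \Cref{lem:beta-synchronised}, \Cref{thm:QLDT-spec-gap}, and \Cref{cor:lifting-PME-assumption-if-spec-gap-beta-sync}. The paper's one-line proof shortcuts your step~(3) with the sharper observation $\nu'_{\mathrm{qldt}}\leq 4\nu_{\mathrm{qldt}}$ (so $c=O(1)$ rather than $\poly(\log k)$), and does not explicitly carry out the distribution-matching between $\mu'$ and $\nu'_{\mathrm{qldt}}$ in your final paragraph; your more careful bookkeeping there is justified, but the overall argument is the same.
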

\begin{proof}
    This is the consequence of combining Corollary \ref{cor:lifting-PME-assumption-if-spec-gap-beta-sync}, Theorem \ref{thm:QLDT-spec-gap} and Lemma \ref{lem:beta-synchronised} after observing that $\nu'_{\qldt}\leq 4\nu_{\qldt}$.
\end{proof}
\bigskip

\appendix
\section{Proof of Lemma \ref{lem:PME-robust-self-testing-vNA}}
In this appendix, we prove \Cref{lem:PME-robust-self-testing-vNA}. The proof relies on several technical lemmas that we outline below.

\begin{lemma}\label{lemma:Schmidt-bound}
Let $\ket{\psi}$ and $\ket{\wtd{\psi}}$ be two unit vectors in $\C^d\otimes \C^d$ with Schmidt decompositions $\ket{\psi}=\sum_{i=1}^d\lambda_{i}\ket{u_i}\otimes\ket{v_i}$ and $\ket{\wtd{\psi}}=\sum_{i=1}^d\mu_{i}\ket{\wtd{u}_i}\otimes\ket{\wtd{v}_i}$, where $\lambda_1\geq \lambda_2\geq\cdots\geq \lambda_d\geq 0$ and $\mu_1\geq \mu_2\geq\cdots\geq \mu_d\geq 0$. Then the $\ell_2$-distance of the sequences $\bflambda=(\lambda_1,\ldots,\lambda_d)$ and $\bfmu=(\mu_1,\ldots.\mu_d)$ is bounded above by the norm-distance of $\ket{\psi}$ and $\ket{\wtd{\psi}}$. That is,
\begin{equation}
    \norm{(\lambda_i)_i-(\mu_i)_i}_2\leq \norm{\ket{\psi}-\ket{\wtd{\psi}}}. \label{eq:Schmidt-bound}
\end{equation}
\end{lemma}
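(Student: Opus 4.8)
The plan is to reduce this to a classical matrix inequality. Fix any orthonormal bases of $\C^d$ and write $\kp = \sum_{j,k} F_{jk}\ket{u_j}\otimes\ket{u_k}$ and $\ket{\wtd{\psi}} = \sum_{j,k} \wtd{F}_{jk}\ket{u_j}\otimes\ket{u_k}$ for matrices $F,\wtd{F}\in M_d(\C)$. Under this identification, $\norm{\kp-\ket{\wtd\psi}} = \norm{F-\wtd{F}}_2$, the Hilbert--Schmidt (Frobenius) norm, and the singular values of $F$ (resp. $\wtd{F}$), listed in decreasing order, are exactly the sorted Schmidt coefficients $\lambda_1\geq\cdots\geq\lambda_d$ (resp. $\mu_1\geq\cdots\geq\mu_d$). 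So the claim \eqref{eq:Schmidt-bound} becomes the statement that the $\ell_2$-distance between the sorted singular-value sequences of two matrices is at most their Frobenius distance.

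Next I would expand both sides. Using $\norm{F}_2^2 = \sum_i \lambda_i^2$ and $\norm{\wtd{F}}_2^2 = \sum_i \mu_i^2$,
\begin{equation*}
    \norm{\kp-\ket{\wtd\psi}}^2 = \norm{F-\wtd{F}}_2^2 = \sum_{i=1}^d \lambda_i^2 + \sum_{i=1}^d \mu_i^2 - 2\,\latRe\Tr(F^*\wtd{F}),
\end{equation*}
while
\begin{equation*}
    \norm{(\lambda_i)_i - (\mu_i)_i}_2^2 = \sum_{i=1}^d \lambda_i^2 + \sum_{i=1}^d \mu_i^2 - 2\sum_{i=1}^d \lambda_i\mu_i .
\end{equation*}
Hence \eqref{eq:Schmidt-bound} is equivalent to the single inequality $\latRe\Tr(F^*\wtd{F}) \leq \sum_{i=1}^d \lambda_i\mu_i$.

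Finally I would invoke von Neumann's trace inequality (equivalently, this is the Frobenius-norm case of Mirsky's theorem): for any $A,B\in M_d(\C)$ one has $\abs{\Tr(AB)} \leq \sum_{i=1}^d \sigma_i(A)\sigma_i(B)$ with singular values in decreasing order. Applying this with $A = F^*$, $B = \wtd{F}$, and using $\sigma_i(F^*) = \sigma_i(F) = \lambda_i$, $\sigma_i(\wtd{F}) = \mu_i$, gives $\latRe\Tr(F^*\wtd{F}) \leq \abs{\Tr(F^*\wtd{F})} \leq \sum_{i=1}^d \lambda_i\mu_i$, which is exactly what is needed.

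There is no real obstacle here beyond citing the correct classical fact; the only point requiring a line of care is the identification of the Hilbert-space norm distance with the Frobenius distance of the coefficient matrices under a \emph{fixed} common basis (not the Schmidt bases, which differ for the two vectors), and the observation that singular values are basis-independent so the sorted Schmidt coefficients really are the sorted singular values. If a self-contained argument for von Neumann's trace inequality is preferred, one can instead note $\latRe\Tr(F^*\wtd F) \le \max_{U,V\ \text{unitary}} \latRe\Tr(U F^* V \wtd F)$ and diagonalize via the respective SVDs, but citing \cite{Bha97} (as is already done elsewhere in the paper) is cleanest.
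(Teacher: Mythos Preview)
Your proof is correct. Both your argument and the paper's reduce the claim to the single inequality $\latRe\langle\psi|\wtd\psi\rangle\le\sum_i\lambda_i\mu_i$ after expanding the squared norms, so at that level the approaches coincide. The difference is in how that inequality is dispatched: you pass to coefficient matrices and cite von~Neumann's trace inequality (equivalently, the Frobenius case of Mirsky's theorem), whereas the paper gives a self-contained proof by writing $\latRe\langle\psi|\wtd\psi\rangle=\sum_{k,\ell}\lambda_k\mu_\ell a_{k\ell}$ with $a_{k\ell}=\latRe(\langle u_k|\wtd u_\ell\rangle\langle v_k|\wtd v_\ell\rangle)$, showing that $(a_{k\ell})$ is entrywise dominated by a bistochastic matrix, and then applying Birkhoff--von~Neumann plus the rearrangement inequality. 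In effect the paper is re-deriving von~Neumann's inequality inline for this special case, so your route is shorter and equally rigorous---the citation to \cite{Bha97} is indeed already present in the paper.
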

\begin{proof}
    Let $a_{k\ell}:=\latRe \big(\braket{u_k|\wtd{u}_\ell}\braket{v_k|\wtd{v}_\ell}\big)$ for all $1\leq k,\ell\leq d$. We first show that there exists a bistochastic matrix $X=(x_{k\ell})_{k,\ell}$ such that $a_{k\ell}\leq x_{k\ell}$ for all $k,\ell$. For any $\ell$,
    \begin{align*}
        c_\ell:=&\sum_{k=1}^d a_{kl}\leq \sum_{k=1}^d \abs{\braket{u_k|\wtd{u}_\ell}}\cdot \abs{\braket{v_k|\wtd{v}_\ell}}\\
        \leq & \left(\sum_{k=1}^d\abs{\braket{u_k|\wtd{u}_\ell}}^2 \right)^{1/2}\cdot \left(\sum_{k=1}^d\abs{\braket{v_k|\wtd{v}_\ell}}^2 \right)^{1/2}\\
        =& \norm{\ket{\wtd{u}_\ell}}\cdot \norm{\ket{\wtd{v}_\ell}}=1.
    \end{align*}
The second line follows from the Cauchy-Schwartz inequality. The third line holds because $\{\ket{u_k}:1\leq k\leq d\}$ and  $\{\ket{v_k}:1\leq k\leq d\}$ are orthonormal bases for $\C^d$. Similarly,
\begin{equation*}
    r_k:=\sum_{\ell=1}^d a_{k\ell}\leq 1
\end{equation*}
for all $k$. So 
\begin{equation*}
    T:=\sum_{k,\ell}a_{k\ell}=\sum_\ell c_\ell=\sum_k r_k\leq d.
\end{equation*}
If $T=d$, then $c_\ell=r_k=1$ for all $k,\ell$ and hence $(a_{kl})_{k,l}$ is a bistochastic matrix. Now assume $T<d$. Then 
\begin{equation*}
    b_{k\ell}:=\frac{(1-c_\ell)(1-r_k)}{d-T}\geq 0
\end{equation*}
for all $k,\ell$, and
\begin{align*}
    \sum_{k=1}^d (a_{k\ell}+b_{k\ell}) = c_\ell +\frac{\sum_{k=1}^d(1-r_k)}{d-T}(1-c_\ell)=c_\ell +\frac{d-T}{d-T}(1-c_\ell)=1
\end{align*}
for all $\ell$. Similarly, $ \sum_{\ell=1}^d (a_{k\ell}+b_{k\ell})=1$ for all $k$. Hence $(a_{k\ell}+b_{k\ell})_{k,\ell}$ is a bistochastic matrix. We conclude that $(a_{k\ell})_{k,\ell}$ is entrywise smaller than some  bistochastic matrix $X$.

By Birkhoff–von Neumann theorem, $X=\theta_1P_1+\cdots+\theta_tP_t$ is a convex combination of some permutation matrices $P_1,\ldots,P_t$. It follows that
\begin{align*}
    \latRe\big(\braket{\psi|\wtd{\psi}}\big)&=\sum_{k,\ell} \lambda_k \mu_\ell a_{k\ell}=\begin{pmatrix}
        \lambda_1 & \cdots & \lambda_d
    \end{pmatrix} (a_{k\ell})_{k,\ell}\begin{pmatrix}
        \mu_1 \\ \vdots \\ \mu_d
    \end{pmatrix}\\
    &\leq \begin{pmatrix}
        \lambda_1 & \cdots & \lambda_d
    \end{pmatrix} X\begin{pmatrix}
        \mu_1 \\ \vdots \\ \mu_d
    \end{pmatrix} = \sum_{i=1}^t \theta_i \begin{pmatrix}
        \lambda_1 & \cdots & \lambda_d
    \end{pmatrix} P_i\begin{pmatrix}
        \mu_1 \\ \vdots \\ \mu_d
    \end{pmatrix} \\
    &\leq \sum_{i=1}^t \theta_i \begin{pmatrix}
        \lambda_1 & \cdots & \lambda_d
    \end{pmatrix}\begin{pmatrix}
        \mu_1 \\ \vdots \\ \mu_d
    \end{pmatrix} =\sum_{k=1}^d\lambda_k \mu_k.
\end{align*}
The last line follows from the rearrangement inequality. This implies
\begin{align*}
    \norm{\ket{\psi}-\ket{\wtd{\psi}}}^2=2-2\latRe\big(\braket{\psi|\wtd{\psi}}\big)\geq 2-2\sum_{k=1}^d\lambda_k \mu_k=\norm{\bflambda-\bfmu}_2^2.
\end{align*}
So \Cref{eq:Schmidt-bound} follows.
\end{proof}

\begin{lemma}\label{lem:me-aux}
    Let $\mcS=(\kp, A, B)$ and $\tilde{\mcS}=(\kpt, \tA, \tB)$ be two maximally entangled strategies on $H_A\otimes H_B$ and $\tilde{H}_A\otimes \tilde{H}_B$ such that $\tilde{\mcS}$ is a local $(\epsilon,\nu)$-dilation of $\mcS$ for some $\epsilon\geq 0$ and distribution $\nu$ on $\mcX\times\mcY$ with isometries $V_A$ and $V_B$ and unit vector $\ket{aux}\in K_A\otimes K_B$. Let $\nu_A$ and $\nu_B$ be the marginal distributions of $\nu$ on $\mcX$ and $\mcY$, respectively. Then there exist subspaces $K_A'\subset K_A$ and $K_B'\subset K_B$ and a maximally entangled state $\ket{aux'}\in K_A'\otimes K_B'$ such that
    \begin{align*}
    \norm{(V_A\otimes V_B)\ket
    \psi-\ket{\wtd{\psi}}\otimes\ket{aux'}}&\leq 3\epsilon,\\
        \left(\expect{x\sim \nu_A}\sum_{a}  \norm{(V_A\otimes V_B)(A^x_a\otimes \Id_{H_B})\ket{\psi} - \big((\wtd{A}^x_a\otimes \Id_{\tilde{H}_B})\ket
        {\wtd{\psi}}\big)\otimes\ket{aux'}}^2    \right)^{1/2}&\leq 3\epsilon,\\
        \left(\expect{y\sim \nu_B}\sum_{b}  \norm{(V_A\otimes V_B)(\Id_{H_A}\otimes B^y_b)\ket{\psi} - \big((\Id_{\tilde{H}_A}\otimes\wtd{B}^y_b)\ket
        {\wtd{\psi}}\big)\otimes\ket{aux'}}^2    \right)^{1/2}&\leq 3\epsilon.
    \end{align*}
\end{lemma}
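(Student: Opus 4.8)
The plan is to reduce everything to a one-dimensional rounding statement about the Schmidt coefficients of $\ket{aux}$. Write $d$ for the Schmidt rank of $\kp$ and $\tilde d:=\dim\tilde H_A=\dim\tilde H_B$ for the Schmidt rank of $\kpt$, and let $\mu_1\ge\mu_2\ge\cdots$ be the squared Schmidt coefficients of $\ket{aux}$, with Schmidt decomposition $\ket{aux}=\sum_i\sqrt{\mu_i}\,\ket{f_i}\otimes\ket{g_i}$. First I would apply \Cref{lemma:Schmidt-bound} to the unit vectors $(V_A\otimes V_B)\kp$ and $\kpt\otimes\ket{aux}$ in $(\tilde H_A\otimes K_A)\otimes(\tilde H_B\otimes K_B)$. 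Since $V_A,V_B$ are isometries, $(V_A\otimes V_B)\kp$ is maximally entangled of rank $d$, so its sorted Schmidt-coefficient sequence is the flat sequence $\vec\lambda$ of length $d$ with all entries $1/\sqrt d$; and $\kpt\otimes\ket{aux}$ has Schmidt coefficients $\{\sqrt{\mu_j/\tilde d}\}_j$, each with multiplicity $\tilde d$, so its sorted sequence is the ``block'' sequence $\vec N$ in which each value $\sqrt{\mu_j/\tilde d}$ appears $\tilde d$ times. \Cref{lemma:Schmidt-bound} then gives $\norm{\vec\lambda-\vec N}_2\le\epsilon$.

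Next I would define the candidate auxiliary states: for each $m'\ge 1$ put $\ket{aux'_{m'}}:=\frac{1}{\sqrt{m'}}\sum_{i=1}^{m'}\ket{f_i}\otimes\ket{g_i}$, a maximally entangled state on $K_A':=\vspan\{f_i\}_{i\le m'}$ and $K_B':=\vspan\{g_i\}_{i\le m'}$. A direct computation gives $\langle aux\,|\,aux'_{m'}\rangle=\frac{1}{\sqrt{m'}}\sum_{i\le m'}\sqrt{\mu_i}$, which is exactly the inner product of $\vec N$ with the flat sequence $\vec\lambda^{(m')}$ of length $\tilde d m'$; since $\ket{aux},\ket{aux'_{m'}}$ and $\vec N,\vec\lambda^{(m')}$ are all unit vectors, this yields the identity $\norm{\ket{aux}-\ket{aux'_{m'}}}=\norm{\vec N-\vec\lambda^{(m')}}_2$. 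So it suffices to choose $m'$ with $\langle\vec N,\vec\lambda^{(m')}\rangle\ge 1-2\epsilon^2$, i.e. $\norm{\ket{aux}-\ket{aux'_{m'}}}\le 2\epsilon$.

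The hard part will be this rounding estimate. Writing $d=q\tilde d+s$ with $0\le s<\tilde d$, one first observes that the block straddling position $d$ contributes at least $\min_{v}\big(s(v-1/\sqrt d)^2+(\tilde d-s)v^2\big)=\frac{s(\tilde d-s)}{d\tilde d}$ to $\norm{\vec\lambda-\vec N}_2^2\le\epsilon^2$, which pins $d$ to within $O(d\epsilon^2)$ of a multiple of $\tilde d$; hence $m'\in\{q,q+1\}$ is the natural choice. To compare the prefix sum of $\vec N$ at $d$ with those at $q\tilde d$ and $(q+1)\tilde d$, I would use that, within a block, the map $\theta\mapsto\theta^{-1/2}\sum_{k\le\theta}N_k$ (interpolated linearly in the partial sum) has the form $a\theta^{-1/2}+b\theta^{1/2}$ with $a,b\ge 0$; analysing where this function is increasing and where it is convex shows that $\max\{\langle\vec N,\vec\lambda^{(q)}\rangle,\langle\vec N,\vec\lambda^{(q+1)}\rangle\}\ge\langle\vec N,\vec\lambda\rangle-O(\epsilon^2)\ge 1-2\epsilon^2$. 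The degenerate cases---fewer than $m'$ nonzero $\mu_i$, or $\tilde d\ge d$---I would dispatch separately: \Cref{lem:contained-ME-strat-dim-est} forces $d\ge(1-\epsilon^2)\tilde d$, and in those cases the straddling-block bound already shows $\sum_{j\ge 2}\mu_j\le\epsilon^2$, so $\ket{aux}$ is $O(\epsilon)$-close to the product state $\ket{f_1}\otimes\ket{g_1}$ and one takes $m'=1$. Getting the constant to be exactly $2$ here (rather than losing a triangle-inequality factor by routing through $\vec\lambda$) is the delicate point.

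Finally I would assemble the three conclusions with $\ket{aux'}:=\ket{aux'_{m'}}$. For the state inequality, $\norm{(V_A\otimes V_B)\kp-\kpt\otimes\ket{aux'}}\le\norm{(V_A\otimes V_B)\kp-\kpt\otimes\ket{aux}}+\norm{\ket{aux}-\ket{aux'}}\le\epsilon+2\epsilon=3\epsilon$. For the two measurement inequalities, replacing $\ket{aux}$ by $\ket{aux'}$ shifts the right-hand vectors by $\big((\tA^x_a\otimes\Id)\kpt\big)\otimes(\ket{aux}-\ket{aux'})$; since $\{\tA^x_a\}_a$ is a POVM one has $\sum_a\norm{(\tA^x_a\otimes\Id)\kpt}^2\le\sum_a\bra{\tilde\psi}(\tA^x_a\otimes\Id)\kpt=1$, so this shift has $\ell_2$-norm over $(x\sim\nu_A,a)$ at most $\norm{\ket{aux}-\ket{aux'}}\le 2\epsilon$, and the triangle inequality together with the hypothesis's bound $\epsilon$ gives $3\epsilon$; the $\tB$-side is identical with $\nu_B$. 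This proves the lemma.
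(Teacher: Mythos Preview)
Your overall architecture is exactly the paper's: apply \Cref{lemma:Schmidt-bound} to compare the flat Schmidt sequence of $(V_A\otimes V_B)\kp$ with the block sequence of $\kpt\otimes\ket{aux}$, then replace $\ket{aux}$ by a maximally entangled $\ket{aux'}$ built on the leading Schmidt vectors of $\ket{aux}$ with $\norm{\ket{aux}-\ket{aux'}}\le 2\epsilon$, and finish both measurement bounds via $\sum_a\norm{(\tA^x_a\otimes\Id)\kpt}^2\le 1$ and the triangle inequality. The difference is entirely in how you establish the rounding estimate $\norm{\vec N-\vec\lambda^{(m')}}_2\le 2\epsilon$ for some $m'$.

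The paper's argument here is much shorter than your proposed prefix-sum/convexity analysis. The key observation you are not using is that, because $\vec N$ is constant on each block of length $\tilde d$, a simple midpoint threshold already does the job: define $\kappa''_i$ to equal $\lambda_i$ outside the straddling block, and on the straddling block set every entry to $1/\sqrt d$ if the common block value is $\ge 1/(2\sqrt d)$ and to $0$ otherwise. A two-line case check shows $|\kappa''_i-N_i|\le|\lambda_i-N_i|$ \emph{pointwise}, hence $\norm{\bfkappa''-\vec N}_2\le\norm{\vec\lambda-\vec N}_2\le\epsilon$; normalising $\bfkappa''$ costs one more $\epsilon$ by the reverse triangle inequality, yielding $\norm{\bfkappa'-\vec N}_2\le 2\epsilon$ with $\bfkappa'=\vec\lambda^{(m')}$ for $m'\in\{q,q+1\}$. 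This replaces your entire ``hard part'' and gives the constant $2$ without any delicate optimisation. The degenerate case $d<\tilde d$ is handled in the paper by simply taking $m'=1$ directly (no appeal to \Cref{lem:contained-ME-strat-dim-est} is needed).

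One small gap in your proposal that the paper also glosses over: when you write $\ket{aux'_{m'}}=\frac{1}{\sqrt{m'}}\sum_{i\le m'}\ket{f_i}\otimes\ket{g_i}$, you implicitly assume $K_A,K_B$ contain at least $m'$ orthonormal vectors extending the Schmidt bases. Since the lemma only asserts existence of subspaces $K_A'\subset K_A$, $K_B'\subset K_B$, you should note that the Schmidt bases can always be completed inside $K_A,K_B$ (enlarging $K_A,K_B$ if necessary is harmless for the dilation statement, but strictly speaking the lemma asks for subspaces of the given $K_A,K_B$; this is fine once one observes $m'\le q+1$ and that the isometry $V_A$ already forces $\dim K_A\ge\lceil d/\tilde d\rceil$).
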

\begin{proof}
    Let 
    \begin{equation*}
        (V_A\otimes V_B)\kp = \sum_{i}\lambda_i \ket{u_i}\ket{v_i} \text{ and } \kpt\otimes \ket{aux} = \sum_{i}\kappa_i \ket{\tilde{u}_i}\ket{\tilde{v}_i}
    \end{equation*}
    be the Schmidt decompositions of $(V_A\otimes V_B)\kp$ and $\kpt\otimes \ket{aux}$, where $\bflambda=(\lambda_i)_i$ and $\bfkappa=(\kappa_i)_i$ are decreasing sequences. Because $\tilde{\mcS}$ is a local $(\epsilon,\nu)$-dilation of $\mcS$, we know that
    \begin{equation*}
        \norm{\bflambda-\bfkappa}_2\leq \norm{(V_A\otimes V_B)\kp-\kpt\otimes \ket{aux}}\leq \epsilon
    \end{equation*}
    by Lemma \ref{lemma:Schmidt-bound}.

    Since $\kp$ is maximally entangled, we know that there exists an $m\in \N$ such that $\lambda_i=\frac{1}{\sqrt{m}}$ for $1\leq i\leq m$ and $\lambda_i=0$ for $i>m$. On the other side we know that the multiplicity of each value in $\bfkappa$ is divisible by $n=\dim(\tilde{H}_A)$. We want to show that there exists a sequence $\bfkappa'$ of Schmidt coefficients which takes exactly one non-zero value, has multiplicities divisible by $n$ and is close to $\bfkappa$. First suppose that $m\geq n$. In this case, $\bflambda$ would be a good candidate, but the multiplicity of $\lambda_i$ does not have to be divisible by $n$. To remedy this, we define
    \begin{equation*}
        \kappa_i''=\begin{cases}
            \frac{1}{\sqrt{m}} & i\leq \lfloor\frac{m}{n}\rfloor n \text{ or }\left(\lfloor\frac{m}{n}\rfloor n < i \leq \lfloor(\frac{m}{n}+1)\rfloor n \text{ and } \kappa_i\geq \frac{1}{2\sqrt{m}}\right)\\
            0 & i > \lfloor(\frac{m}{n}+1)\rfloor n \text{ or }\left(\lfloor\frac{m}{n}\rfloor n < i \leq \lfloor(\frac{m}{n}+1)\rfloor n\text{ and } \kappa_i< \frac{1}{2\sqrt{m}}\right)
        \end{cases}
    \end{equation*}
    and observe that $\norm{\bfkappa''-\bfkappa}_2\leq \norm{\bflambda-\bfkappa}_2$. Unfortunately, $\norm{\bfkappa''}_2\neq 1$ in general. However, by the triangle inequality, $1-\epsilon\leq \norm{\bfkappa''}_2\leq 1+\epsilon$, so defining $\bfkappa'$ as the normalisation of $\bfkappa''$ gives us the desired sequence with $\norm{\bfkappa'-\bfkappa}_2\leq 2\epsilon$. For the case where $m<n$, we immediately define
    \begin{equation*}
        \kappa'_i=\begin{cases}
            \frac{1}{\sqrt{n}} & i \leq n\\
            0 & i > n
        \end{cases}
    \end{equation*}
    which is normalised and satisfies $\norm{\bfkappa'-\bfkappa}_2\leq\norm{\bflambda-\bfkappa}_2\leq \epsilon$. 

    Let
    \begin{equation*}
        \ket{aux}=\sum_i \mu_i \ket{w_i}\ket{w_i'}
    \end{equation*}
    be the Schmidt decomposition of $\ket{aux}$. Having obtained the sequence $\bfkappa'$ with the desired properties, we can now define
    \begin{equation*}
        \ket{aux'}=\sum_i \sqrt{n}\kappa_{ni}'\ket{w_i}\ket{w_i'}.
    \end{equation*}
    The state $\kpt\otimes\ket{aux'}$ is then given by
    \begin{equation*}
        \kpt\otimes \ket{aux'} = \sum_{i}\kappa_i' \ket{\tilde{u}_i}\ket{\tilde{v}_i},
    \end{equation*}
    and therefore we have $\norm{\kpt\otimes \ket{aux'} - \kpt\otimes \ket{aux}}_2\leq 2\epsilon$. By the triangle inequality, we find that
    \begin{equation*}
        \norm{(V_A\otimes V_B)\ket
    \psi-\ket{\wtd{\psi}}\otimes\ket{aux'}}\leq 3\epsilon.
    \end{equation*}

    Next, observe that
    \begin{equation*}
       \sum_{a}  \norm{((\tA_a^x\otimes \Id_{\tilde{H}_B})\kpt)\otimes(\ket{aux'}-\ket{aux})}^2  =\sum_{a}  \bra{\tilde{\psi}}(\tA_a^x\otimes \Id_{\tilde{H}_B})^2\kpt\norm{\ket{aux'}-\ket{aux}}^2.
    \end{equation*}
    Since $\tA$ is a POVM, we know that
    \begin{equation*}
        \sum_{a}  \bra{\tilde{\psi}}(\tA_a^x\otimes \Id_{\tilde{H}_B})^2\kpt\leq \sum_{a}  \bra{\tilde{\psi}}(\tA_a^x\otimes \Id_{\tilde{H}_B})\kpt=1,
    \end{equation*}
    so we find that
    \begin{equation*}
        \sum_{a}  \norm{((\tA_a^x\otimes \Id_{\tilde{H}_B})\kpt)\otimes(\ket{aux'}-\ket{aux})}^2\leq 4\epsilon^2.
    \end{equation*}
    We have
    \begin{equation*}
        \left(\expect{x\sim \nu_A}\sum_{a}  \norm{(V_A\otimes V_B)(A^x_a\otimes \Id_{H_B})\ket{\psi} - \big((\wtd{A}^x_a\otimes \Id_{\tilde{H}_B})\ket
        {\wtd{\psi}}\big)\otimes\ket{aux}}^2    \right)^{1/2}\leq \epsilon
    \end{equation*}
    since $\tilde{\mcS}$ is a local $(\epsilon,\nu)$-dilation of $\mcS$. Combining these equations using the triangle inequality for the vector space 
    \begin{equation*}
        \bigoplus_{x,a}\tilde{H}_A\otimes\tilde{H}_B\otimes K_A\otimes K_B
    \end{equation*}
    with norm given by
    \begin{equation*}
        \norm{v}=\left(\expect{x\sim \nu_A}\sum_{a}\norm{v(x,a)}^2\right)^{\frac{1}{2}},
    \end{equation*}
    we obtain
    \begin{equation*}
        \left(\expect{x\sim \nu_A}\sum_{a}  \norm{(V_A\otimes V_B)(A^x_a\otimes \Id_{H_B})\ket{\psi} - \big((\wtd{A}^x_a\otimes \Id_{\tilde{H}_B})\ket
        {\wtd{\psi}}\big)\otimes\ket{aux'}}^2    \right)^{1/2}\leq 3\epsilon.
    \end{equation*}
    The other inequality can be proved analogously.
\end{proof}

\begin{theorem}[Polar decomposition, {\cite[Theorem VIII.3.11]{Con10}}]
    Let $A\in B(H)$ for some Hilbert space $H$. There is a partial isometry $W$ with $\ker(W)=\ker(A)$ and $\mathrm{ran}(W)=\overline{\mathrm{ran}(A)}$ such that $A=W|A|$. $(W,|A|)$ is called the \emph{polar decomposition} of $A$.
\end{theorem}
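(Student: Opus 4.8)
The statement is the classical polar decomposition theorem, quoted in the paper from \cite[Theorem VIII.3.11]{Con10}, so the plan is simply to record the standard construction rather than to do anything new. First I would set $\abs{A}:=(A^{*}A)^{1/2}$, obtained by applying the continuous functional calculus for the function $t\mapsto\sqrt{t}$ to the positive operator $A^{*}A$, whose spectrum lies in $[0,\infty)$; this existence statement for the positive square root is the only substantive input, and it yields a positive—hence self-adjoint—operator $\abs{A}\in B(H)$.

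The key computation is the norm identity $\norm{\abs{A}\xi}^{2}=\langle\abs{A}^{2}\xi,\xi\rangle=\langle A^{*}A\xi,\xi\rangle=\norm{A\xi}^{2}$ for every $\xi\in H$. In particular $\ker(\abs{A})=\ker(A)$, and since $\abs{A}$ is self-adjoint, $\overline{\mathrm{ran}(\abs{A})}=\ker(\abs{A})^{\perp}=\ker(A)^{\perp}$, which already identifies the initial space of the partial isometry we are about to build.

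Next I would define $W_{0}\colon\mathrm{ran}(\abs{A})\to\mathrm{ran}(A)$ by $W_{0}(\abs{A}\xi):=A\xi$. This is well defined because $\abs{A}\xi=\abs{A}\eta$ forces $\norm{A\xi-A\eta}=\norm{\abs{A}(\xi-\eta)}=0$, and it is isometric by the identity above; an isometry defined on a dense subspace extends uniquely to an isometry on the closure $\overline{\mathrm{ran}(\abs{A})}$, and since $W_{0}$ carries the dense set $\mathrm{ran}(\abs{A})$ onto the dense set $\mathrm{ran}(A)$, its range is exactly $\overline{\mathrm{ran}(A)}$. Finally I would extend $W_{0}$ to all of $H$ by letting it act as $0$ on the complement $\ker(\abs{A})=\ker(A)$, and call the result $W$; then $W$ is a partial isometry with $\ker(W)=\ker(A)$ and $\mathrm{ran}(W)=\overline{\mathrm{ran}(A)}$, and $W\abs{A}\xi=A\xi$ for all $\xi$, i.e.\ $A=W\abs{A}$. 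There is no genuine obstacle here once the positive square root $\abs{A}$ is in hand; the only points requiring a little care are the well-definedness of $W_{0}$ and the kernel–range bookkeeping, both of which reduce to the single norm identity.
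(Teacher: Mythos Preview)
Your argument is correct and is exactly the standard construction of the polar decomposition (define $|A|=(A^*A)^{1/2}$ via functional calculus, use $\|\,|A|\xi\|=\|A\xi\|$ to build a well-defined isometry on $\overline{\mathrm{ran}(|A|)}$, extend by zero). The paper does not supply its own proof of this theorem at all---it merely quotes the result from \cite[Theorem VIII.3.11]{Con10}---and your write-up is essentially the proof one finds there.
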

\begin{remark}
    If $A:H\rightarrow K$ is a bounded linear map between Hilbert spaces, one can also take the polar decomposition by viewing $A$ as element of $B(H\oplus K)$. 
\end{remark}

\begin{lemma}\label{lem:isom-est}
    Let $V:H\rightarrow H'$ be an isometry, let $P\in B(H')$ be a projection and let $X=W|X|$ be the polar decomposition for $X=PV$. Then
    \begin{equation*}
        (V-W)^*(V-W)\leq 2(V-PV)^*(V-PV).
    \end{equation*}
\end{lemma}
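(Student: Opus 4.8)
The plan is to reduce both quadratic forms to functions of the single positive operator $|X|=(V^*PV)^{1/2}$ and then compare. First I would record the elementary facts. Since $P$ is a projection, $X^*X=V^*P^*PV=V^*PV$, so $|X|=(V^*PV)^{1/2}$; and because $V$ is an isometry, $\langle V^*PV\xi,\xi\rangle=\|PV\xi\|^2\le\|V\xi\|^2=\|\xi\|^2$ for all $\xi$, giving $0\le V^*PV\le\Id$ and hence $0\le|X|\le\Id$ by functional calculus.

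Second, I would extract the identities $V^*W=W^*V=|X|$. From the polar decomposition $X=W|X|$ we have $W^*PV=W^*X=W^*W|X|$, and since $\ker W=\ker X=\ker|X|$, the initial projection $W^*W$ projects onto $(\ker|X|)^{\perp}$ and therefore fixes $\mathrm{ran}(|X|)$, so $W^*W|X|=|X|$; thus $W^*PV=|X|$. Taking adjoints and using $P=P^*$ gives $V^*PW=|X|$, and since $\mathrm{ran}(W)=\overline{\mathrm{ran}(X)}=\overline{\mathrm{ran}(PV)}\subseteq\mathrm{ran}(P)$ we have $PW=W$, whence $V^*W=|X|$ and, by adjoining, $W^*V=|X|$. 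Writing $Q:=W^*W$ for the support projection of $|X|$, this yields
\[
(V-W)^*(V-W)=\Id-2|X|+Q,\qquad (V-PV)^*(V-PV)=V^*(\Id-P)V=\Id-|X|^2 .
\]

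Third, the desired inequality $2(V-PV)^*(V-PV)-(V-W)^*(V-W)\ge0$ now reads $(\Id-Q)+2|X|(\Id-|X|)\ge0$, which is immediate: $\Id-Q$ is a projection, hence positive, and $|X|(\Id-|X|)=f(|X|)$ with $f(t)=t(1-t)\ge0$ on $[0,1]\supseteq\spec(|X|)$, hence positive as well, so the left-hand side is a sum of two positive operators. I do not anticipate a real obstacle; the only care needed is the bookkeeping around $W$ — verifying $PW=W$ (so that $V^*W$, not merely $V^*PW$, appears) and $W^*W|X|=|X|$ from the description of the initial and final spaces of $W$ — after which the estimate collapses to the above sum of two manifestly positive operators.
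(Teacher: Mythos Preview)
Your proof is correct and follows essentially the same approach as the paper: both compute $(V-W)^*(V-W)=\Id-2|X|+W^*W$ via the identities $V^*W=W^*V=|X|$ and $W^*W|X|=|X|$, then bound using $|X|^2\le|X|$ and $W^*W\le\Id$. Your packaging of the final step as the sum $(\Id-Q)+2|X|(\Id-|X|)\ge0$ is the same pair of inequalities the paper applies, just combined into one line.
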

\begin{proof}
    We start by expanding the left hand side, which gives us
    \begin{align*}
        (V-W)^*(V-W)&=V^*V-V^*W-W^*V+W^*W\\
        &=V^*V-V^*PW-W^*PV+W^*W\\
        &=V^*V-|X|W^*W-W^*W|X|+W^*W\\
        &=V^*V-|X|-|X|+W^*W
    \end{align*}
    Because $\norm{|X|}_{\infty}\leq 1$ and $W$ is a partial isometry, we have the inequalities
    \begin{equation*}
        -|X^*|\leq -|X^*|^2\text{ and } W^*W\leq \Id_H=V^*V.
    \end{equation*}
    Consequently, we have
    \begin{align*}
        (V-W)^*(V-W)&\leq V^*V-|X|^2-|X|^2+V^*V\\
        &=V^*V-|X|W^*W|X|-|X|W^*W|X|+V^*V\\
        &=2(V^*V-V^*PV)\\
        &=2(V-PV)^*(V-PV).\qedhere
    \end{align*}
\end{proof}

\begin{lemma}\label{lem:partial-iso}
     Let $\kp\in H_A\otimes H_B, \kpt\in \tilde{H}_A\otimes \tilde{H}_B$ be states such that $\kp$ is maximally entangled, let $\nu_A$ be a probability measure on $\mcX$ and let $A,\tA$ be POVMs on $H_A$ and $\tilde{H}_A$, respectively. Let $V_A:H_A\rightarrow \tilde{H}_A\otimes K_A$ and $V_B:H_B\rightarrow \tilde{H}_B\otimes K_B$ be isometries and $\ket{aux}\in K_A'\otimes K_B' \subset K_A\otimes K_B$ such that
     \begin{align*}
          \norm{(V_A\otimes V_B)\ket
    \psi-\ket{\wtd{\psi}}\otimes\ket{aux}}&\leq \epsilon,\\
        \left(\expect{x\sim \nu_A}\sum_{a}  \norm{(V_A\otimes V_B)(A^x_a\otimes \Id_{H_B})\ket{\psi} - \big((\wtd{A}^x_a\otimes \Id_{\tilde{H}_B})\ket
        {\wtd{\psi}}\big)\otimes\ket{aux}}^2    \right)^{1/2}&\leq \epsilon.
     \end{align*}
     Then there exist partial isometries $V_A':H_A\rightarrow \tilde{H}_A\otimes K_A'$ and $V_B':H_B\rightarrow \tilde{H}_B\otimes K_B'$ such that
     \begin{align}
          \norm{(V_A'\otimes V_B')\ket
    \psi-\ket{\wtd{\psi}}\otimes\ket{aux}}&\leq (1+2\sqrt{2})\epsilon,\label{eq:est-1-lem-app-part-iso-constr}\\
        \left(\expect{x\sim \nu_A}\sum_{a}  \norm{(V_A'A^x_a\otimes V_B')\ket{\psi} - \big((\wtd{A}^x_a\otimes \Id_{\tilde{H}_B})\ket
        {\wtd{\psi}}\big)\otimes\ket{aux}}^2    \right)^{\frac{1}{2}}&\leq (1+4\sqrt{2})\epsilon,\label{eq:est-2-lem-app-part-iso-constr}\\
        1-\norm{(V_A'\otimes \Id_{H_B})\ket{\psi}}^2 &\leq 4\epsilon^2,\label{eq:est-4-lem-app-part-iso-constr}\\
        1-\norm{\big((V_A')^*\otimes \Id_{\wtd{H}_B\otimes K_B}  \big)\ket{\wtd{\psi}}\otimes\ket{aux}}^2 &\leq \epsilon^2.\label{eq:est-5-lem-app-part-iso-constr}
     \end{align}
Moreover, if $\kpt\otimes\ket{aux}$ is maximally entangled on $(\tH_A\otimes K_A')\otimes(\tH_B\otimes K_B')$, then
\begin{equation}
    \norm{(\Id_{H_A}\otimes V_B')\ket{\psi}-\big((V_A')^*\otimes \Id_{\wtd{H}_B\otimes K_B}  \big)\ket{\wtd{\psi}}\otimes\ket{aux}} \leq 7\epsilon.\label{eq:est-3-lem-app-part-iso-constr}
\end{equation}
\end{lemma}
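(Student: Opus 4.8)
The plan is to build $V_A'$ and $V_B'$ by \emph{compressing} the given isometries onto the subspaces that carry $\ket{aux}$. Let $P_A\in B(\tilde{H}_A\otimes K_A)$ be the orthogonal projection onto $\tilde{H}_A\otimes K_A'$ and $P_B\in B(\tilde{H}_B\otimes K_B)$ the projection onto $\tilde{H}_B\otimes K_B'$, and write $v=(V_A\otimes V_B)\kp$. I would take the polar decompositions $P_AV_A=W_A\lvert P_AV_A\rvert$ and $P_BV_B=W_B\lvert P_BV_B\rvert$ and set $V_A'=W_A$, $V_B'=W_B$; since $\mathrm{ran}(P_AV_A)\subseteq\tilde{H}_A\otimes K_A'$ these are partial isometries $H_A\to\tilde{H}_A\otimes K_A'$ and $H_B\to\tilde{H}_B\otimes K_B'$. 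Write $R_A=V_A'(V_A')^*$ and $Q_A=\Id_{H_A}-(V_A')^*V_A'$ for the range and kernel projections of $V_A'$ (and similarly on the $B$-side).

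The whole argument runs on two observations. First, both $\kpt\otimes\ket{aux}$ and each $\big((\tilde{A}^x_a\otimes\Id_{\tilde{H}_B})\kpt\big)\otimes\ket{aux}$ already lie in $(\tilde{H}_A\otimes K_A')\otimes(\tilde{H}_B\otimes K_B')$, i.e.\ in the range of $P_A\otimes\Id$ and of $\Id\otimes P_B$. Combined with the two hypotheses this gives at once $\norm{((\Id-P_A)\otimes\Id)v}\le\epsilon$ and $\norm{(\Id\otimes(\Id-P_B))v}\le\epsilon$ (for the second, use that $(\Id-P_A)\otimes\Id$ and $P_A\otimes(\Id-P_B)$ have orthogonal ranges), as well as, in $L^2$-mean over $a$, $\norm{((\Id-P_A)\otimes\Id)(V_A\otimes V_B)(A^x_a\otimes\Id_{H_B})\kp}\le\norm{(V_A\otimes V_B)(A^x_a\otimes\Id_{H_B})\kp-\big((\tilde{A}^x_a\otimes\Id)\kpt\big)\otimes\ket{aux}}$. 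Second, \Cref{lem:isom-est} applied with $V=V_A$, $P=P_A$ gives the operator inequality $(V_A-V_A')^*(V_A-V_A')\le 2V_A^*(\Id-P_A)V_A$ and its $B$-analogue. Conjugating by $\Id\otimes V_B$ (resp.\ $V_A\otimes\Id$) and using the elementary bound $\sum_a(A^x_a)^2\le\Id$ for any POVM, these turn into $\norm{((V_A-V_A')\otimes\Id_{H_B})\kp}\le\sqrt2\,\epsilon$, $\norm{(\Id_{H_A}\otimes(V_B-V_B'))\kp}\le\sqrt2\,\epsilon$, and $\big(\expect{x\sim\nu_A}\sum_a\norm{((V_A'-V_A)A^x_a\otimes\Id_{H_B})\kp}^2\big)^{1/2}\le\sqrt2\,\epsilon$.

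Given these, the displayed estimates follow by bookkeeping. For \eqref{eq:est-1-lem-app-part-iso-constr}, insert $(V_A\otimes V_B')\kp$ and $(V_A\otimes V_B)\kp$ as intermediates, using that $\Id\otimes V_B'$ is a contraction and $V_A\otimes\Id$ an isometry; the three error terms sum to $(1+2\sqrt2)\epsilon$. For \eqref{eq:est-2-lem-app-part-iso-constr}, swap $V_A\mapsto V_A'$ and then $V_B\mapsto V_B'$ one factor at a time: the first swap is controlled by the $L^2$-bound above (again via $\sum_a(A^x_a)^2\le\Id$), the second by that inequality applied on Bob's side to $\norm{(\Id_{H_A}\otimes(V_B-V_B'))\kp}$, and one adds the hypothesis term, giving the claimed $(1+4\sqrt2)\epsilon$. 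For \eqref{eq:est-4-lem-app-part-iso-constr}, $1-\norm{(V_A'\otimes\Id_{H_B})\kp}^2=\norm{(Q_A\otimes\Id_{H_B})\kp}^2$, and one verifies $Q_A\le V_A^*(\Id-P_A)V_A$ by decomposing a vector along $\mathrm{ran}(Q_A)$ and using that $V_A$ is an isometry mapping $\mathrm{ran}(Q_A)$ into $\ker P_A$. For \eqref{eq:est-5-lem-app-part-iso-constr}, the key identity is $P_AV_A=R_AV_A$: the component $(\Id-P_A)V_Ah$ of $V_Ah$ outside $\tilde{H}_A\otimes K_A'$ is orthogonal to $\mathrm{ran}(P_AV_A)$, so the projection of $V_Ah$ onto $\mathrm{ran}(V_A')=\overline{\mathrm{ran}(P_AV_A)}$ is exactly $P_AV_Ah$; hence $(P_A-R_A)V_A=0$, so $((P_A-R_A)\otimes\Id)v=0$, and since $(\Id-P_A)\otimes\Id$ annihilates $\kpt\otimes\ket{aux}$ we get $1-\norm{((V_A')^*\otimes\Id_{\tilde{H}_B\otimes K_B})(\kpt\otimes\ket{aux})}^2=\norm{((P_A-R_A)\otimes\Id)(\kpt\otimes\ket{aux}-v)}^2\le\epsilon^2$. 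Finally, for \eqref{eq:est-3-lem-app-part-iso-constr}, write $(\Id_{H_A}\otimes V_B')\kp=((V_A')^*V_A'\otimes V_B')\kp+(Q_A\otimes V_B')\kp$; bound the second summand by $\norm{(Q_A\otimes\Id_{H_B})\kp}\le 2\epsilon$ via \eqref{eq:est-4-lem-app-part-iso-constr}, and recognise the first as $((V_A')^*\otimes\Id)\big((V_A'\otimes V_B')\kp\big)$, which by \eqref{eq:est-1-lem-app-part-iso-constr} is within $(1+2\sqrt2)\epsilon$ of $((V_A')^*\otimes\Id)(\kpt\otimes\ket{aux})$; here the maximal-entanglement hypothesis on $\kpt\otimes\ket{aux}$ guarantees the dimension match and lets $(V_A')^*\otimes\Id$ act on that state like the conjugate of $V_B'$ on the second tensor factor, so the resulting vector is exactly the one to be compared with $(\Id_{H_A}\otimes V_B')\kp$.

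The part needing the most care is the polar‑decomposition bookkeeping: one must keep straight which of $P_AV_A$, $R_AV_A$, $W_A^*V_A=\lvert P_AV_A\rvert$ coincide or vanish, and on which tensor legs each operator acts. The two genuinely delicate points are the identity $(P_A-R_A)V_A=0$ behind \eqref{eq:est-5-lem-app-part-iso-constr} and the operator inequality $Q_A\le V_A^*(\Id-P_A)V_A$ behind \eqref{eq:est-4-lem-app-part-iso-constr}; everything else reduces to triangle inequalities, the contractivity of partial isometries, \Cref{lem:isom-est}, and the POVM bound $\sum_a(A^x_a)^2\le\Id$.
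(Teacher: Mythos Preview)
Your construction and overall route match the paper's: compress $V_A,V_B$ by the projections $P_A,P_B$, take the partial isometries from the polar decompositions, and push everything through \Cref{lem:isom-est} together with the two hypotheses. The arguments you sketch for \eqref{eq:est-1-lem-app-part-iso-constr}, \eqref{eq:est-2-lem-app-part-iso-constr} and \eqref{eq:est-4-lem-app-part-iso-constr} are essentially the paper's; you in fact get slightly sharper intermediate constants because you observe $\norm{((\Id-P_A)\otimes\Id)v}\le\epsilon$ directly, whereas the paper bounds the same quantity by $2\epsilon$ via a triangle inequality.

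Two of your steps differ in a useful way. For \eqref{eq:est-5-lem-app-part-iso-constr} the paper goes through $\norm{(V_A')^*\xi}\ge\norm{\,\lvert P_AV_A\rvert\,(V_A')^*\xi}=\norm{V_A^*P_A\xi}$ and then controls $\norm{(\Id-V_AV_A^*)\xi}$; your identity $(P_A-R_A)V_A=0$ (which is correct, since $\ran R_A=\overline{\ran P_AV_A}$ and $(\Id-P_A)V_Ah\perp\ran P_A\supseteq\ran R_A$) gives a cleaner one–line proof. More interestingly, your argument for \eqref{eq:est-3-lem-app-part-iso-constr} does \emph{not} use the maximal-entanglement hypothesis on $\kpt\otimes\ket{aux}$ at all: splitting off $(Q_A\otimes V_B')\kp$ and writing the remainder as $((V_A')^*\otimes\Id)\big((V_A'\otimes V_B')\kp\big)$ already gives $(3+2\sqrt2)\epsilon<7\epsilon$ from \eqref{eq:est-1-lem-app-part-iso-constr} and \eqref{eq:est-4-lem-app-part-iso-constr}. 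Your closing remark that ME ``lets $(V_A')^*\otimes\Id$ act like the conjugate of $V_B'$'' is therefore superfluous and a bit misleading---just drop it. The paper, by contrast, uses the ME hypothesis to transfer $\norm{((V_A-V_A')\otimes\Id)\kp}$ to $\norm{((V_A^*-(V_A')^*)\otimes\Id)(\kpt\otimes\ket{aux})}$ via a dimension comparison, which is where that assumption enters their proof.
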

\begin{proof}
    Let $n=\dim(H_A)$. Let $P_A$ and $P_B$ be the projections onto $\tilde{H}_A\otimes K_A'$ and $\tilde{H}_B\otimes K_B'$, respectively. Because $P_A$ and $P_B$ are contractions, we have
    \begin{align}
        \norm{(V_A\otimes V_B)\kp-(P_AV_A\otimes V_B)\kp}&\leq 2\epsilon,\label{eq:proj-close-lem-app-part-iso-constr}\\
        \norm{(V_A\otimes V_B)\kp-(V_A\otimes P_BV_B)\kp}&\leq 2\epsilon\text{ and}\\
        \norm{(V_A\otimes V_B)\kp-(P_AV_A\otimes P_BV_B)\kp}&\leq 2\epsilon
    \end{align}
    by the triangle inequality, using the estimate $\norm{(V_A\otimes V_B)\ket
    \psi-\ket{\wtd{\psi}}\otimes\ket{aux}}\leq \epsilon$. Now let $X_A=P_AV_A$, $X_B=P_BV_B$ and let $X_A=V_A'|X_A|$ and $X_B=V_B'|X_B|$ be the polar decompositions of their adjoints. By Lemma \ref{lem:isom-est}, we have that 
    \begin{equation*}
        \norm{(V_A\otimes V_B)\kp-(V_A'\otimes V_B')\kp}^2\leq 2 \norm{(V_A\otimes V_B)\kp-(P_AV_A\otimes P_BV_B)\kp}^2.
    \end{equation*}
    Using the triangle inequality, this implies that
    \begin{equation*}
        \norm{(V_A'\otimes V_B')\kp-\ket{\wtd{\psi}}\otimes\ket{aux'}}\leq (1+2\sqrt{2})\epsilon,
    \end{equation*}
    proving \Cref{eq:est-1-lem-app-part-iso-constr}. Next, let $Q$ be the projection onto the orthogonal complement of the kernel of $X_A$. We see that
    \begin{equation*}
        \norm{(V_A'\otimes \Id_{H_B})\kp}^2=\bra{\psi}(Q\otimes \Id_{H_B})\kp\geq \bra{\psi}(X_A^*X_A\otimes \Id_{H_B})\kp=\norm{(P_AV_A\otimes V_B)\kp}^2.
    \end{equation*}
    Since $V_A$ and $V_B$ are isometries, the Pythagorean theorem combined with \Cref{eq:proj-close-lem-app-part-iso-constr} tells us that
    \begin{equation*}
        \norm{(V_A'\otimes \Id_{H_B})\kp}^2\geq \norm{(P_AV_A\otimes V_B)\kp}^2\geq 1-4\epsilon^2,
    \end{equation*}
    proving \Cref{eq:est-4-lem-app-part-iso-constr}. Similarly, we compute 
    \begin{align*}
        \norm{\big((V_A')^*\otimes \Id_{\wtd{H}_B\otimes K_B}  \big)\ket{\wtd{\psi}}\otimes\ket{aux}}&\geq \norm{\big(|X_A|(V_A')^*\otimes \Id_{\wtd{H}_B\otimes K_B}  \big)\ket{\wtd{\psi}}\otimes\ket{aux}}\\
        &=\norm{\big(V_A^*P_A\otimes \Id_{\wtd{H}_B\otimes K_B}  \big)\ket{\wtd{\psi}}\otimes\ket{aux}}\\
        &=\norm{\big(V_AV_A^*\otimes \Id_{\tH_B\otimes K_B}\big)\kpt\otimes \ket{aux}}.
    \end{align*}
    Note that $\Id_{\tH_A\otimes K_A}-V_AV_A^*$ is a contraction and $(\Id_{\tH_A\otimes K_A}-V_AV_A^*)V_A=0$, so we have
    \begin{equation} \label{eq:small-kpt-proj-lem-app-part-iso-constr}
        \norm{\left((\Id_{\tH_A\otimes K_A}-V_AV_A^*)\otimes \Id_{\tH_B\otimes K_B}\right)\ket{\wtd{\psi}}\otimes\ket{aux}}\leq \norm{(V_A\otimes V_B)\ket
    \psi-\ket{\wtd{\psi}}\otimes\ket{aux}}\leq \epsilon.
    \end{equation}
    These things combine to yield equation \Cref{eq:est-5-lem-app-part-iso-constr}, i.e. 
    \begin{equation*}
        1-\norm{\big((V_A')^*\otimes \Id_{\wtd{H}_B\otimes K_B}  \big)\ket{\wtd{\psi}}\otimes\ket{aux}}^2\leq \epsilon^2.
    \end{equation*}
    
    For the other estimates of this lemma, we need to use that $\kp$ is maximally entangled. First, note that for any partial isometry $w_B:H_B\rightarrow \tilde{H}_B\otimes K_B'$ we have
     \begin{equation*}
         \norm{\left((V_A-V_A')\otimes w_B\right)(A^x_a\otimes \Id_{H_{B}})\ket{\psi}}\leq \norm{((V_A-V_A')\otimes \Id_{H_B})(A^x_a\otimes \Id_{H_{B}})\ket{\psi}},
     \end{equation*}
     since $\Id_{\tilde{H}_A\otimes K_A}\otimes w_B$ is a contraction. Consequently,
    \begin{align*}
        \sum_{a}\norm{((V_A-V_A')&\otimes V_B')(A^x_a\otimes \Id_{H_B})\ket{\psi}}^2\\
        &\leq \sum_a\bra{\psi}A_a^x(V_A-V_A')^*(V_A-V_A')A_a^x\otimes \Id_{H_B}\kp\\
        &=\sum_a\Tr_{(\tilde{H}_A\otimes K_A)\otimes H_B}\big((V_A-V_A')A_a^x\otimes \Id_{H_B}\kp\bra{\psi}A_a^x(V_A-V_A')^*\otimes \Id_{H_B}\big)\\
        &=\frac{1}{n}\sum_a\Tr_{\tilde{H}_A\otimes K_A}\big((V_A-V_A')A_a^x\Id_{H_A}A_a^x(V_A-V_A')^*\big)\\
        &\leq\frac{1}{n}\Tr_{\tilde{H}_A\otimes K_A}\big((V_A-V_A')(V_A-V_A')^*\big)\\
        &=\frac{1}{n}\Tr_{H_A}\big((V_A-V_A')^*(V_A-V_A')\big)\\
        &=\norm{\big((V_A-V_A')\otimes \Id_{H_B}\big)\ket{\psi}}^2,
    \end{align*}
    where we used in the third step that $\kp$ is a maximally entangled state. An analogous calculation yields 
    \begin{equation*}
        \norm{(V_A\otimes V_B)\kp-(P_AV_A\otimes V_B)\kp}^2=\frac{1}{n}\Tr_{H_A}\left((V_A-P_AV_A)^*(V_A-P_AV_A)\right),
    \end{equation*}
    so by Lemma \ref{lem:isom-est}, we have
    \begin{equation*}
        \norm{\big((V_A-V_A')\otimes \Id_{H_B}\big)\ket{\psi}}\leq 2\sqrt{2}\epsilon,
    \end{equation*}
    and
    \begin{equation*}
        \sum_{a}\norm{((V_A-V_A')\otimes V_B')(A^x_a\otimes \Id_{H_B})\ket{\psi}}^2\leq 8\epsilon^2.
    \end{equation*}
    
    Analogously, we get
    \begin{align}
        \norm{\big(\Id_{H_A}\otimes (V_B-V_B')\big)\ket{\psi}}\leq 2\sqrt{2}\epsilon \text{ and }\label{eq:part-iso-close-to-iso-on-me-state}\\
        \sum_{a}\norm{(V_A\otimes (V_B-V_B'))(A^x_a\otimes \Id_{H_B})\ket{\psi}}^2\leq 8\epsilon^2.\nonumber
    \end{align}
    This in turn implies that
    \begin{equation*}
        \left(\expect{x\sim \nu_A}\sum_{a}\norm{(V_A\otimes V_B-V_A'\otimes V_B')(A^x_a\otimes \Id_{H_B})\ket{\psi}}^2\right)^{\frac{1}{2}}\leq 4\sqrt{2}\epsilon
    \end{equation*}
    by the triangle inequality, and another application of it yields \Cref{eq:est-2-lem-app-part-iso-constr}. 

    Now assume that $\kpt\otimes \ket{aux}$ is maximally entangled. Let $m=\dim(\tH_A\otimes K_A')$. Then for any $X: H_A\rightarrow \tH_A\otimes K_A$ we have
    \begin{align*}
        \norm{(X\otimes \Id_{H_B})\kp}^2&=\Tr_{\tH_A\otimes K_A\otimes H_B}\left((X\otimes \Id_{H_B})\kp\bra{\psi}(X\otimes \Id_{H_B})^*\right)\\
        &=\frac{1}{n}\Tr_{\tH_A\otimes K_A}\left(XX^*\right)\\
        &\geq\frac{m}{n}\Tr_{\tH_A\otimes K_A\otimes \tH_B\otimes K_B}\left((XX^*\otimes \Id_{\tH_B\otimes K_B})(\kpt\otimes\ket{aux})(\bra{\tilde{\psi}}\otimes\bra{aux})\right)\\
        &=\frac{m}{n}\norm{(X^*\otimes \Id_{\tH_B\otimes K_B})\kpt\otimes \ket{aux}}^2.
    \end{align*}
    To use the above result, we will need an estimate relating $m$ and $n$. Let $\bflambda$ and $\bfmu$ be the sequences of Schmidt coefficients of $\kp$ and $\kpt\otimes\ket{aux}$, respectively. By Lemma \ref{lemma:Schmidt-bound}, observing that local isometries preserve the Schmidt coefficients, and the assumptions in Lemma \ref{lem:partial-iso}, we know that
    \begin{equation*}
        \epsilon^2\geq \norm{\bflambda-\bfmu}_2^2.
    \end{equation*}
    Since both $\kp$ and $\kpt\otimes\ket{aux}$ are maximally entangled on Hilbert spaces of dimension $n$ and $m$, respectively, we find that (see also the proof of Lemma  \ref{lem:contained-ME-strat-dim-est})
    \begin{equation*}
        \norm{\bflambda-\bfmu}_2^2\geq \max(\frac{m-n}{m},\frac{n-m}{n}),
    \end{equation*}
    so $(1-\epsilon^2)m\leq n\leq (1-\epsilon^2)^{-1}m$.
    
    Consequently,
\begin{equation}\label{eq:adj-part-iso-close-to-adj-iso-on-me-state}
      \norm{\big((V_A^*-(V_A')^*)\otimes \Id_{\wtd{H}_B\otimes K_B}  \big)\ket{\wtd{\psi}}\otimes\ket{aux}}\leq \frac{1}{1-\epsilon^2}\norm{\big((V_A-V_A')\otimes \Id_{H_B}\big)\ket{\psi}}\leq \frac{2\sqrt{2}\epsilon}{1-\epsilon^2}. 
\end{equation}
So for the ``Moreover" part, we have
    \begin{align*}
         \norm{(\Id_{H_A}\otimes V_B')\ket{\psi}&-\big((V_A')^*\otimes \Id_{\wtd{H}_B\otimes K_B}  \big)\ket{\wtd{\psi}}\otimes\ket{aux}}\\
         &\leq \norm{\big((V_A^*-(V_A')^*)\otimes \Id_{\wtd{H}_B\otimes K_B}  \big)\ket{\wtd{\psi}}\otimes\ket{aux}}\\
         &+\norm{(V_A^*\otimes \Id_{\wtd{H}_B\otimes K_B})\big((V_A\otimes V_B)\ket{\psi}-\ket{\wtd{\psi}}\otimes\ket{aux}\big)}\\
         &+\norm{\big(\Id_{H_A}\otimes (V_B-V_B')\big)\ket{\psi}}\\
         &\leq (1+2\sqrt{2}+\frac{2\sqrt{2}}{1-\epsilon^2})\epsilon,
    \end{align*}
using that $V_A^*$ is a contraction, together with \Cref{eq:part-iso-close-to-iso-on-me-state} and \Cref{eq:adj-part-iso-close-to-adj-iso-on-me-state}. Since the above estimate becomes trivial if $\epsilon>0.29$, we conclude that
\begin{equation*}
    \norm{(\Id_{H_A}\otimes V_B')\ket{\psi}-\big((V_A')^*\otimes \Id_{\wtd{H}_B\otimes K_B}  \big)\ket{\wtd{\psi}}\otimes\ket{aux}}\leq 7\epsilon,
\end{equation*}
proving \Cref{eq:est-3-lem-app-part-iso-constr}. 
\end{proof}
Now we are ready to prove part (a) of \Cref{lem:PME-robust-self-testing-vNA}.
\begin{proof}[Proof of \Cref{lem:PME-robust-self-testing-vNA}, part (a)]
Let $\ket{\wtd{\psi}}\in\wtd{H}\otimes\wtd{H}$ and $\ket{\psi}\in H\otimes H$ be maximally entangled states such that $\tau^M(x)=\bra{\wtd{\psi}}x\otimes \Id_H\ket{\wtd{\psi}}$ for all $x\in M\subset  B(\wtd{H})$ and $\tau^N(x)=\bra{\psi}x\otimes \Id_{\wtd{H}}\ket{\psi}$ for all $x\in N\subset  B(H)$. Since $\wtd{\mcS}$ is an $(\epsilon,\nu)$-local dilation of $\wtd{\mcS}$, by \Cref{lem:me-aux}, there are isometries $V_A:H\arr\wtd{H}\otimes K_A$ and $V_B:H\arr \wtd{H}\otimes K_B$ and a maximally entangled state $\ket{aux}\in K_A'\otimes K_B'\subset K_A\otimes K_B$ such that
\begin{align*}
          \norm{(V_A\otimes V_B)\ket{\psi}-\ket{\wtd{\psi}}\otimes\ket{aux}}&\leq 3\epsilon,\\
        \left(\expect{x\sim \nu_A }\sum_{a}  \norm{(V_A\otimes V_B)(A^x_a\otimes \Id)\ket{\psi} - \big((\wtd{A}^x_a\otimes \Id)\ket
        {\wtd{\psi}}\big)\otimes\ket{aux}}^2    \right)^{1/2}&\leq 3\epsilon.
\end{align*}
Then by \Cref{lem:partial-iso}, there exist partial isometries $V_A':H\rightarrow \tilde{H}\otimes K_A'$ and $V_B':H\rightarrow \tilde{H}\otimes K_B'$ such that
     \begin{align*}
          \norm{(V_A'\otimes V_B')\ket{\psi}-\ket{\wtd{\psi}}\otimes\ket{aux}}&\leq (3+6\sqrt{2})\epsilon,\\
        \left(\expect{x\sim \nu_A }\sum_{a}  \norm{(V_A'\otimes V_B')(A^x_a\otimes \Id_{H})\ket{\psi} - \big((\wtd{A}^x_a\otimes \Id_{\tilde{H}})\ket
        {\wtd{\psi}}\big)\otimes\ket{aux}}^2    \right)^{1/2}&\leq (3+12\sqrt{2})\epsilon,\\
         \norm{(\Id_{H}\otimes V_B')\ket{\psi}-\big((V_A')^*\otimes \Id_{\wtd{H}\otimes K_B}  \big)\ket{\wtd{\psi}}\otimes\ket{aux}}&\leq 21\epsilon,\\
         1-\norm{(V_A'\otimes \Id_{H})\ket{\psi}}^2 &\leq 36\epsilon^2,\\
    1-\norm{\big((V_A')^*\otimes \Id_{\wtd{H}\otimes K_B'}  \big)\ket{\wtd{\psi}}\otimes\ket{aux}}^2 &\leq 9\epsilon^2.
     \end{align*}
Let $M_0:= B(K_A')\subset B(\ell^2(\N))$ and $P\in (M\otimes M_0)^{\infty}$ be a projection such that $N\cong P(M\otimes M_0)^{\infty}P$. After identifying $N$ and $P(M\otimes M_0)^{\infty}P$, we define $W=(V_A')^*\in P(M\otimes M_0)^{\infty}I_{M\otimes M_0}$.
 
Since $\ket{\wtd{\psi}}\otimes\ket{aux}$ is an maximally entangled state in $\wtd{H}\otimes K_A'$, and $\ket{\psi}$ is a maximally entangled state in $H\otimes H$, we have
\begin{align*}
    \tau^N(P-WW^*)=1-\norm{(V_A'\otimes \Id_{H})\ket{\psi}}^2 &\leq 36\epsilon^2,\\
    \tau^{M\otimes M_0}(I_{M\otimes M_0}-W^*W)=1-\norm{\big((V_A')^*\otimes \Id_{\wtd{H}\otimes K_B'}  \big)\ket{\wtd{\psi}}\otimes\ket{aux}}^2 &\leq 9\epsilon^2,
\end{align*}
and
\begin{align*}
    &\left(\expect{x\sim \nu_A }\sum_{a}\norm{\wtd{A}^x_a\otimes I_{M_0}-W^*A^x_aW}^2_{\tau^{M\otimes M_0}} \right)^{1/2}\\=&\left(\expect{x\sim \nu_A }\sum_{a}  \norm{\big(V_A'A^x_a(V_A')^*\otimes \Id_{\wtd{H}\otimes K_A'}\big)\ket{\wtd{\psi}}\otimes\ket{aux} - \big((\wtd{A}^x_a\otimes \Id_{\tilde{H}})\ket
        {\wtd{\psi}}\big)\otimes\ket{aux}}^2    \right)^{1/2} \\
        \leq&\left(\expect{x\sim \nu_A }\sum_{a}  \norm{\big(V_A'A^x_a\otimes V_B'\big)\ket{\psi} - \big((\wtd{A}^x_a\otimes \Id_{\tilde{H}})\ket
        {\wtd{\psi}}\big)\otimes\ket{aux}}^2    \right)^{1/2} +21\epsilon\\
        \leq& (24+12\sqrt{2})\epsilon,
\end{align*}
where the first inequality uses that $\sum_a A^x_a(V_A')^*V_A'A^x_a\leq \Id$. We conclude that $\wtd{\mcS}$ is a local $\big((24+12\sqrt{2})^2\epsilon^2,\nu_A \big)$-vNA-dilation of $\mcS$. Rounding $(24+12\sqrt{2})^2$ to 1700 gives the desired results.
\end{proof}

For the proof of the second part of Lemma \ref{lem:PME-robust-self-testing-vNA} we first need the following lemma.

\begin{lemma}\label{lem:ME-strat-dim-est}
    Let $M$ and $N$ be von Neumann algebras with tracial states $\tau^M$ and $\tau^N$. Suppose that there is a projection $P\in M^{\infty}$ with finite trace such that $N\cong PM^{\infty}P$ and $\tau^N=(\tau^{\infty}(P))^{-1}\tau^{\infty}$ after identification of $N$ and $PM^{\infty}P$. If there exist a $\delta<1$ and a partial isometry $w\in PM^{\infty}I_M$ such that
	\begin{equation*}
		\tau^M(I_M-w^*w)\leq \delta\text{ and } \tau^N(P-ww^*)\leq \delta,
	\end{equation*}
	then $1-\delta\leq \tau^{\infty}(P)\leq (1-\delta)^{-1}$.
\end{lemma}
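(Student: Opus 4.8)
The plan is to reduce the whole statement to bookkeeping about a single scalar $t := \tau^{\infty}(P)$ and the two projections attached to the partial isometry $w$. First I would record the normalisation facts that make the two traces comparable: since $I_M$ is the rank-one projection onto the first coordinate of $\ell^2(\N)$ and $\tau^{\infty} = \tau^M \otimes \Tr$, one has $\tau^{\infty}(I_M) = 1$; consequently the restriction of $\tau^{\infty}$ to the corner $I_M M^{\infty} I_M \cong M$ coincides with $\tau^M$ (no rescaling is needed, precisely because $\tau^{\infty}(I_M)=1$), whereas on $P M^{\infty} P \cong N$ one has $\tau^N = \tau^{\infty}/t$ by hypothesis.

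Next I would use that $w \in P M^{\infty} I_M$ is a partial isometry, so $w^*w$ and $ww^*$ are projections with $w^*w \le I_M$ and $ww^* \le P$, and by the trace property of the (faithful, normal, semifinite) trace $\tau^{\infty}$ — applicable here since $w^*w \le I_M$ has finite trace — we get $\tau^{\infty}(w^*w) = \tau^{\infty}(ww^*)$; denote this common value by $s$.

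Now the two estimates fall out immediately. For the lower bound $t \ge 1-\delta$: the hypothesis $\tau^M(I_M - w^*w) \le \delta$ reads $1 - s = \tau^{\infty}(I_M) - \tau^{\infty}(w^*w) \le \delta$, so $s \ge 1-\delta$; and $ww^* \le P$ gives $s = \tau^{\infty}(ww^*) \le \tau^{\infty}(P) = t$, whence $t \ge 1-\delta$. For the upper bound $t \le (1-\delta)^{-1}$: the hypothesis $\tau^N(P - ww^*) \le \delta$ reads $t - s = \tau^{\infty}(P - ww^*) = t\,\tau^N(P-ww^*) \le \delta t$, so $s \ge (1-\delta)t$; and $w^*w \le I_M$ gives $s = \tau^{\infty}(w^*w) \le \tau^{\infty}(I_M) = 1$, whence $(1-\delta)t \le 1$.

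I do not expect a genuine obstacle here; the only points requiring care are the asymmetry of the two normalisations (the $M$-corner is already normalised while the $N$-corner carries the factor $t^{-1}$) and the invocation of the trace identity $\tau^{\infty}(w^*w) = \tau^{\infty}(ww^*)$, which is exactly where the hypothesis that $\tau^M$ is a genuine trace (rather than merely a state) is used. If desired, one can avoid phrasing anything in terms of $\ell^2(\N)$ directly and instead only use that $\tau^{\infty}$ restricted to $I_M M^{\infty} I_M$ is a tracial state identifying with $\tau^M$, which is built into the setup of $M^{\infty}$ in the preliminaries.
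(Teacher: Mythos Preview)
Your proof is correct and follows essentially the same approach as the paper's: both arguments hinge on the trace identity $\tau^{\infty}(w^*w)=\tau^{\infty}(ww^*)$ together with the two normalisations $\tau^{\infty}|_{I_M M^{\infty} I_M}=\tau^M$ and $\tau^{\infty}|_{PM^{\infty}P}=\tau^{\infty}(P)\,\tau^N$. Your introduction of the scalars $s$ and $t$ makes the bookkeeping slightly more transparent, but the underlying computation is the same.
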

\begin{proof}
    We show that $(1-\delta)\leq \tau^{\infty}(P)$. The other inequality is analogous. We compute
    \begin{align*}
        \delta&\geq \tau^M(I_M-w^*w)\\
        &=1-\tau^{\infty}(w^*w)\\
        &=1-\tau^{\infty}(P)\tau^N(ww^*)\\
        &=1-\tau^{\infty}(P)+\tau^{\infty}(P)\tau^N(P-ww^*)\\
        &\geq 1-\tau^{\infty}(P),
    \end{align*}
    so $(1-\delta)\leq \tau^{\infty}(P)$.
\end{proof}

\begin{proof}[Proof of \Cref{lem:PME-robust-self-testing-vNA}, part (b)]
Let $\kpt$ and $\kp$ be GNS states for $(M,\tau^M)$ and $(N,\tau^N)$, respectively. Since $\wtd{\mcS}$ is a local $(\epsilon,\nu_A )$-vNA-dilation of $\mcS$, there exist a finite dimensional von Neumann algebra $M_0$ with tracial state $\tau^{M_0}$, a projection $P\in (M\otimes M_0)^{\infty}$ of finite trace such that $N\cong P(M\otimes M_0)^{\infty}P$ and $\tau^N=\tau^{\infty}/\tau^{\infty}(P)$, and a partial isometry $W\in P(M\otimes M_0)^{\infty}I_{M\otimes M_0}$ such that 
\begin{equation}\label{eq:lem-PME-robust-self-testing-vNA-b-input-1}
    \expect{x\sim \nu_A }\sum_a\norm{\wtd{A}^x_a\otimes I_{M_0}-W^* A^x_a W}^2_{\tau^{M\otimes M_0}}\leq \epsilon
\end{equation}
and
\begin{equation}\label{eq:lem-PME-robust-self-testing-vNA-b-input-2}
    \tau^N(P-WW^*)\leq \epsilon, \tau^{M\otimes M_0}(I_{M\otimes M_0}-W^*W)\leq \epsilon.
\end{equation}
Our first step is to turn $W^*$ into an isometry. Let $P_1=P-WW^*$ and let $\check{M}_1\subset (\Id-I_{M\otimes M_0})(M\otimes M_0)^{\infty}(\Id-I_{M\otimes M_0})$ be a finite dimensional von Neumann algebra of the form $M\otimes B(H_1)$ such that $\dim P_1NP_1\leq \dim \check{M}_1$. Let $M_1=B(H_1)$. We can now choose a partial isometry $W_1\in P_1(M\otimes M_0)^{\infty}I_{\check{M}_1}$ such that $W_1W_1^*=P_1$. Define $V=W^*+W_1^*$, which now satisfies $P=V^*V$. Observe that
\begin{align*}
    \tau^{M\otimes M_0}(x)&=\tau^{\infty}((I_{M\otimes M_0}+I_{\check{M}_1})I_{M\otimes M_0}xI_{M\otimes M_0}(I_{M\otimes M_0}+I_{\check{M}_1})) \text{ for } x\in M\otimes M_0 \text{ and}\\
    \tau^N(x)&=\frac{1}{\tau^{\infty}(P)}\tau^{\infty}((I_{M\otimes M_0}+I_{\check{M}_1})VxV^*(I_{M\otimes M_0}+I_{\check{M}_1})) \text{ for } x\in P(M\otimes M_0)^{\infty}P.
\end{align*}
If we now perform the GNS construction using the state 
\begin{equation*}
    x\mapsto \tau^{\infty}(I_{\check{M}_1}xI_{\check{M}_1})
\end{equation*}
on $\check{M}_1$, we obtain a cyclic vector $\ket{\phi}\in (\tH\otimes H_1)\otimes (\tH\otimes H_1)$ such that 
\begin{align*}
    \tau^{\infty}(I_{\check{M}_1}xI_{\check{M}_1})=\bra{\phi}(x\otimes \Id_{\tH\otimes H_1})\ket{\phi}=\bra{\phi}( \Id_{\tH\otimes H_1}\otimes x^T)\ket{\phi}.
\end{align*} 
Note that $\ket{\phi}$ is \emph{not} a unit vector, but we will still use bra-ket notation for convenience. Let $\ket{\psi_0}\in H_0\otimes H_0$ be a GNS state for $(M_0,\tau^{M_0})$. All in all, this means that
\begin{equation*}
    \tau^{\infty}(x)=(\bra{\tilde{\psi}}\otimes \bra{\psi_0}+\bra{\phi})(x\otimes \Id_{\tH\otimes (H_0\oplus H_1)})(\kpt\otimes\ket{\psi_0}+\ket{\phi})
\end{equation*}
for all $x\in M\otimes (M_0\oplus M_1)$. Since
\begin{equation*}
    x\mapsto \tau^{\infty}(P)\bra{\psi}(V^*\otimes \overline{V}^*)\left(x\otimes \Id_{\tH\otimes(H_0\otimes H_1)}\right)(V\otimes \overline{V})\kp
\end{equation*}
and 
\begin{equation*}
    x\mapsto (\bra{\tilde{\psi}}\otimes \bra{\psi_0}+\bra{\phi})\left(VV^*xVV^*\otimes \Id_{\tH\otimes (H_0\oplus H_1)}\right)(\kpt\otimes\ket{\psi_0}+\ket{\phi})
\end{equation*}
implement the same positive linear functional on $M\otimes (M_0\oplus M_1)$ and both
\begin{equation*}
    (V\otimes \overline{V})\kp,\left(VV^*\otimes \Id_{\tH\otimes (H_0\oplus H_1)}\right)(\kpt\otimes\ket{\psi_0}+\ket{\phi})\in (\tH\otimes (H_0\oplus H_1))\otimes (\tH\otimes (H_0\oplus H_1)),
\end{equation*}
we know by the uniqueness of the GNS construction that there exists a unitary $U\in B(\tH\otimes(H_0\oplus H_1))$ such that
\begin{equation*}
    \left(\Id_{\tH\otimes (H_0\oplus H_1)}\otimes U\right)(V\otimes \overline{V})\kp=\left(VV^*\otimes \Id_{\tH\otimes (H_0\oplus H_1)}\right)(\kpt\otimes\ket{\psi_0}+\ket{\phi}).
\end{equation*}
We now aim to show that $\left(VV^*\otimes \Id_{\tH\otimes (H_0\oplus H_1)}\right)(\kpt\otimes\ket{\psi_0}+\ket{\phi})$ is close to $\kpt\otimes\ket{\psi_0}$. Note that
\begin{equation*}
    VV^*=(W^*+W_1^*)(W+W_1)=W^*W+W_1^*W_1.
\end{equation*}
Furthermore, we have
\begin{equation*}
    \left(W\otimes \Id_{\tH\otimes (H_0\oplus H_1)}\right)\ket{\phi}=0=\left(W_1\otimes \Id_{\tH\otimes (H_0\oplus H_1)}\right)\kpt\otimes \ket{\psi_0},
\end{equation*}
so 
\begin{align*}
    \left(VV^*\otimes \Id_{\tH\otimes (H_0\oplus H_1)}\right)(\kpt\otimes\ket{\psi_0}+\ket{\phi})=&\left(W^*W\otimes \Id_{\tH\otimes (H_0\oplus H_1)}\right)\kpt\otimes \ket{\psi_0}\\
    &+\left(W_1^*W_1\otimes \Id_{\tH\otimes (H_0\oplus H_1)}\right)\ket{\phi}.
\end{align*}
We then compute
\begin{align*}
    \bra{\phi}\left(W_1^*W_1\otimes \Id_{\tH\otimes (H_0\oplus H_1)}\right)^2\ket{\phi}&=\tau^{\infty}(W_1^*W_1)\\
    &=\tau^{\infty}(P)\tau^N(W_1W_1^*)\\
    &=\tau^{\infty}(P)\tau^N(P-WW^*)\leq \tau^{\infty}(P)\epsilon
\end{align*}
and 
\begin{align*}
    \bra{\tilde{\psi}}\otimes \bra{\psi_0}\left((\Id_{\tH\otimes (H_0\oplus H_1)}-W^*W)\otimes \Id_{\tH\otimes (H_0\oplus H_1)}\right)^2\kpt\otimes \ket{\psi_0}&=\tau^{\infty}((I_{M\otimes M_0}-W^*W)^2)\\
    &\leq \epsilon.
\end{align*}
Consequently, 
\begin{equation*}
    \norm{\left(\Id_{\tH\otimes (H_0\oplus H_1)}\otimes U\right)(V\otimes \overline{V})\kp-\kpt\otimes\ket{\psi_0}}\leq \sqrt{\epsilon}+\sqrt{\tau^{\infty}(P)\epsilon}\leq 3\sqrt{\epsilon},
\end{equation*}
where we use Lemma \ref{lem:ME-strat-dim-est} for the last inequality. Define $V_A=V$ and $V_B=U\overline{V}$, so
\begin{equation*}
    \norm{(V_A\otimes V_B)\kp-\kpt\otimes\ket{\psi_0}}\leq 3\sqrt{\epsilon}.
\end{equation*}

By the introduction of $\kpt$ and $\ket{\psi_0}$,  \Cref{eq:lem-PME-robust-self-testing-vNA-b-input-1} becomes
\begin{equation*}
    \left(\expect{x\sim \nu_A }\sum_a\norm{(\wtd{A}^x_a\otimes \Id_{H_0}\otimes \Id_{\tH\otimes H_0})\kpt\otimes \ket{\psi_0}-(W^* A^x_a W\otimes \Id_{\tH\otimes H_0})\kpt\otimes \ket{\psi_0}}^2\right)^{1/2}\leq \sqrt{\epsilon}.
\end{equation*}
Since 
\begin{equation*}
    \sum_a (W^*A_a^xW)^2\leq 1,
\end{equation*}
we know that
\begin{equation*}
    \left(\sum_a\norm{\left(W^* A^x_a W\otimes \Id_{\tH\otimes (H_0\oplus H_1)}\right)(\kpt\otimes \ket{\psi_0}-(V_A\otimes V_B)\kp)}^2\right)^{1/2}\leq 3\sqrt{\epsilon}.
\end{equation*}
The last step is to find an estimate for
\begin{align*}
    \sum_a\norm{\left((W^* A^x_a WV_A-V_AA_a^x)\otimes V_B\right)\kp}^2&\\
    =\sum_a&\tau^N\Big((W^* A^x_a WV_A-V_AA_a^x)^*(W^* A^x_a WV_A-V_AA_a^x)\Big).
\end{align*}
We compute
\begin{align*}
   \norm{((W^* A^x_a WV_A-V_AA_a^x)\otimes V_B)\kp}^2=&\tau^N\Big(V_A^*W^*A_a^xWW^*A_a^xWV_A+A_a^xV_A^*V_AA_a^x\\
    &-A_a^xV_A^*W^*A_a^xWV_A-V_A^*W^*A_a^xWV_AA_a^x\Big)\\
    =&\tau^N\Big(WW^*A_a^xWW^*A_a^x+(A_a^x)^2\\
    &-A_a^xWW^*A_a^xWW^*-WW^*A_a^xWW^*A_a^x\Big)\\
    =&\tau^N\left((A_a^x)^2-WW^*A_a^xWW^*A_a^x\right)\\
    \leq& \tau^N\left((P-WW^*)(A_a^x)^2\right)\\
    &+\tau^N\left((P-WW^*)A_a^xWW^*A_a^x\right)\\
    \leq& 2\tau^N\left((P-WW^*)(A_a^x)^2\right).
\end{align*}
Consequently, 
\begin{align*}
    \sum_a\norm{(W^* A^x_a WV_A-V_AA_a^x)\otimes V_B)\kp}^2&\leq 2\sum_a\tau^N\left((P-WW^*)(A_a^x)^2\right)\\
    &\leq 2\tau^N(P-WW^*)\leq 2\epsilon.
\end{align*}
By the triangle inequality, we have shown that
\begin{equation*}
    \left(\expect{x\sim \nu_A }\sum_a\norm{\left(\wtd{A}^x_a\otimes \Id_{H_0}\otimes \Id_{\tH\otimes H_0}\right)\kpt\otimes \ket{\psi_0}-(V_AA_a^x\otimes V_B)\kp}^2\right)^{1/2}\leq (4+\sqrt{2})\sqrt{\epsilon}.
\end{equation*}
Analogously, we obtain
\begin{equation*}
    \left(\expect{x\sim \nu_A }\sum_a\norm{\Id_{\tH\otimes H_0}\otimes((\wtd{A}^x_a)^T\otimes \Id_{H_0})\kpt\otimes \ket{\psi_0}-(V_A\otimes V_B(A_a^x)^T)\kp}^2\right)^{1/2}\leq (4+\sqrt{2})\sqrt{\epsilon},
\end{equation*}
proving the lemma.
\end{proof}

\printbibliography
\end{document}